\setlist[itemize]{label=$\cdot$}
\setlist[itemize,1]{label=\textbullet}
\setlist[itemize,2]{label=--}
\setlist[itemize,3]{label=*}
\setlist[itemize,4]{label=$\cdot$}
\setlist[itemize,5]{label=\textbullet}
\setlist[itemize,6]{label=--}
\setlist[itemize,7]{label=*}
\declaretheorem[name=Lemma]{lemma}
\declaretheorem[name=Example,style=remark]{example}
\declaretheorem[name=Remark,style=remark]{remark}
\newtcbox{\highlightbox}[1][gray]{on line,
arc=0pt,outer arc=0pt,colback=#1!10!white,colframe=#1!50!black,
boxsep=0pt,left=1pt,right=1pt,top=.5pt,bottom=.5pt,
boxrule=0pt,bottomrule=.1pt,toprule=.1pt}
\newcommand\namedset[1]{\ensuremath{\mathrm{#1}}}
\newcommand{\ttuple}[1]{\ensuremath{\langle#1\rangle}}
\newcommand{\auxiliary}[1]{\textrm{#1}}
\newcommand{\esequence}[1]{\underline{\ensuremath{\mathstrut #1}}}
\newcommand{\keyword}[1]{\text{\upshape \textsf{#1}}}
\newcommand{\rascaltype}[1]{\keyword{#1}}
\newcommand{\typing}[2]{#1 : #2}
\newcommand{\subtyping}[2]{#1 <: #2}
\newcommand{\notsubtyping}[2]{#1 \not<: #2}
\newcommand{\reconstruct}[3]{\textbf{recons } #1 \textbf{ using } #2
  \textbf{ to } #3}
\newcommand{\match}[4]{#3 |- #1 \overset{?}{\coloneqq} #2 \xRightarrow[\textrm{match}]{} #4}
\newcommand{\matchall}[6]{#3 |- #1 \overset{?}{\coloneqq} #2 ~|~ #4 \xRightarrow[\textrm{match$\star$}]{#5} #6}
\newcommand{\evalexpr}[4]{#1; #2 \xRightarrow[\mathrm{expr}]{} #3; #4}
\newcommand{\evalexprfuel}[5]{#1; #2 \xRightarrow[\mathrm{expr}]{}^{#5} #3; #4}
\newcommand{\evalexprstar}[4]{#1; #2 \xRightarrow[\mathrm{expr{\star}}]{} #3; #4}
\newcommand{\evalexprstarfuel}[5]{#1; #2 \xRightarrow[\mathrm{expr{\star}}]{}^{#5} #3; #4}
\newcommand{\evaleach}[5]{#1; #2; #3 \xRightarrow[\mathrm{each}]{} #4; #5}
\newcommand{\evaleachfuel}[6]{#1; #2; #3 \xRightarrow[\mathrm{each}]{}^{#6} #4; #5}
\newcommand{\evalgenexpr}[4]{#1; #2 \xRightarrow[\mathrm{gexpr}]{} #3; #4}
\newcommand{\evalgenexprfuel}[5]{#1; #2 \xRightarrow[\mathrm{gexpr}]{}^{#5} #3; #4}
\newcommand{\evalcase}[5]{#1; #2; #3 \xRightarrow[\mathrm{case}]{} #4 ; #5}
\newcommand{\evalcasefuel}[6]{#1; #2; #3 \xRightarrow[\mathrm{case}]{}^{#6} #4 ; #5}
\newcommand{\evalcases}[5]{#1; #2; #3 \xRightarrow[\mathrm{cases}]{} #4 ; #5}
\newcommand{\evalcasesfuel}[6]{#1; #2; #3 \xRightarrow[\mathrm{cases}]{}^{#6} #4 ; #5}
\newcommand{\evalvisit}[6]{#2; #3; #4 \xRightarrow[\mathrm{visit}]{#1} #5 ; #6}
\newcommand{\evalvisitfuel}[7]{#2; #3; #4 \xRightarrow[\mathrm{visit}]{#1}^{#7} #5 ; #6}
\newcommand{\evaltdvisit}[6]{#1; #2; #3 \xRightarrow[\mathrm{td-visit}]{#4} #5 ; #6}
\newcommand{\evaltdvisitfuel}[7]{#1; #2; #3 \xRightarrow[\mathrm{td-visit}]{#4}^{#7} #5 ; #6}
\newcommand{\evaltdvisitstar}[6]{#1; #2; #3 \xRightarrow[\mathrm{td-visit{\star}}]{#4} #5 ; #6}
\newcommand{\evaltdvisitstarfuel}[7]{#1; #2; #3 \xRightarrow[\mathrm{td-visit{\star}}]{#4}^{#7} #5 ; #6}
\newcommand{\evalbuvisit}[6]{#1; #2; #3 \xRightarrow[\mathrm{bu-visit}]{#4} #5 ; #6}
\newcommand{\evalbuvisitfuel}[7]{#1; #2; #3 \xRightarrow[\mathrm{bu-visit}]{#4}^{#7} #5 ; #6}
\newcommand{\evalbuvisitstar}[6]{#1; #2; #3 \xRightarrow[\mathrm{bu-visit{\star}}]{#4} #5 ; #6}
\newcommand{\evalbuvisitstarfuel}[7]{#1; #2; #3
  \xRightarrow[\mathrm{bu-visit{\star}}]{#4}^{#7} #5 ; #6}
\newcommand{\sem}[1]{\llbracket#1\rrbracket}
\begin{document}

\title{The Formal Semantics of Rascal Light}
\author{Ahmad Salim Al-Sibahi}
\maketitle

\section{Introduction}
\label{sec:introduction}
Rascal\,\citep{DBLP:conf/scam/KlintSV09,Klint2011} is a high-level transformation language that aims to simplify software
language engineering tasks like defining program syntax, analyzing and transforming programs,
and performing code generation. The language
provides several features including built-in collections
(lists, sets, maps), algebraic data-types, powerful pattern matching operations
with backtracking, and
high-level traversals supporting multiple
strategies.

Interaction between different language features can be difficult to comprehend,
since most features are semantically rich.
My goal is to provide a well-defined formal semantics for a large subset of Rascal called
Rascal Light---in the spirit of (core) Caml Light\,\citep{WEB:Leroy97},
Clight\,\citep{DBLP:journals/jar/BlazyL09}, and Middleweight Java\,\citep{PittsAM:impccj}---suitable
for developing formal techniques, e.g., type systems and static analyses.

\subsection*{Scope}
Rascal Light aims to model a realistic set of high-level transformation
language features, by capturing a large subset of the Rascal operational
language. Rascal Light targets being practically usable, i.e., it should be possible to translate many realistic pure Rascal programs to this subset by an
expert programmer without losing the high level of abstraction.
The following Rascal features are captured in Rascal Light:%
\begin{itemize}
\item Fundamental definitions including algebraic data-types, functions with type parameters, and global variables.
\item Basic expressions, including variable assignment, definition and access,
  exceptions, collections (including sets, lists and maps), $\keyword{if}$-expressions,
  $\keyword{switch}$-statements, $\keyword{for}$-loops and $\keyword{while}$-loops including control flow operators.
\item Powerful pattern matching operations including type patterns, non-linear
  pattern matching, (sub)collection patterns, and descendant pattern
  matching.
\item Backtracking using the $\keyword{fail}$ operator, including roll-back of
  state.
\item Traversals using generic $\keyword{visit}$-expressions, supporting the
  different kinds of available strategies: $\keyword{bottom-up}$($\keyword{-break}$),
  $\keyword{top-down}$($\keyword{-break}$), $\keyword{innermost}$, and $\keyword{outermost}$.
\item Fixed-point iteration using the $\keyword{solve}$-loop.
\end{itemize}

The following Rascal features are considered out of scope:%
\begin{itemize}
\item Concrete syntax declaration and literals, string interpolation, regular
  expression matching, date literals and path literals
\item The standard library including Input/Output and the Java foreign function interface
\item The module system, include modular extension of language elements such as datatypes and functions
\item Advanced type system features, like parametric polymorphism, the numerical hierarchy and type inference
\end{itemize}
In Rascal, Boolean expressions can contain pattern matching subexpressions and
backtracking is affected by the various Boolean operators (conjunction,
disjunction, implication). In Rascal Light, backtracking is more restricted,
which I believe heavily simplifies the semantics while losing little
expressiveness in practice; most programs could be rewritten to support the
required backtracking at the cost of increased verbosity.

\subsection*{Method}
I have used the language
documentation,\footnote{\url{http://tutor.rascal-mpl.org/Rascal/Rascal.html}} and the open source implementation of
Rascal,\footnote{\url{https://github.com/usethesource/rascal}} to derive the
formalism.
The syntax is primarily based on the \textmu{}Rascal syntax description\,\citep{WEBCITE:muRascal17}, but altered
to focus on the high-level features.
The semantics is based on the description of individual language features in the
documentation. In case of under-specification or ambiguity, I used small test
programs to check what the expected behavior is.
I thank the Rascal developers for our personal correspondence which further
clarified ambiguity and limitations of the semantics compared to Rascal.

To discover possible issues, the semantics has been implemented as a prototype
and tested against a series of Rascal programs.
To strengthen the correctness claims, I have proven a series of
theorems of interest in \Cref{sec:theorems}, since as
\cite{DBLP:books/daglib/0069232} states:

\begin{quote}
  \emph{The robustness of the semantics depends upon theorems}
  \begin{flushright}
    --- The Definition of Standard ML
  \end{flushright}
\end{quote}

\subsection*{Notation}
A sequence of $e$s is represented by either $\esequence{e}$ or more
explicitly $e_1, \dots, e_{\mathrm{n}}$, the empty sequence is denoted by $\varepsilon$ and the
concatenation of sequences $\esequence{e_1}$ and $\esequence{e_2}$ is
denoted $\esequence{e_1}, \esequence{e_2}$. We overload notation in an intuitive
manner for operations on sequences, so given a sequence $\esequence{v}$, $v_i$
denotes the $i$th element in the sequence, and $\esequence{v : t}$ denotes the
sequence $v_1 : t_1, \dots, v_{\mathrm{n}} : t_{\mathrm{n}}$.

\subsection*{Overview}
The abstract syntax of Rascal Light is presented in \Cref{sec:syntax},
and the dynamic (operational) semantics is presented in \Cref{sec:semantics}.
Finally, \Cref{sec:theorems} presents relevant theorems regarding the soundness of the
semantics.

\section{Abstract Syntax}
\label{sec:syntax}
Rascal Light programs are organized into modules that consist of definitions of
global variables, functions and algebraic data types. I will in the rest of this
report assume that modules
are well-formed: top-level definitions and function parameters have unique
names, all function calls,
constructors and datatypes used have corresponding definitions, and all
variables are well-scoped with no shadowing. To maintain a clean presentation, I will not write the definition
environment explicitly, but mention the necessary definitions as premises when required.

Rascal Light has three kinds of definitions: global variables, functions and
algebraic data-types. Global variables are typed and are initialized with a
target expression at module loading time. Functions have a return type, a unique
name, a list of typed uniquely named parameters, and have an expression as a
body. Algebraic data-types have unique names and declare a set of possible alternative constructors, each taking a list of typed fields as arguments.

{\small\begin{alignat*}{3}
  d &\Coloneqq \keyword{global} \; t \; y = e & (\textit{Global variables}) \\
  &\quad \mid \keyword{fun} \; t \; f(\esequence{t \; x}) = e & (\textit{Function Definition}) \\
  &\quad \mid \keyword{data} \; \mathit{at} = k_1(\esequence{t \; x}) | \dots |
    k_{\mathrm{n}}(\esequence{t \; x})  &\quad (\textit{Algebraic Data Type Definition})
\end{alignat*}}

\noindent{}The operational part of Rascal consists of syntactically distinct categories of
statements and expressions; I have chosen to collapse the two
categories in Rascal Light since statements also
have result values that can be used inside expressions.
Most constructs are standard imperative ones, such as blocks,
assignment, branching and loops (including control operations);
I have chosen to use  $\keyword{local}$-$\keyword{in}$ for representing
blocks in Rascal Light instead of curly braces like in Rascal, to
distinguish them from set literal expressions; blocks contain locally declared variables and a sequence
of expressions to evaluate.

Notable Rascal-specific constructs include a generalized
version of the $\keyword{switch}$-expression that supports rich pattern matching over both basic values and algebraic data types, the $\keyword{visit}$-expression which
allows traversing data types using various
strategies (\Cref{exmp:expsimpl}), and the $\keyword{solve}$-expression which continuously evaluates
its body expression until target variables reach a fixed-point in assigned
values (\Cref{exmp:solve}). The $\keyword{fail}$ control operator allows backtracking
inside $\keyword{switch}$ and $\keyword{visit}$ statements to try other possible
matches (\Cref{exmp:fail}).

\vspace{-1.5em}
{\small\allowdisplaybreaks\begin{alignat*}{3}
  e &\Coloneqq vb & (\textit{Basic Values }, vb \in \left\{1,\mathtt{``foo"}, 3.14, \ldots \right\}) \\
  &\quad \mid x & (\textit{Variables}, x, y \in \namedset{Var}) \\
  &\quad \mid \ominus \; e & (\textit{Unary Operations}, \ominus \in \left\{-, \ldots \right\}) \\
  &\quad \mid e_1 \oplus e_2 & (\textit{Binary Operations}, \oplus \in \left\{+, -, \times, \ldots \right\}) \\
  &\quad \mid k (\esequence{e}) & \quad (\textit{Constructor Applications}, k \in \namedset{Constructor}) \\
  &\quad \mid [\esequence{e}] & (\textit{List Expressions}) \\
  &\quad \mid \{ \esequence{e} \} & (\textit{Set Expressions}) \\
  &\quad \mid ( \esequence{e : e'} ) & (\textit{Map Expressions}) \\
  &\quad \mid e_1 [e_2] & (\textit{Map Lookup}) \\
  &\quad \mid e_1 [e_2 = e_3] & (\textit{Map Update}) \\
  &\quad \mid f ( \esequence{e} ) & (\textit{Function Calls}, f \in \namedset{Function}) \\
  &\quad \mid \keyword{return} \; e & (\textit{Return Expressions}) \\
  &\quad \mid x = e & (\textit{Variable Update Assignments}) \\
  &\quad \mid \keyword{if} \; e \; \keyword{then} \; e_1 \; \keyword{else} \; e_2 & (\textit{Conditionals})  \\
  &\quad \mid \keyword{switch} \; e \; \keyword{do} \; \esequence{\mathit{cs}} & (\textit{Case Matching}) \\
  &\quad \mid \mathit{st} \; \keyword{visit} \; e \; \keyword{do} \; \esequence{\mathit{cs}}  & (\textit{Deep Traversal})  \\
  &\quad \mid \keyword{break} \mid \keyword{continue} \mid \keyword{fail} & (\textit{Control Operations}) \\
  &\quad \mid \keyword{local} \; \esequence{t \; x} \; \keyword{in} \;
  \esequence{e} \; \keyword{end} & (\textit{Blocks}) \\
  &\quad \mid \keyword{for} \; g \; e & (\textit{Iteration}) \\
  &\quad \mid \keyword{while} \; e \; e & (\textit{While Expressions}) \\
  &\quad \mid \keyword{solve} \; \esequence{x} \; e & (\textit{Fixedpoint Expressions}) \\
  &\quad \mid \keyword{throw} \; e & (\textit{Exception Throwing}) \\
  &\quad \mid \keyword{try} \; e_1 \; \keyword{catch} \; x => e_2 & (\textit{Exception Catching}) \\
  &\quad \mid \keyword{try} \; e_1 \; \keyword{finally} \; e_2 &
  (\textit{Exception Finalization}) \\
  \mathit{cs} &\Coloneqq \keyword{case} \; p => e &\quad (\textit{Case})
\end{alignat*}}
\begin{example}[Expression simplifier]\label{exmp:expsimpl} An expression simplifier can use visit to
  completely simplify all sub-expressions no matter where they are in the input expression:
\begin{lstlisting}[language=Rascal]
  data Expr = intlit(int v) | plus(Expr lop, Expr rop) | ...;

  Expr simplify(Expr e) = 
    bottom-up visit(e) {
      case plus(intlit(0), y) => y
      case plus(x, intlit(0)) => x
    };
\end{lstlisting}
\end{example}

\begin{example}[Lattice fixed-point]\label{exmp:solve} The Kleene fixed-point of a continuous
  function in a lattice for a domain $Val$ with functions for the bottom element and
  least-upper bound can be computed using solve:
\begin{lstlisting}[language=Rascal]
  Val fix(Fun f) = {
     Val v = bottom();
     solve(v) {
        v = lub(v, apply(f, v))
     };
  }
\end{lstlisting}
\end{example}

\begin{example}[Knapsack problem]\label{exmp:fail} The knapsack problem concerns
  finding a subset of items with greatest value under a specified maximum
  weight. The function below uses backtracking to (slowly) find the optimal solution.
  \begin{lstlisting}[language=Rascal]
    set[Item] slowknapsack(set[Item] items, int maxWeight) = {
      set[Item] res = {};
      solve(res) {
        switch(items) {
          case {*xs, *ys}:
          if (sumweights(xs) > maxWeight
              || sumvalues(xs) < sumvalues(res)) fail;
          else res = xs;
        };
        res
      };
    }
  \end{lstlisting}
\end{example}

\noindent{}There are two kinds of \textit{generator expressions}:
\textit{enumerating assignment} where the range variable is
assigned to each element in the result of a collection-producing expression, and
\textit{matching assignment} that produces all possible assignments that match
target patterns (defined later).
{\small\begin{alignat*}{3}
  g &\Coloneqq x <- e &\quad (\textit{Enumerating Assignment})  \\
  &\quad \mid  p \coloneqq e &\quad (\textit{Matching Assignment})
\end{alignat*}}

\noindent{}The $\keyword{visit}$-expression supports various strategies that determine the
order a particular value is traversed w.r.t.\ its contained values. The
$\keyword{top-down}$ strategy
traverses the value itself before contained values, and conversely
$\keyword{bottom-up}$ traverses contained values before itself. The $\keyword{break}$
versions stop the traversal at first successful match, and the
$\keyword{outermost}$ and $\keyword{innermost}$ respectively apply the $\keyword{top-down}$
and $\keyword{bottom-up}$ until a fixed-point is reached.
{\small\begin{alignat*}{3}
  \mathit{st} &\Coloneqq \keyword{top-down} & (\textit{Preorder Traversal}) \\
  &\quad \mid \keyword{bottom-up} & (\textit{Postorder Traversal}) \\
  &\quad \mid \keyword{top-down-break} & (\textit{First-match Preorder Traversal}) \\
  &\quad \mid \keyword{bottom-up-break} &\quad (\textit{First-match Postorder Traversal}) \\
  &\quad \mid \keyword{outermost} & (\textit{Fixedpoint Preorder Traversal}) \\
  &\quad \mid \keyword{innermost} & (\textit{Fixedpoint Postorder Traversal}) 
\end{alignat*}}

\noindent{}Like Rascal, Rascal Light has a rich pattern language that not only includes matching on
basic values and constructors, but also powerful matching inside collections and
descendant patterns that allow matching arbitrarily deeply contained values.

{\small\begin{alignat*}{3}
  p &\Coloneqq vb &(\textit{Basic Value Patterns}) \\
  &\quad \mid x       &(\textit{Variable Patterns}) \\
  &\quad \mid k(\esequence{p}) &\quad (\textit{Deconstructor Patterns}) \\
  &\quad \mid t \; x: p  &(\textit{Typed Labelled Patterns}) \\
  &\quad \mid [\esequence{{{\star}p}}] &(\textit{List Patterns}) \\
  &\quad \mid \{\esequence{{{\star}p}}\} &(\textit{Set Patterns}) \\
  &\quad \mid {!p} &(\textit{Negation pattern}) \\
  &\quad \mid /p & (\textit{Descendant Patterns})
\end{alignat*}}

\noindent{}Patterns inside collections can be either ordinary patterns or 
\emph{star patterns} that match a subcollection with arbitrary number of elements.
{\small\begin{alignat*}{3}
{{\star}p} &\Coloneqq p & (\textit{Ordinary Pattern}) \\
  &\quad \mid {\star}x &\quad (\textit{Star Pattern})
\end{alignat*}}

\noindent{}Rascal Light programs expressions evaluate to values, which either are basic values,
constructor values, collections or the undefined value ($\blacksquare$).
{\small\begin{alignat*}{3}
  v &\Coloneqq  vb &(\textit{Basic Values}) \\
  &\quad \mid k (\esequence{v}) & \quad (\textit{Constructor Values}, k \in \namedset{Constructor}) \\
  &\quad \mid [\esequence{v}] & (\textit{List Values}) \\
  &\quad \mid \{ \esequence{v} \} & (\textit{Set Values}) \\
  &\quad \mid ( \esequence{v : v'} ) & (\textit{Map Values}) \\
  &\quad \mid \blacksquare & (\textit{Undefined Value})
\end{alignat*}}

\section{Semantics}
\label{sec:semantics}
I present a formal development of the dynamic aspects of Rascal Light, using a
natural semantics specification. Natural (big-step) semantics\,\citep{DBLP:conf/stacs/Kahn87} is particularly suitable for
Rascal, because it closely mimics semantics of an interpreter and the high-level features---exceptions, backtracking,
traversals---introduce a rich control-flow that depends not only on the
structure of the program but also on the provided input.
There is no concurrency or interleaving in Rascal, and so there is no need for a
more fine-grained operational semantics like (small-step) structural operational semantics (SOS)\,\citep{DBLP:journals/jlp/Plotkin04a}.

\subsection*{Value Typing}
\label{sub:syntax}
Rascal (Light) is strongly typed and so all values are typed. The types are fairly straight-forward in the sense that
most values have a canonical type and there is a subtyping hierarchy (explained
shortly) with a
bottom type $\rascaltype{void}$ and a top type $\rascaltype{value}$.
{\small\begin{alignat*}{3}
  t &\Coloneqq  \mathit{tb} &(\textit{Base Types, $\mathit{tb} \in \left\{ \namedset{Int}, \namedset{Rational}, \namedset{String}, \ldots \right\}$}) \\
  &\quad \mid  \mathit{at} &\quad (\textit{Algebraic Data Types, $\mathit{at} \in \namedset{DataType}$}) \\
  &\quad \mid  \rascaltype{set}\ttuple{t} & (\textit{Sets}) \\
  &\quad \mid  \rascaltype{list}\ttuple{t} & (\textit{Lists}) \\
  &\quad \mid  \rascaltype{map}\ttuple{t_1,t_2} & (\textit{Maps}) \\
  &\quad \mid  \rascaltype{void} & (\textit{Bottom Type}) \\
  &\quad \mid  \rascaltype{value} & (\textit{Top Type})
\end{alignat*}}

\noindent{}I provide a typing judgment for values
of form \highlightbox{$\typing{v}{t}$}, which states that value $v$ has type $t$.
Basic values are typed by their defining basic type, the undefined value
($\blacksquare$) has the bottom type $\rascaltype{void}$, and constructor values are
typed by their corresponding data type definition assuming the contained values
are well-typed. The type of collections is determined by the contained values,
and so a least upper bound operator is defined in types---following the subtype
ordering---which is used to infer a precise type
for the value parameters.

{\small
\[
  \inference[\textsc{T-Basic}]{vb \in \sem{\mathit{tb}}}{\typing{vb}{\mathit{tb}}}
  \quad
  \inference[\textsc{T-Bot}]{}{\typing{\blacksquare}{\rascaltype{void}}}
\]

\[
  \inference[\textsc{T-Constructor}]{\keyword{data} \; \mathit{at} = \dots ~|~
    k(\esequence{t}) ~|~ \dots &
    \esequence{\typing{v}{t'}} & \esequence{\subtyping{t'}{t}}}{\typing{k(v_1, \dots, v_{\mathrm{n}})}{\mathit{at}}}
\]

\[
  \inference[\textsc{T-Set}]{\esequence{\typing{v}{t}}}{\typing{\{\esequence{v}\}}{\rascaltype{set}\ttuple{\bigsqcup\esequence{t}}}}
  \quad
  \inference[\textsc{T-List}]{\esequence{\typing{v}{t}}}{\typing{[\esequence{v}]}{\rascaltype{list}\ttuple{\bigsqcup\esequence{t}}}}
  \quad
  \inference[\textsc{T-Map}]{\esequence{\typing{v}{t}} & \esequence{\typing{v'}{t'}}}{\typing{(\esequence{v : v'})}{\rascaltype{map}\ttuple{\bigsqcup\esequence{t},\bigsqcup\esequence{t'}}}}
\]
}

\noindent{}The subtyping relation has form
\highlightbox{$\subtyping{t}{t'}$}, stating t is a subtype of $t'$. We let
\highlightbox{$\notsubtyping{t}{t'}$} denote the negated form where none of the
given cases below matches.
subtyping is reflexive, so every type is a sub-type of itself;
$\rascaltype{void}$ and $\rascaltype{value}$ act as bottom type and top type respectively.
{\small
\[
  \inference[\textsc{ST-Refl}]{}{\subtyping{t}{t}}
  \quad
  \inference[\textsc{ST-Void}]{}{\subtyping{\rascaltype{void}}{t}}
  \quad
  \inference[\textsc{ST-Value}]{}{\subtyping{t}{\rascaltype{value}}}
\]}

\noindent{}Collections are covariant in their type parameters, which is safe since all
values are immutable.

{\small
  \[\def\arraystretch{3}
    \begin{tabular}{cc}
  \inference[\textsc{ST-List}]{\subtyping{t}{t'}}{\subtyping{\rascaltype{list}\ttuple{t}}{\rascaltype{list}\ttuple{t'}}}
      &
  \inference[\textsc{ST-Set}]{\subtyping{t}{t'}}{\subtyping{\rascaltype{set}\ttuple{t}}{\rascaltype{set}\ttuple{t'}}}
  \\
         \multicolumn{2}{c}{\inference[\textsc{ST-Map}]{\subtyping{t_{\mathrm{key}}}{t_{\mathrm{key}}'} &
    \subtyping{t_{\mathrm{val}}}{t_{\mathrm{val}}'}}{\subtyping{\mathrm{map}\ttuple{t_{\mathrm{key}},
      t_{\mathrm{val}}}}{\rascaltype{map}\ttuple{t_{\mathrm{key}}', t_{\mathrm{val}}'}}}}
\end{tabular}\]}

\noindent{}The least upper bound on types is defined as follows:
{\small\begin{gather*}
  \bigsqcup \varepsilon = \rascaltype{void} \qquad \bigsqcup t, \esequence{t'} = t
  \sqcup \bigsqcup\esequence{t'} \\
  t_1 \sqcup t_2 =
  \begin{cases}
    t_1 & \textbf{if } t_2 = \rascaltype{void} \vee t_1 = t_2 \\
    t_2 & \textbf{if } t_1 = \rascaltype{void} \\
    \rascaltype{list}\ttuple{t_1' \sqcup t_2'} & \textbf{if } t_1 =
    \rascaltype{list}\ttuple{t_1'} \wedge  t_2 = \rascaltype{list}\ttuple{t_2'} \\
        \rascaltype{set}\ttuple{t_1' \sqcup t_2'} & \textbf{if } t_1 =
    \rascaltype{set}\ttuple{t_1'} \wedge  t_2 = \rascaltype{set}\ttuple{t_2'} \\
      \rascaltype{map}\ttuple{t_1' \sqcup t_2', t_1'' \sqcup t_2''} & \textbf{if } t_1 =
    \rascaltype{map}\ttuple{t_1',t_1''} \wedge  t_2 = \rascaltype{map}\ttuple{t_2',t_2''} \\
    \rascaltype{value} & \textbf{otherwise}
  \end{cases}
\end{gather*}}
\subsection*{Expression Evaluation}
The main judgment for Rascal Light expressions has the form
\highlightbox{$\evalexpr{e}{\sigma}{\mathit{vres}}{\sigma'}$}, where the expression $e$ is
evaluated in an initial store $\sigma \in \namedset{Var} -> \namedset{Val}$---mapping variables to values---returning a result
$\mathit{vres}$ and updated store $\sigma'$ as a side-effect.
The result $\mathit{vres}$ is either an ordinary $\keyword{success} \; v$ signifying
successful execution or an exceptional result $\mathit{exres}$.
{\small\begin{alignat*}{3}
  \mathit{vres} &\Coloneqq \keyword{success} \; v \mid \mathit{exres}
\end{alignat*}}

\noindent{}An exceptional result is either a control operation ($\keyword{break}$,
$\keyword{continue}$, $\keyword{fail}$), an $\keyword{error}$ that happened
during execution, a thrown exception $\keyword{throw} \; v$ or an early return value
$\keyword{return} \; v$. The difference between $\keyword{success} \; v$ and
$\keyword{return} \; v$ is that the latter should propagate directly from
sub-expressions to the surrounding function call boundary, while the value
in the former can be used further in intermediate computations (\Cref{exmp:earlyreturn}).
{\small\begin{alignat*}{3}
  \mathit{exres} &\Coloneqq\keyword{return} \; v \mid
  \keyword{throw} \; v \mid \keyword{break} \mid \keyword{continue} \mid
  \keyword{fail} \mid \keyword{error}
\end{alignat*}}

\begin{example}[Early Return]\label{exmp:earlyreturn} The function that calculates producs, uses the
  early return functionality to short-circuit the rest of the calculation when
  a factor in the provided list is zero. If this branch is hit during execution,
  evaluating the expression produces $\keyword{return} \; 0$ as result, which is
  then propagated directly to function call boundary, skipping the rest of the
  loop and sequenced expressions.
  \begin{lstlisting}[language=Rascal]
    int prod(list[int] xs) {
      int res = 1;
      for (x <- xs) {
        if (x == 0) return 0;
        else res *= x;
      };
      res
    }
  \end{lstlisting}
\end{example}

\noindent{}Basic values evaluate to their semantic equivalent result values without any side effect (\textsc{E-Val}).
{\small\[
  \inference[\textsc{E-Val}]{}{\evalexpr{\mathit{vb}}{\sigma}{\keyword{success} \; \mathit{vb}}{\sigma}}
\]}

\noindent{}A variable evaluates to the value it is assigned to in the store if available
(\textsc{E-Var-Sucs}), and otherwise result in an error (\textsc{E-Var-Err}).%
{\small\[
  \inference[\textsc{E-Var-Sucs}]{x \in \textrm{dom }
    \sigma}{\evalexpr{x}{\sigma}{\keyword{success} \; \sigma(x)}{\sigma}}
  \quad
  \inference[\textsc{E-Var-Err}]{x \notin \textrm{dom }
    \sigma}{\evalexpr{x}{\sigma}{\keyword{error}}{\sigma}}
\]}

\noindent{}Unary expressions evaluate their operands, applying possible
side-effects; if successful the corresponding semantic unary operator $\sem{\ominus}$ is applied on the result value
(\textsc{E-Un-Sucs}), and otherwise it propagates the exceptional result
(\textsc{E-Un-Exc}).
{\small
\[
  \inference[\textsc{E-Un-Sucs}]{\evalexpr{e}{\sigma}{\keyword{success} \; v}{\sigma'}}{\evalexpr{\ominus \; e}{\sigma}{\sem{\ominus}(v)}{\sigma'}}
  \quad
  \inference[\textsc{E-Un-Exc}]{\evalexpr{e}{\sigma}{\mathit{exres}}{\sigma'}}{\evalexpr{\ominus \; e}{\sigma}{\mathit{exres}}{\sigma'}}
\]}
\begin{example}[Unary operator semantics]
  \textrm{$\sem{-}(3)$ will evaluate to $\keyword{success} \; {-3}$, while $\sem{-}(\{\})$
  will evaluate to $\keyword{error}$}.
\end{example}

\noindent{}Evaluating binary expressions is similar to unary
expressions, requiring both operands to evaluate successfully
(\textsc{E-Bin-Sucs}) to apply the corresponding semantic
binary operator $\sem{\oplus}$; otherwise the exceptional results of
the operands are propagated in order from left (\textsc{E-Bin-Exc1}) to right (\textsc{E-Bin-Exc2}).%
{\small
\[
  \inference[\textsc{E-Bin-Sucs}]{\evalexpr{e_1}{\sigma}{\keyword{success} \; v_1}{\sigma''} &
    \evalexpr{e_2}{\sigma''}{\keyword{success} \;
      v_2}{\sigma'}}{\evalexpr{e_1 \oplus e_2}{\sigma}{\sem{\oplus}(v_1,v_2)}{\sigma'}}
\]

\[
  \inference[\textsc{E-Bin-Exc1}]{\evalexpr{e_1}{\sigma}{\mathit{exres}_1}{\sigma''}}{\evalexpr{e_1 \oplus e_2}{\sigma}{\mathit{exres}_1}{\sigma'}}
\]

\[
  \inference[\textsc{E-Bin-Exc2}]{\evalexpr{e_1}{\sigma}{\keyword{success} \; v_1}{\sigma''} &
    \evalexpr{e_2}{\sigma''}{\mathit{exres}_2}{\sigma'}}{\evalexpr{e_1 \oplus e_2}{\sigma}{\mathit{exres}_2}{\sigma'}}
\]}

\noindent{}Constructor expressions evaluate their arguments first, and if they all
successfully evaluate to values, then check whether the types of values match
those expected in the declaration. If the result values have the right types and
are not $\blacksquare$, a
constructor value is constructed (\textsc{E-Cons-Sucs}), and otherwise a (type)
error is produced (\textsc{E-Cons-Err}). In case any of the arguments
has an exceptional result the evaluation of the rest of the arguments halts and
the exceptional result is propagated (\textsc{E-Cons-Exc}).%
{\small\[
  \inference[\textsc{E-Cons-Sucs}]{\keyword{data} \; \mathit{at} = \dots ~|~
    k(\esequence{t}) ~|~ \dots  &
    \evalexprstar{\esequence{e}}{\sigma}{\keyword{success} \;
      \esequence{v}}{\sigma'} \\
    \esequence{\typing{v}{t'}} & \esequence{v \neq \blacksquare} & \esequence{\subtyping{t'}{t}}}{\evalexpr{k(\esequence{e})}{\sigma}{\keyword{success}
      \; k(\esequence{v})}{\sigma'}}
\]

\[
  \inference[\textsc{E-Cons-Err}]{\keyword{data} \; \mathit{at} = \dots ~|~
    k(\esequence{t}) ~|~ \dots  &
    \evalexprstar{\esequence{e}}{\sigma}{\keyword{success} \;
      \esequence{v}}{\sigma'} \\
    \esequence{\typing{v}{t'}} & \exists i. v_i = \blacksquare \vee \notsubtyping{t'_i}{t_i} }{\evalexpr{k(\esequence{e})}{\sigma}{\keyword{error}}{\sigma'}}
\]

\[
  \inference[\textsc{E-Cons-Exc}]{\evalexprstar{\esequence{e}}{\sigma}{\mathit{exres}}{\sigma'}}{\evalexpr{k(\esequence{e})}{\sigma}{\mathit{exres}}{\sigma'}}
\]}

\noindent{}Evaluating list expressions also requires evaluating all subexpressions to a
series of values; because of sequencing and necessity of early propagation of exceptional
results, evaluation of series of subexpressions is done using a mutually
recursive sequence evaluation judgment (see page~\pageref{par:exprseq}). If the evaluation is successful then a list value is
constructed (\textsc{E-List-Sucs}), unless any value is undefined ($\blacksquare$) in which
case we produce an error (\textsc{E-List-Err}), and otherwise the exceptional result is
propagated (\textsc{E-List-Exc}).

{\small\[
  \inference[\textsc{E-List-Sucs}]{\evalexprstar{\esequence{e}}{\sigma}{\keyword{success} \;
      \esequence{v}}{\sigma'} & \esequence{v \neq \blacksquare}}{\evalexpr{[\esequence{e}]}{\sigma}{\keyword{success}
      \; [\esequence{v}]}{\sigma'}}
\]
\[
  \inference[\textsc{E-List-Err}]{\evalexprstar{\esequence{e}}{\sigma}{\keyword{success} \;
      \esequence{v}}{\sigma'} & \exists i. v_i = \blacksquare}{\evalexpr{[\esequence{e}]}{\sigma}{\keyword{error}}{\sigma'}}
  \quad
  \inference[\textsc{E-List-Exc}]{\evalexprstar{\esequence{e}}{\sigma}{\mathit{exres}}{\sigma'}}{\evalexpr{[\esequence{e}]}{\sigma}{\mathit{exres}}{\sigma'}}
\]}

\noindent{}Set expression evaluation mirror the one for lists, except that values are
constructed using a set constructor (\textsc{E-Set-Sucs}), which may reorder
values and ensures that there are no duplicates. If any contained value was undefined
($\blacksquare$) then an error is produced instead (\textsc{E-Set-Err}), and exceptional results are
propagated (\textsc{E-Set-Exc}).

{\small\[
  \inference[\textsc{E-Set-Sucs}]{\evalexprstar{\esequence{e}}{\sigma}{\keyword{success} \;
      \esequence{v}}{\sigma'} & \esequence{v \neq \blacksquare}}{\evalexpr{\{\esequence{e}\}}{\sigma}{\keyword{success}
      \; \{\esequence{v}\}}{\sigma'}}
\]
\[
  \inference[\textsc{E-Set-Err}]{\evalexprstar{\esequence{e}}{\sigma}{\keyword{success} \;
      \esequence{v}}{\sigma'} & \exists i. v_i = \blacksquare}{\evalexpr{\{\esequence{e}\}}{\sigma}{\keyword{error}}{\sigma'}}
\quad
  \inference[\textsc{E-Set-Exc}]{\evalexprstar{\esequence{e}}{\sigma}{\mathit{exres}}{\sigma'}}{\evalexpr{\{\esequence{e}\}}{\sigma}{\mathit{exres}}{\sigma'}}
\]}

\noindent{}Map expressions evaluate their keys and values in the declaration sequence, and
if successful construct a map (\textsc{E-Map-Sucs}). Similarly to other
collection expressions, errors are
produced if any value is undefined (\textsc{E-Map-Err}) and exceptional results are propagated (\textsc{E-Map-Exc}).%
\[
  \inference[\textsc{E-Map-Sucs}]{\evalexprstar{\esequence{e, e'}}{\sigma}{\keyword{success} \;
      \esequence{v, v'}}{\sigma'} & \esequence{v \neq \blacksquare}
    & \esequence{v' \neq \blacksquare}}{\evalexpr{(\esequence{e : e'})}{\sigma}{\keyword{success}
      \; (\esequence{v : v'})}{\sigma'}}
\]
\[
  \inference[\textsc{E-Map-Err}]{\evalexprstar{\esequence{e,e'}}{\sigma}{\keyword{success} \;
      \esequence{v, v'}}{\sigma'} & \exists. v_i = \blacksquare \vee v_i' = \blacksquare}{\evalexpr{(\esequence{e : e'})}{\sigma}{\keyword{error}}{\sigma'}}
\]
\[
  \inference[\textsc{E-Map-Exc}]{\evalexprstar{\esequence{e, e'}}{\sigma}{\mathit{exres}}{\sigma'}}{\evalexpr{(\esequence{e : e'})}{\sigma}{\mathit{exres}}{\sigma'}}
\]

\noindent{}Lookup expressions require evaluating the outer expression to a map---otherwise
producing an error (\textsc{E-Lookup-Err})---and the index expression to a value. If the
index is an existing key then the corresponding value is produced as result
(\textsc{E-Lookup-Sucs}) and otherwise the $\textrm{nokey}$ exception is thrown
(\textsc{E-Lookup-NoKey}); here, I assume that {\small$\keyword{data} \; \textrm{NoKey} =
    \textrm{nokey}(\textrm{value} \; \textrm{key})$} is a built-in data-type definition. Exceptional results are propagated from sub-terms
(\textsc{E-Lookup-Exc1}, \textsc{E-Lookup-Exc2}).%
{\small\[
  \inference[\textsc{E-Lookup-Sucs}]{\evalexpr{e_1}{\sigma}{\keyword{success}
      \; (\dots, v : v', \dots)}{\sigma''} \\
    \evalexpr{e_2}{\sigma''}{\keyword{success} \; v}{\sigma'}}{\evalexpr{e_1[e_2]}{\sigma}{\keyword{success}
    \; v'}{\sigma'}}
\]

\[
  \inference[\textsc{E-Lookup-NoKey}]{\evalexpr{e_1}{\sigma}{\keyword{success}
      \; (\esequence{v : v'})}{\sigma''} \\
    \evalexpr{e_2}{\sigma''}{\keyword{success} \; v''}{\sigma'} & \forall i. v'' \neq v_i}{\evalexpr{e_1[e_2]}{\sigma}{\keyword{throw}
    \; \auxiliary{nokey}(v'')}{\sigma'}}
\]

\[
  \inference[\textsc{E-Lookup-Err}]{\evalexpr{e_1}{\sigma}{\keyword{success} \;
      v}{\sigma'} & v \neq (\esequence{v' : v''})}{\evalexpr{e_1[e_2]}{\sigma}{\keyword{error}}{\sigma'}}
\]

\[
  \inference[\textsc{E-Lookup-Exc1}]{\evalexpr{e_1}{\sigma}{\mathit{exres}}{\sigma'}}{\evalexpr{e_1[e_2]}{\sigma}{\mathit{exres}}{\sigma'}}
\]

\[
  \inference[\textsc{E-Lookup-Exc2}]{\evalexpr{e_1}{\sigma}{\keyword{success} \;
      (\esequence{v : v'})}{\sigma''} & \evalexpr{e_2}{\sigma''}{\mathit{exres}}{\sigma'}}{\evalexpr{e_1[e_2]}{\sigma}{\mathit{exres}}{\sigma'}}
\]}

\noindent{}Map update expressions also require the outer expression to evaluate to a
map---otherwise producing an error (\textsc{E-Update-Err1})---and the index and target
expressions to evaluate to values.
On succesful evaluation of both index and target value the map is updated,
overriding the old value of the corresponding index if necessary
(\textsc{E-Update-Sucs}) unless the index or target value is equal to $\blacksquare$ in
which case an error is produced (\textsc{E-Update-Err2}).
Finally, exceptional results are propagated left-to-right if necessary
(\textsc{E-Update-Exc1}, \textsc{E-Update-Exc2}, \textsc{E-Update-Exc3}).%

{\small\[
  \inference[\textsc{E-Update-Sucs}]{\evalexpr{e_1}{\sigma}{\keyword{success}
      \; (\esequence{v : v'})}{\sigma'''} &
    \evalexpr{e_2}{\sigma'''}{\keyword{success} \; v''}{\sigma''} \\
    \evalexpr{e_3}{\sigma''}{\keyword{success} \; v'''}{\sigma'} & v'' \neq
    \blacksquare & v''' \neq \blacksquare
  }{\evalexpr{e_1[e_2 = e_3]}{\sigma}{\keyword{success}
    \; (\esequence{v : v'}, v'' : v''')}{\sigma'}}
\]

\[
  \inference[\textsc{E-Update-Err1}]{\evalexpr{e_1}{\sigma}{\keyword{success}
      \; v}{\sigma'} & v \neq (\esequence{v' : v''})}{\evalexpr{e_1[e_2 = e_3]}{\sigma}{\keyword{error}}{\sigma'}}
\]

\[
  \inference[\textsc{E-Update-Err2}]{\evalexpr{e_1}{\sigma}{\keyword{success}
      \; (\esequence{v : v'})}{\sigma'''} &
    \evalexpr{e_2}{\sigma'''}{\keyword{success} \; v''}{\sigma''} \\
    \evalexpr{e_3}{\sigma''}{\keyword{success} \; v'''}{\sigma'} & v'' =
    \blacksquare \vee v''' = \blacksquare
  }{\evalexpr{e_1[e_2 = e_3]}{\sigma}{\keyword{error}}{\sigma'}}
\]

\[
  \inference[\textsc{E-Update-Exc1}]{\evalexpr{e_1}{\sigma}{\mathit{exres}}{\sigma'}}{\evalexpr{e_1[e_2
      = e_3]}{\sigma}{\mathit{exres}}{\sigma'}}
\]

\[
  \inference[\textsc{E-Update-Exc2}]{\evalexpr{e_1}{\sigma}{\keyword{success} \;
      (\esequence{v : v'})}{\sigma''} &
    \evalexpr{e_2}{\sigma''}{\mathit{exres}}{\sigma'}}{\evalexpr{e_1[e_2 = e_3]}{\sigma}{\mathit{exres}}{\sigma'}}
\]

\[
  \inference[\textsc{E-Update-Exc3}]{\evalexpr{e_1}{\sigma}{\keyword{success} \;
      (\esequence{v : v'})}{\sigma''} \\
    \evalexpr{e_2}{\sigma'''}{\keyword{success} \; v''}{\sigma''} &
    \evalexpr{e_3}{\sigma''}{\mathit{exres}}{\sigma'}}{\evalexpr{e_1[e_2 = e_3]}{\sigma}{\mathit{exres}}{\sigma'}}
\]}

\noindent{}Evaluation of function calls is more elaborate.
Function definitions are statically scoped, and the semantics is eager, so
arguments are evaluated using call-by-value.
The initial step is thus to evaluate all the arguments to values if
possible---propagating the exceptional result otherwise (\textsc{E-Call-Arg-Exc})---and
then check whether the values have the right type (otherwise producing an error, \textsc{E-Call-Arg-Err}). The evaluation
proceeds by evaluating the body of the function with a fresh store that
contains the values of global variables and the parameters bound to their
respective argument values.
There are then four cases:
\begin{enumerate}
\item If the body successfully evaluates to a correctly typed value, then that
  value is provided as the result (\textsc{E-Call-Sucs}).
  \item If the body evaluates to a value that does not have the expected type,
    then it produces an error (\textsc{E-Call-Res-Err1}).
  \item If the result is a thrown exception or error, then it is propagated
    (\textsc{E-Call-Res-Exc}).
    \item Otherwise if the result is a control operator, then an error is
      produced (\textsc{E-Call-Res-Err2}).
\end{enumerate}
In all cases the resulting store of executing the body
is discarded---since local assignments fall out of scope---except the global
variable values which are added to the store that was there before
the function call.

{\small\[
  \inference[\textsc{E-Call-Sucs}]{\esequence{\keyword{global} \; t_y \; y} &
    \keyword{fun} \; t' \; f(\esequence{t \; x}) = e' \\
    \evalexprstar{\esequence{e}}{\sigma}{\keyword{success} \; \esequence{v}}{\sigma''} &
    \esequence{\typing{v}{t''}} &
    \esequence{\subtyping{t''}{t}} \\
    \evalexpr{[\esequence{y \mapsto \sigma''(y)},
      \esequence{x \mapsto
      v}]}{e'}{\mathit{vres}}{\sigma'} \\ \mathit{vres} =
    \keyword{return} \; v' \vee \mathit{vres} = \keyword{success} \; v' &
    \typing{v'}{t'''} & \subtyping{t'''}{t'}
  }{\evalexpr{f(\esequence{e})}{\sigma}{\keyword{success} \; v'}{\sigma''[\esequence{y \mapsto \sigma'(y)}]}}
\]

\[
  \inference[\textsc{E-Call-Arg-Err}]{\keyword{fun} \; t' \; f(\esequence{t \; x}) = e' &
    \evalexprstar{\esequence{e}}{\sigma}{\keyword{success} \; \esequence{v}}{\sigma'} \\
    \esequence{\typing{v}{t''}} & \notsubtyping{t''_i}{t_i} }{\evalexpr{f(\esequence{e})}{\sigma}{\keyword{error}}{\sigma'}}
\]

\[
  \inference[\textsc{E-Call-Arg-Exc}]{
    \evalexprstar{\esequence{e}}{\sigma}{\mathit{exres}}{\sigma'}}{\evalexpr{f(\esequence{e})}{\sigma}{\mathit{exres}}{\sigma'}}
\]

\[
  \inference[\textsc{E-Call-Res-Exc}]{\esequence{\keyword{global} \; t_y \; y} &
    \keyword{fun} \; t' \; f(\esequence{t \; x}) = e' \\
    \evalexprstar{\esequence{e}}{\sigma}{\keyword{success} \; \esequence{v}}{\sigma''} &
    \esequence{\typing{v}{t''}} & \esequence{\subtyping{t''}{t}} \\
    \evalexpr{[\esequence{y \mapsto \sigma''(y)},
      \esequence{x \mapsto
      v}]}{e'}{\mathit{exres}}{\sigma'} &\mathit{exres} =
    \keyword{throw} \; v'
  }{\evalexpr{f(\esequence{e})}{\sigma}{\mathit{exres}}{\sigma''[\esequence{y \mapsto \sigma'(y)}]}}
\]

\[
  \inference[\textsc{E-Call-Res-Err1}]{\esequence{\keyword{global} \; t_y \; y} &
    \keyword{fun} \; t' \; f(\esequence{t \; x}) = e' \\
    \evalexprstar{\esequence{e}}{\sigma}{\keyword{success} \; \esequence{v}}{\sigma''} &
    \esequence{\typing{v}{t''}} & \esequence{\subtyping{t''}{t}} \\
    \evalexpr{[\esequence{y \mapsto \sigma''(y)},
      \esequence{x \mapsto
      v}]}{e'}{\mathit{vres}}{\sigma'} \\ \mathit{vres} =
    \keyword{return} \; v' \vee \mathit{vres} = \keyword{success} \; v' &
    \typing{v'}{t'''} & \notsubtyping{t'''}{t'}
  }{\evalexpr{f(\esequence{e})}{\sigma}{\keyword{error}}{\sigma''[\esequence{y \mapsto \sigma'(y)}]}}
\]

\[
  \inference[\textsc{E-Call-Res-Err2}]{\esequence{\keyword{global} \; t_y \; y} &
    \keyword{fun} \; t' \; f(\esequence{t \; x}) = e' \\
    \evalexprstar{\esequence{e}}{\sigma}{\keyword{success} \; \esequence{v}}{\sigma''} &
    \esequence{\typing{v}{t''}} & \esequence{\subtyping{t''}{t}} \\
    \evalexpr{[\esequence{y \mapsto \sigma''(y)},
      \esequence{x \mapsto
      v}]}{e'}{\mathit{exres}}{\sigma'} \\ \mathit{exres} \in
  \{\keyword{break}, \keyword{continue}, \keyword{fail}, \keyword{error} \}
  }{\evalexpr{f(\esequence{e})}{\sigma}{\keyword{error}}{\sigma''[\esequence{y \mapsto \sigma'(y)}]}}
\]}

\noindent{}The $\keyword{return}$-expression evaluates its argument expression first, and if
it successfully produces a value then the result would be an early return with
that value (\textsc{E-Ret-Sucs}); recall that early returns are treated as
exceptional values and so propagated through most evaluation rules, except at
function call boundaries (rules \textsc{E-Call-Sucs} and \textsc{E-Call-Res-Err1}). Otherwise, the exceptional result is propagated (\textsc{E-Ret-Exc}).

{\small\[
  \inference[\textsc{E-Ret-Sucs}]{\evalexpr{e}{\sigma}{\keyword{success} \; v}{\sigma'}}{\evalexpr{\keyword{return} \;
      e}{\sigma}{\keyword{return} \; v}{\sigma'}}
  \quad
  \inference[\textsc{E-Ret-Exc}]{\evalexpr{e}{\sigma}{\mathit{exres}}{\sigma'}}{\evalexpr{\keyword{return} \;
      e}{\sigma}{\mathit{exres}}{\sigma'}}
\]}

\noindent{}In Rascal Light, variables must be declared before being
assigned, and declarations are unique since shadowing is disallowed. Evaluating
an assignment proceeds by evaluating the right-hand side
expression---propagating exceptional results (\textsc{E-Asgn-Exc})---and then
checking whether the produced value is compatible with the declared type.
If it is compatible then the store is updated
(\textsc{E-Asgn-Sucs}), and otherwise an error is produced (\textsc{E-Asgn-Err}).

{\small
\[
  \inference[\textsc{E-Asgn-Sucs}]{\keyword{local} \; t \; x \vee
    \keyword{global} \; t \; x & \evalexpr{e}{\sigma}{\keyword{success}
      \; v}{\sigma'} \\ \typing{v}{t'} & \subtyping{t'}{t} }{\evalexpr{x = e}{\sigma}{\keyword{success} \;
      v}{\sigma'[x \mapsto v]}}
\]

\[
  \inference[\textsc{E-Asgn-Err}]{\keyword{local} \; t \; x \vee
    \keyword{global} \; t \; x & \evalexpr{e}{\sigma}{\keyword{success}
      \; v}{\sigma'} \\ \typing{v}{t'} & \notsubtyping{t'}{t}}{\evalexpr{x = e}{\sigma}{\keyword{error}}{\sigma'}}
\]

\[
  \inference[\textsc{E-Asgn-Exc}]{\evalexpr{e}{\sigma}{\mathit{exres}}{\sigma'}}{\evalexpr{x = e}{\sigma}{\mathit{exres}}{\sigma'}}
\]}

\noindent{}The $\keyword{if}$-expression works like other languages: the
$\keyword{then}$-branch is evaluated if the condition is true (\textsc{E-If-True}), otherwise the
$\keyword{else}$-branch is evaluated (\textsc{E-If-False}). If the conditional
produces a non-Boolean
value then an error is raised (\textsc{E-If-Err}) and otherwise exceptional
results are propagated (\textsc{E-If-Exc}).

{\small\[
  \inference[\textsc{E-If-True}]{\evalexpr{e_{\mathrm{cond}}}{\sigma}{\keyword{success} \;
      \auxiliary{true}()}{\sigma''} & \evalexpr{e_1}{\sigma''}{\mathit{vres}_1}{\sigma'}}{\evalexpr{\keyword{if} \; e_{\mathrm{cond}}
      \; \keyword{then} \; e_1 \; \keyword{else} \; e_2}{\sigma}{\mathit{vres}_1}{\sigma'}}
\]

\[
  \inference[\textsc{E-If-False}]{\evalexpr{e_{\mathrm{cond}}}{\sigma}{\keyword{success} \;
      \auxiliary{false}()}{\sigma''} & \evalexpr{e_2}{\sigma''}{\mathit{vres}_2}{\sigma'}}{\evalexpr{\keyword{if} \; e_{\mathrm{cond}}
      \; \keyword{then} \; e_1 \; \keyword{else} \; e_2}{\sigma}{\mathit{vres}_2}{\sigma'}}
\]

\[
  \inference[\textsc{E-If-Err}]{\evalexpr{e_{\mathrm{cond}}}{\sigma}{\keyword{success} \;
      v}{\sigma'} & v \neq \auxiliary{true}() & v \neq \auxiliary{false}()}{\evalexpr{\keyword{if} \; e_{\mathrm{cond}}
      \; \keyword{then} \; e_1 \; \keyword{else} \; e_2}{\sigma}{\keyword{error}}{\sigma'}}
\]

\[
  \inference[\textsc{E-If-Exc}]{\evalexpr{e_{\mathrm{cond}}}{\sigma}{\mathit{exres}}{\sigma'}}{\evalexpr{\keyword{if} \; e_{\mathrm{cond}}
      \; \keyword{then} \; e_1 \; \keyword{else} \; e_2}{\sigma}{\mathit{exres}}{\sigma'}}
\]}

\noindent{}The $\keyword{switch}$-expression initially evaluates the scrutinee expression ($e$),
and then proceeds to execute the cases (discussed on page~\pageref{par:cases}) on the result value (\textsc{E-Switch-Sucs}). The
evaluation of cases is allowed to \keyword{fail}, in which case the evaluation
is successful and has the special value $\blacksquare$ (\textsc{E-Switch-Fail}); other
exceptional results are propagated as usual (\textsc{E-Switch-Exc1}, \textsc{E-Switch-Exc2}).

{\small\[
  \inference[\textsc{E-Switch-Sucs}]{\evalexpr{e}{\sigma}{\keyword{success} \;
      v}{\sigma''} & \evalcases{\esequence{\mathit{cs}}}{v}{\sigma''}{\keyword{success} \; v'}{\sigma'}}{\evalexpr{\keyword{switch} \; e \;
      \esequence{\mathit{cs}}}{\sigma}{\keyword{success} \; v'}{\sigma'}}
\]

\[
  \inference[\textsc{E-Switch-Fail}]{\evalexpr{e}{\sigma}{\keyword{success} \;
      v}{\sigma''} & \evalcases{\esequence{\mathit{cs}}}{v}{\sigma''}{\keyword{fail}}{\sigma'}}{\evalexpr{\keyword{switch} \; e \;
      \esequence{\mathit{cs}}}{\sigma}{\keyword{success} \; \blacksquare}{\sigma'}}
\]

\[
  \inference[\textsc{E-Switch-Exc1}]{\evalexpr{e}{\sigma}{\mathit{exres}}{\sigma'}}{\evalexpr{\keyword{switch} \; e \;
      \esequence{\mathit{cs}}}{\sigma}{\mathit{exres}}{\sigma'}}
\]

\[
  \inference[\textsc{E-Switch-Exc2}]{\evalexpr{e}{\sigma}{\keyword{success} \;
      v}{\sigma''} &
    \evalcases{\esequence{\mathit{cs}}}{v}{\sigma''}{\mathit{exres}}{\sigma'} &
    \mathit{exres} \neq \keyword{fail}}{\evalexpr{\keyword{switch} \; e \;
      \esequence{\mathit{cs}}}{\sigma}{\mathit{exres}}{\sigma'}}
\]}

\noindent{}The $\keyword{visit}$-expression has similar evaluation cases to
$\keyword{switch}$ (\textsc{E-Visit-Sucs}, \textsc{E-Visit-Fail},
\textsc{E-Visit-Exc1}, \textsc{E-Visit-Exc2}), except that the cases are evaluated using the
$\textrm{visit}$ relation that traverses the produced value of target expression
given the provided strategy.

{\small\[
  \inference[\textsc{E-Visit-Sucs}]{\evalexpr{e}{\sigma}{\keyword{success} \;
      v}{\sigma''} &
    \evalvisit{\mathit{st}}{\esequence{\mathit{cs}}}{v}{\sigma''}{\keyword{success}
    \; v'}{\sigma'}}{\evalexpr{
    \mathit{st} \; \keyword{visit} \; e \;
    \esequence{\mathit{cs}}}{\sigma}{\keyword{success} \; v'}{\sigma'}}
\]

\[
  \inference[\textsc{E-Visit-Fail}]{\evalexpr{e}{\sigma}{\keyword{success} \;
      v}{\sigma''} &
    \evalvisit{\mathit{st}}{\esequence{\mathit{cs}}}{v}{\sigma''}{\keyword{fail}}{\sigma'}}{\evalexpr{
    \mathit{st} \; \keyword{visit} \; e \;
    \esequence{\mathit{cs}}}{\sigma}{\keyword{success} \; v}{\sigma'}}
\]

\[
  \inference[\textsc{E-Visit-Exc1}]{\evalexpr{e}{\sigma}{\mathit{exres}}{\sigma'}}{\evalexpr{
      \mathit{st} \; \keyword{visit} \; e \; \esequence{\mathit{cs}}}{\sigma}{\mathit{exres}}{\sigma'}}
\]

\[
  \inference[\textsc{E-Visit-Exc2}]{\evalexpr{e}{\sigma}{\keyword{success} \;
      v}{\sigma''} &
    \evalvisit{\mathit{st}}{\esequence{\mathit{cs}}}{v}{\sigma''}{\mathit{exres}}{\sigma'}
    &
    \mathit{exres} \neq \keyword{fail}}{\evalexpr{
    \mathit{st} \; \keyword{visit} \; e \;
    \esequence{\mathit{cs}}}{\sigma}{\mathit{exres}}{\sigma'}}
\]}

\noindent{}The control operations $\keyword{break}$, $\keyword{continue}$ and
$\keyword{fail}$ evaluate to themselves without any side-effects
(\textsc{E-Break}, \textsc{E-Continue}, \textsc{E-Fail}).

{\small\[
  \inference[\textsc{E-Break}]{}{\evalexpr{\keyword{break}}{\sigma}{\keyword{break}}{\sigma}} \quad
  \inference[\textsc{E-Fail}]{}{\evalexpr{\keyword{fail}}{\sigma}{\keyword{fail}}{\sigma}}
\]

\[
  \inference[\textsc{E-Continue}]{}{\evalexpr{\keyword{continue}}{\sigma}{\keyword{continue}}{\sigma}}
\]}

\noindent{}Blocks allow evaluating inner expressions using a local declaration of
variables, which are then afterwards removed from the resulting store
(\textsc{E-Block-Sucs}, \textsc{E-Block-Exc}).
Recall, that we consider an implicit definition environment based on scoping,
and so the local declarations in the block will be implicitly available in the
evaluation of the body subexpression sequence.

{\small\[
  \inference[\textsc{E-Block-Sucs}]{\evalexprstar{\esequence{e}}{\sigma}{\keyword{success}
    \; \esequence{v}}{\sigma'}}{\evalexpr{\keyword{local} \; \esequence{t
        \; x} \; \keyword{in} \; \esequence{e} \;
      \keyword{end}}{\sigma}{\keyword{success} \; \auxiliary{last}(\esequence{v})}{(\sigma' \setminus \esequence{x})}}
\]
\\
\[
  \inference[\textsc{E-Block-Exc}]{\evalexprstar{\esequence{e}}{\sigma}{\mathit{exres}}{\sigma'}}{\evalexpr{\keyword{local} \; \esequence{t
        \; x} \; \keyword{in} \; \esequence{e} \;
      \keyword{end}}{\sigma}{\mathit{exres}}{(\sigma' \setminus \esequence{x})}}
\]
}

\noindent{}The auxiliary function $\auxiliary{last}$ is here used to extract the last
element in the sequence (or return $\blacksquare$ if empty).
{\small\begin{align*}
\auxiliary{last}(v_1, \dots, v_{\mathrm{n}}, v') &= v' \\
\auxiliary{last}(\varepsilon) &= \blacksquare
\end{align*}}

\noindent{}The $\keyword{for}$-loop evaluates target generator expression to
a set of possible \textit{environments} that represent possible assignments of
variables to values---propagating exceptions if necessary
(\textsc{E-For-Exc})---and then it iterates over each possible assignment using the
$\textrm{each}$-relation (\textsc{E-For-Sucs}).

{\small\[
  \inference[\textsc{E-For-Sucs}]{\evalgenexpr{g}{\sigma}{\keyword{success} \;
      \esequence{\rho}}{\sigma''} \\
    \evaleach{e}{\esequence{\rho}}{\sigma''}{\mathit{vres}}{\sigma'}
  }{\evalexpr{\keyword{for} \; g \; e}{\sigma}{\mathit{vres}}{\sigma'}}
  \quad
  \inference[\textsc{E-For-Exc}]{\evalgenexpr{g}{\sigma}{\mathit{exres}}{\sigma'}}{\evalexpr{\keyword{for} \; g \; e}{\sigma}{\mathit{exres}}{\sigma'}}
\]}

\noindent{}The evaluation of $\keyword{while}$-loops is analogous to other imperative languages with
control operations, in that the body of the while loop is continuously executed until
the target condition does not hold (\textsc{E-While-False}). If the body successfully finishes with an
value or $\keyword{continue}$ then iteration continues (\textsc{E-While-True-Sucs}), if the body finishes
with $\keyword{break}$ the iteration stops with value $\blacksquare$
(\textsc{E-While-True-Break}), if the conditional evaluates to a non-Boolean
value it errors out (\textsc{E-While-Err}) and otherwise if
another kind of exceptional result is produced then it is propagated
(\textsc{E-While-Exc1}, \textsc{E-While-Exc2}).

{\small\[
  \inference[\textsc{E-While-False}]{\evalexpr{e_{\mathrm{cond}}}{\sigma}{\keyword{success} \;
      \auxiliary{false}()}{\sigma'}}{\evalexpr{\keyword{while}
      \; e_{\mathrm{cond}} \; e}{\sigma}{\keyword{success} \; \blacksquare}{\sigma'}}
\]

\[
  \inference[\textsc{E-While-True-Sucs}]{\evalexpr{e_{\mathrm{cond}}}{\sigma}{\keyword{success} \;
      \auxiliary{true}()}{\sigma''} &
    \evalexpr{e}{\sigma''}{\mathit{vres}}{\sigma'''} \\
    \mathit{vres} = \keyword{success} \; v \vee \mathit{vres} = \keyword{continue} \\
   \evalexpr{\keyword{while}
     \; e_{\mathrm{cond}} \; e}{\sigma'''}{\mathit{vres}'}{\sigma'} 
  }{\evalexpr{\keyword{while}
      \; e_{\mathrm{cond}} \; e}{\sigma}{\mathit{vres}'}{\sigma'}}
\]

\[
  \inference[\textsc{E-While-True-Break}]{\evalexpr{e_{\mathrm{cond}}}{\sigma}{\keyword{success} \;
      \auxiliary{true}()}{\sigma''} & \evalexpr{e}{\sigma''}{\keyword{break}}{\sigma'}}{\evalexpr{\keyword{while}
      \; e_{\mathrm{cond}} \; e}{\sigma}{\keyword{success} \; \blacksquare}{\sigma'}}
\]

\[
  \inference[\textsc{E-While-Exc1}]{\evalexpr{e_{\mathrm{cond}}}{\sigma}{\mathit{exres}}{\sigma'}}{\evalexpr{\keyword{while} \; e_{\mathrm{cond}}
      \; e}{\sigma}{\mathit{exres}}{\sigma'}}
\]

\[
  \inference[\textsc{E-While-Exc2}]{\evalexpr{e_{\mathrm{cond}}}{\sigma}{\keyword{success} \;
      \auxiliary{true}()}{\sigma''} &
    \evalexpr{e}{\sigma''}{\mathit{exres}}{\sigma'} \\
    \mathit{exres} \in \{ \keyword{throw} \; v, \keyword{return} \; v, \keyword{fail}, \keyword{error} \}}{\evalexpr{\keyword{while}
      \; e_{\mathrm{cond}} \; e}{\sigma}{\mathit{exres}}{\sigma'}}
\]

\[
  \inference[\textsc{E-While-Err}]{\evalexpr{e_{\mathrm{cond}}}{\sigma}{\keyword{success} \;
      v}{\sigma'} & v \neq \auxiliary{true}() & v \neq
    \auxiliary{false}()}{\evalexpr{\keyword{while} \; e_{\mathrm{cond}} \; e}{\sigma}{\keyword{error}}{\sigma'}}
\]}

\noindent{}The $\keyword{solve}$-loop keeps evaluating the body expression until the provided
variables reach a fixed-point. Initially, the body expression is evaluated and
then the values of target variables is compared from before and after iteration;
if the values are equal after an iteration, then evaluation stops (\textsc{E-Solve-Eq})
and otherwise the iteration continues (\textsc{E-Solve-Neq}). If any of the variables do
not have a value assigned, an error is produced (\textsc{E-Solve-Err}), and otherwise if an exceptional result is produced, it is propagated (\textsc{E-Solve-Exc}). %
{\small\[
  \inference[\textsc{E-Solve-Eq}]{\evalexpr{e}{\sigma}{\keyword{success} \;
      v}{\sigma'} & \esequence{x} \subseteq \textrm{dom } \sigma \cap
    \textrm{dom } \sigma' & \esequence{\sigma(x) = \sigma'(x)} }{\evalexpr{\keyword{solve} \;
      \esequence{x} \; e}{\sigma}{\keyword{success} \; v}{\sigma'}}
\]

\[
  \inference[\textsc{E-Solve-Neq}]{\evalexpr{e}{\sigma}{\keyword{success} \;
      v}{\sigma''}& \esequence{x} \subseteq \textrm{dom } \sigma \cap 
    \textrm{dom } \sigma'' \\ \exists i. \sigma(x_i) \neq
    \sigma''(x_i)  &
    \evalexpr{\keyword{solve} \;
      \esequence{x} \; e}{\sigma''}{\mathit{vres}}{\sigma'}
        }{\evalexpr{\keyword{solve} \;
            \esequence{x} \; e}{\sigma}{\mathit{vres}}{\sigma'}}
\]
\[
  \inference[\textsc{E-Solve-Exc}]{\evalexpr{e}{\sigma}{\mathit{exres}}{\sigma'}}{\evalexpr{\keyword{solve} \;
      \esequence{x} \; e}{\sigma}{\mathit{exres}}{\sigma'}}
\]

\[
  \inference[\textsc{E-Solve-Err}]{\evalexpr{e}{\sigma}{\keyword{success} \;
      v}{\sigma'} & x_i \not\in \textrm{dom } \sigma \cap
    \textrm{dom } \sigma'}{\evalexpr{\keyword{solve} \;
      \esequence{x} \; e}{\sigma}{\keyword{error}}{\sigma'}}
\]}

\noindent{}The $\keyword{throw}$-expression, evaluates its inner expression
first---propagating exceptional results if necessary (\textsc{E-Thr-Exc})---and then produces a $\keyword{throw}$ result with
result value (\textsc{E-Thr-Sucs}).

{\small
\[
  \inference[\textsc{E-Thr-Sucs}]{\evalexpr{e}{\sigma}{\keyword{success} \; v}{\sigma'}}{\evalexpr{\keyword{throw} \;
      e}{\sigma}{\keyword{throw} \; v}{\sigma'}}
  \quad
  \inference[\textsc{E-Thr-Exc}]{\evalexpr{e}{\sigma}{\mathit{exres}}{\sigma'}}{\evalexpr{\keyword{throw} \;
      e}{\sigma}{\mathit{exres}}{\sigma'}}
\]}

\noindent{}The $\keyword{try}$-$\keyword{finally}$ expression executes the
$\keyword{try}$-body first and then the $\keyword{finally}$-body.
If the $\keyword{finally}$-body produces an exceptional result during execution then that
result is propagated (\textsc{E-Fin-Exc}) and otherwise the $\keyword{try}$-body
result value is used (\textsc{E-Fin-Sucs}).

{\small\[
  \inference[\textsc{E-Fin-Sucs}]{\evalexpr{e_1}{\sigma}{\mathit{vres}_1}{\sigma''}
    &
       \evalexpr{e_2}{\sigma''}{\keyword{success} \; v_2}{\sigma'}}{\evalexpr{\keyword{try}
      \; e_1 \; \keyword{finally} \; e_2}{\sigma}{\mathit{vres}_1}{\sigma'}}
\]

\[
  \inference[\textsc{E-Fin-Exc}]{\evalexpr{e_1}{\sigma}{\mathit{vres}_1}{\sigma''} &
    \evalexpr{e_2}{\sigma''}{\mathit{exres}_2}{\sigma'}}{\evalexpr{\keyword{try}
      \; e_1 \; \keyword{finally} \; e_2}{\sigma}{\mathit{exres}_2}{\sigma'}}
\]}

\noindent{}The $\keyword{try}$-$\keyword{catch}$ expression evaluates the
$\keyword{try}$-body and if it produces a thrown value, then it binds the value
in the body of \keyword{catch} and continues evaluation (\textsc{E-Try-Catch}).
For all other results, it simply propagates them without evaluating the $\keyword{catch}$-body (\textsc{E-Try-Ord}).%
{\small\[
  \inference[\textsc{E-Try-Catch}]{\evalexpr{e_1}{\sigma}{\keyword{throw} \;
      v_1}{\sigma''} & \evalexpr{e_2}{\sigma''[x \mapsto
      v_1]}{\mathit{vres}_2}{\sigma'}}{\evalexpr{\keyword{try} \; e_1 \;
      \keyword{catch} \; x => e_2}{\sigma}{\mathit{vres}_2}{\left( \sigma'
        \setminus x \right)}}
\]

\[
  \inference[\textsc{E-Try-Ord}]{\evalexpr{e_1}{\sigma}{\mathit{vres}_1}{\sigma'}
    & \mathit{vres}_1 \neq \keyword{throw} \; v_1}{\evalexpr{\keyword{try} \; e_1 \; \keyword{catch} \; x =>
      e_2}{\sigma}{\mathit{vres}_1}{\sigma'}}
\]}

\paragraph{Expression Sequences}\label{par:exprseq}
Evaluating a sequence of expressions proceeds by evaluating each expression,
combining the results if successful (\textsc{ES-Emp}, \textsc{ES-More}) and
otherwise propagating the first exceptional result encountered
(\textsc{ES-Exc1}, \textsc{ES-Exc2}).

{\small
\[
  \inference[\textsc{ES-More}]{\evalexpr{e}{\sigma}{\keyword{success} \; v}{\sigma''} & \evalexprstar{
      \esequence{e'}}{\sigma''}{\keyword{success} \;\esequence{v'}}{\sigma'}}{\evalexprstar{e,
      \esequence{e'}}{\sigma}{\keyword{success} \; v, \esequence{v'}}{\sigma'}}
\]

\[
  \inference[\textsc{ES-Emp}]{}{\evalexprstar{\keyword{success} \; \varepsilon}{\sigma}{\varepsilon}{\sigma}}
  \quad
  \inference[\textsc{ES-Exc1}]{\evalexpr{e}{\sigma}{\mathit{exres}}{\sigma'} }{\evalexprstar{e,
      \esequence{e'}}{\sigma}{\mathit{exres}}{\sigma'}}
\]

\[
  \inference[\textsc{ES-Exc2}]{\evalexpr{e}{\sigma}{\keyword{success} \; v}{\sigma''} & \evalexprstar{
      \esequence{e'}}{\sigma''}{\mathit{exres}}{\sigma'}}{\evalexprstar{e,
      \esequence{e'}}{\sigma}{\mathit{exres}}{\sigma'}}
\]}

\paragraph{Cases}\label{par:cases}
The evaluation relation for evaluating a series of cases has the form
\highlightbox{$\evalcases{\esequence{\mathit{cs}}}{v}{\sigma}{\mathit{vres}}{\sigma'}$},
and intuitively proceeds by sequentially evaluating each case (in
$\esequence{\mathit{cs}}$) against value $v$ until one of them
produces a non-$\keyword{fail}$ result. For each case, the first step is to
match the given value against target pattern and then evaluate the target
expression under the set of possible matches; if the evaluation of the target
expression produces a $\keyword{fail}$ as result, the rest of the cases are
evaluated in a restored initial state (\textsc{ECS-More-Fail}) and otherwise the result is
propagated (\textsc{ECS-More-Ord}). If all possible cases are exhausted, the
result is $\keyword{fail}$ (\textsc{ECS-Emp}).%
{\small
\[
  \inference[\textsc{ECS-Emp}]{}{\evalcases{\varepsilon}{v}{\sigma}{\keyword{fail}}{\sigma}}
\]

\[
  \inference[\textsc{ECS-More-Fail}]{\match{p}{v}{\sigma}{\esequence{\rho}}
    & \evalcase{\esequence{\rho}}{e}{\sigma}{\keyword{fail}}{\sigma''} \\
    \evalcases{\esequence{\mathit{cs}}}{v}{\sigma}{\mathit{vres}}{\sigma'}
  }{\evalcases{\keyword{case} \;
    p => e, \esequence{\mathit{cs}}}{v}{\sigma}{\mathit{vres}}{\sigma'}}
\]

\[
  \inference[\textsc{ECS-More-Ord}]{\match{p}{v}{\sigma}{\esequence{\rho}}
    & \evalcase{\esequence{\rho}}{e}{\sigma}{\mathit{vres}}{\sigma'} \\ \mathit{vres} \neq \keyword{fail}}{\evalcases{\keyword{case} \;
    p => e, \esequence{\mathit{cs}}}{v}{\sigma}{\mathit{vres}}{\sigma'}}
\]}

\noindent{}Evaluating a single case---with relation
\highlightbox{$\evalcase{\esequence{\rho}}{e}{\sigma}{\mathit{vres}}{\sigma'}$}
---requires trying each possible binding (in $\esequence{\rho}$) sequentially,
producing $\keyword{fail}$ if no binding is available (\textsc{EC-Emp}). If evaluating target
expression produces a non-$\keyword{fail}$ value then it is propagated
(\textsc{EC-More-Ord}), otherwise the rest of the possible bindings are tried
in a restored initial state (\textsc{EC-More-Fail}).

{\small
\[
  \inference[\textsc{EC-More-Fail}]{\evalexpr{e}{\sigma\rho}{\keyword{fail}}{\sigma''}
    & \evalcase{\esequence{\rho'}}{e}{\sigma}{\mathit{vres}}{\sigma'}}{\evalcase{\rho,
      \esequence{\rho'}}{e}{\sigma}{\mathit{vres}}{\sigma'}}
\]

\[
  \inference[\textsc{EC-Emp}]{}{\evalcase{\varepsilon}{e}{\sigma}{\keyword{fail}}{\sigma}}
  \quad
  \inference[\textsc{EC-More-Ord}]{\evalexpr{e}{\sigma\rho}{\mathit{vres}}{\sigma'}
    & \mathit{vres} \neq \keyword{fail}}{\evalcase{\rho,
      \esequence{\rho'}}{e}{\sigma}{\mathit{vres}}{(\sigma' \setminus
      \mathrm{dom} \; \rho})}
\]}

\paragraph{Traversals}
One of the key features of Rascal is $\keyword{visit}$-expressions which provide
generic traversals over data values and collections, allowing for multiple
strategies to determine the traversal ordering and halting conditions.
In a traditional object-oriented language or functional language, transforming
large structures is cumbersome and requires a great amount of boilerplate,
requiring a function for each type of datatype, where each function must
deconstruct the input data, applying target changes, recursively calling the right
traversal functions for traversal of further contained data and reconstructing the data with new values.
Precisely, the first-class handling of these aspects makes Rascal particularly
suitable as a high-level transformation language.

The main traversal relation
\highlightbox{$\evalvisit{\mathit{st}}{\esequence{\mathit{cs}}}{v}{\sigma}{\mathit{vres}}{\sigma'}$}
delegates execution to the correct strategy-dependent traversal, and performs
fixed-point calculation if necessary. For the
$\keyword{top-down-break}$ and $\keyword{top-down}$ strategies it uses the
top-down traversal relation
\highlightbox{$\evaltdvisit{\esequence{\mathit{cs}}}{v}{\sigma}{\mathit{br}}{\mathit{vres}}{\sigma'}$}
specifying $\keyword{break}$ and no $\keyword{no-break}$ as breaking strategies
respectively (\textsc{EV-TD} and \textsc{EV-TDB}); this works analogously with
the $\keyword{bottom-up-break}$  (\textsc{EV-BU} and \textsc{EV-BUB}) and
$\keyword{bottom-up}$ strategies using the
\highlightbox{$\evalbuvisit{\esequence{\mathit{cs}}}{v}{\sigma}{\mathit{br}}{\mathit{vres}}{\sigma'}$}
relation.
{\small\[
  \inference[\textsc{EV-TD}]{
    \evaltdvisit{\esequence{\mathit{cs}}}{v}{\sigma}{\keyword{no-break}}{\mathit{vres}}{\sigma'}}{\evalvisit{\keyword{top-down}}{\esequence{\mathit{cs}}}{v}{\sigma}{\mathit{vres}}{\sigma'}}
  \quad
  \inference[\textsc{EV-TDB}]{
    \evaltdvisit{\esequence{\mathit{cs}}}{v}{\sigma}{\keyword{break}}{\mathit{vres}}{\sigma'}}{\evalvisit{\keyword{top-down-break}}{\esequence{\mathit{cs}}}{v}{\sigma}{\mathit{vres}}{\sigma'}}
\]
\[
  \inference[\textsc{EV-BU}]{
    \evalbuvisit{\esequence{\mathit{cs}}}{v}{\sigma}{\keyword{no-break}}{\mathit{vres}}{\sigma'}}{\evalvisit{\keyword{bottom-up}}{\esequence{\mathit{cs}}}{v}{\sigma}{\mathit{vres}}{\sigma'}}
  \quad
  \inference[\textsc{EV-BUB}]{
    \evalbuvisit{\esequence{\mathit{cs}}}{v}{\sigma}{\keyword{break}}{\mathit{vres}}{\sigma'}}{\evalvisit{\keyword{bottom-up-break}}{\esequence{\mathit{cs}}}{v}{\sigma}{\mathit{vres}}{\sigma'}}
\]}

\noindent{}The $\keyword{innermost}$ streatgy evaluates the $\keyword{bottom-up}$ traversal
as long as it produces a resulting value not equal to the one from the previous iteration (\textsc{EV-IM-Neq}), returning the result value when a fixed-point is reached
(\textsc{EV-IM-Eq}); if any exceptional result happens during evaluation it will
be propagated (\textsc{EV-IM-Exc}). Analogous evaluation steps happens with
$\keyword{outermost}$ streatgy and $\keyword{top-down}$ traversal
(\textsc{EV-OM-Neq}, \textsc{EV-OM-Eq}, \textsc{EV-OM-Exc}).

{\small
\[
  \inference[\textsc{EV-IM-Eq}]{
    \evalbuvisit{\esequence{\mathit{cs}}}{v}{\sigma}{\keyword{no-break}}{\keyword{success} \;
      v}{\sigma'}}{\evalvisit{\keyword{innermost}}{\esequence{\mathit{cs}}}{v}{\sigma}{\keyword{success}
    \; v}{\sigma'}}
\quad
  \inference[\textsc{EV-IM-Exc}]{
    \evalbuvisit{\esequence{\mathit{cs}}}{v}{\sigma}{\keyword{no-break}}{\mathit{exres}}{\sigma'}}{\evalvisit{\keyword{innermost}}{\esequence{\mathit{cs}}}{v}{\sigma}{\mathit{exres}}{\sigma'}}
\]}

{\small
\[
  \inference[\textsc{EV-IM-Neq}]{
    \evalbuvisit{\esequence{\mathit{cs}}}{v}{\sigma}{\keyword{no-break}}{\keyword{success} \;
      v'}{\sigma''} & v \neq v'\\
    \evalvisit{\keyword{innermost}}{\esequence{\mathit{cs}}}{v'}{\sigma''}{\mathit{vres}}{\sigma'}}{\evalvisit{\keyword{innermost}}{\esequence{\mathit{cs}}}{v}{\sigma}{\mathit{vres}}{\sigma'}}
\]

\[
  \inference[\textsc{EV-OM-Eq}]{
    \evaltdvisit{\esequence{\mathit{cs}}}{v}{\sigma}{\keyword{no-break}}{\keyword{success} \;
      v}{\sigma'}}{\evalvisit{\keyword{outermost}}{\esequence{\mathit{cs}}}{v}{\sigma}{\keyword{success}
    \; v}{\sigma'}}
\quad
  \inference[\textsc{EV-OM-Exc}]{
    \evaltdvisit{\esequence{\mathit{cs}}}{v}{\sigma}{\keyword{no-break}}{\mathit{exres}}{\sigma'}}{\evalvisit{\keyword{outermost}}{\esequence{\mathit{cs}}}{v}{\sigma}{\mathit{exres}}{\sigma'}}
\]

\[
  \inference[\textsc{EV-OM-Neq}]{
    \evaltdvisit{\esequence{\mathit{cs}}}{v}{\sigma}{\keyword{no-break}}{\keyword{success} \;
      v'}{\sigma''} & v \neq v'\\
    \evalvisit{\keyword{outermost}}{\esequence{\mathit{cs}}}{v'}{\sigma''}{\mathit{vres}}{\sigma'}}{\evalvisit{\keyword{outermost}}{\esequence{\mathit{cs}}}{v}{\sigma}{\mathit{vres}}{\sigma'}}
\]
}

\noindent{}The top-down traversal strategy starts by executing all cases on the target
value, \emph{scrutinee}, applying possible replacements and effects to produce
an intermediate result value; the traversal then continues on the sequence of contained values of the
this intermediate result, finally reconstructing
a new output containing the possible replacement values obtained
(\textsc{ETV-Ord-Sucs1}, \textsc{ETV-Ord-Sucs2}). If using the
$\keyword{break}$ strategy, the traversal will stop at the first value that
produces a successful result (\textsc{ETV-Break-Sucs});
otherwise, if any sub-result produces a non-$\keyword{fail}$ exceptional result
it is propagated (\textsc{ETV-Exc1}, \textsc{ETV-Exc2}).
{\small
\[
  \inference[\textsc{ETV-Break-Sucs}]{\evalcases{\esequence{\mathit{cs}}}{v}{\sigma}{\keyword{success}
      \; v'}{\sigma'} & \mathit{br} = \keyword{break}
  }{\evaltdvisit{\esequence{\mathit{cs}}}{v}{\sigma}{\mathit{br}}{\keyword{success}
    \; v'}{\sigma'}}
\]
\[
  \inference[\textsc{ETV-Ord-Sucs1}]{\evalcases{\esequence{\mathit{cs}}}{v}{\sigma}{\mathit{vfres}}{\sigma''}
    & \mathit{br} \neq \keyword{break} \vee \mathit{vfres} = \keyword{fail} \\ v'' = \textrm{if-fail}(\mathit{vfres}, v) 
    & \esequence{v'''} = \textrm{children}(v'') \\
    \evaltdvisitstar{\esequence{cs}}{\esequence{v'''}}{\sigma''}{\mathit{br}}{\keyword{fail}}{\sigma'}
  }{\evaltdvisit{\esequence{\mathit{cs}}}{v}{\sigma}{\mathit{br}}{\mathit{vfres}}{\sigma'}}
\]
\[
  \inference[\textsc{ETV-Ord-Sucs2}]{\evalcases{\esequence{\mathit{cs}}}{v}{\sigma}{\mathit{vfres}}{\sigma''}
    & \mathit{br} \neq \keyword{break} \vee \mathit{vfres} = \keyword{fail} \\ v'' = \textrm{if-fail}(\mathit{vfres}, v) 
    & \esequence{v'''} = \textrm{children}(v'') \\
    \evaltdvisitstar{\esequence{cs}}{\esequence{v'''}}{\sigma''}{\mathit{br}}{\keyword{success}
      \; \esequence{v''''}}{\sigma'} &
    \reconstruct{v''}{\esequence{v''''}}{\mathit{rcres}}}{\evaltdvisit{\esequence{\mathit{cs}}}{v}{\sigma}{\mathit{br}}{\mathit{rcres}}{\sigma'}}
\]
\[
  \inference[\textsc{ETV-Exc1}]{\evalcases{\esequence{\mathit{cs}}}{v}{\sigma}{\mathit{exres}}{\sigma'}
  & \mathit{exres} \neq \keyword{fail}}{\evaltdvisit{\esequence{\mathit{cs}}}{v}{\sigma}{\mathit{br}}{\mathit{exres}}{\sigma'}}
\]

\[
  \inference[\textsc{ETV-Exc2}]{\evalcases{\esequence{\mathit{cs}}}{v}{\sigma}{\mathit{vfres}}{\sigma''}
    & \mathit{br} \neq \keyword{break} \vee \mathit{vfres} = \keyword{fail} \\ v'' = \textrm{if-fail}(\mathit{vfres}, v) & \esequence{v'''} = \textrm{children}(v'')
    \\
    \evaltdvisitstar{\esequence{cs}}{\esequence{v'''}}{\sigma''}{\mathit{br}}{\mathit{exres}}{\sigma'}
    & \mathit{exres} \neq \keyword{fail}}{\evaltdvisit{\esequence{\mathit{cs}}}{v}{\sigma}{\mathit{br}}{\mathit{exres}}{\sigma'}}
\]}

\noindent{}Evaluating a sequence of top-down traversals, requires executing a top-down
traversal for each element, failing if the input sequence is
empty (\textsc{ETVS-Emp}) and otherwise combining the results
(\textsc{ETVS-More}).
If the $\keyword{break}$ strategy is used, then the iteration will instead stop
at first succesful result (\textsc{ETVS-Break}), and any non-\keyword{fail} exceptional result
is propagated
(\textsc{ETVS-Exc1}, \textsc{ETVS-Exc2}).

{\small\[
  \inference[\textsc{ETVS-Emp}]{}{\evaltdvisitstar{\esequence{cs}}{\varepsilon}{\sigma}{\mathit{br}}{\keyword{fail}}{\sigma}}
\]

\[
  \inference[\textsc{ETVS-Break}]{\evaltdvisit{\esequence{cs}}{v}{\sigma}{\mathit{br}}{\keyword{success} \; v''}{\sigma'}
    & \mathit{br} =
    \keyword{break}}{\evaltdvisitstar{\esequence{cs}}{v,\esequence{v'}}{\sigma}{\mathit{br}}{\keyword{success}
      \; v'', \esequence{v'}}{\sigma'}}
\]

\[
  \inference[\textsc{ETVS-More}]{\evaltdvisit{\esequence{cs}}{v}{\sigma}{\mathit{br}}{\mathit{vfres}}{\sigma''}
    & \mathit{br} \neq \keyword{break} \vee \mathit{vfres} = \keyword{fail}
    \\
    \evaltdvisitstar{\esequence{cs}}{\esequence{v'}}{\sigma''}{\mathit{br}}{\mathit{vfres}{\star}'}{\sigma'}}{\evaltdvisitstar{\esequence{cs}}{v,\esequence{v'}}{\sigma}{\mathit{br}}{\textrm{vcombine}(\mathit{vfres},
      \mathit{vfres}{\star}', v, \esequence{v'})}{\sigma'}}
\]

\[
  \inference[\textsc{ETVS-Exc1}]{\evaltdvisit{\esequence{cs}}{v}{\sigma}{\mathit{br}}{\mathit{exres}}{\sigma'}
    & \mathit{exres} \neq \keyword{fail}}{\evaltdvisitstar{\esequence{cs}}{v,\esequence{v'}}{\sigma}{\mathit{br}}{\mathit{exres}}{\sigma'}}
\]

\[
  \inference[\textsc{ETVS-Exc2}]{\evaltdvisit{\esequence{cs}}{v}{\sigma}{\mathit{br}}{\mathit{vfres}}{\sigma''}
    & \mathit{br} \neq \keyword{break} \vee \mathit{vfres} = \keyword{fail} 
    \\ \evaltdvisitstar{\esequence{cs}}{\esequence{v'}}{\sigma''}{\mathit{br}}{\mathit{exres}}{\sigma'}
    & \mathit{exres} \neq \keyword{fail}}{\evaltdvisitstar{\esequence{cs}}{v,\esequence{v'}}{\sigma}{\mathit{br}}{\mathit{exres}}{\sigma'}}
\]}

\noindent{}Bottom-up traversals work analogously to top-down traversals, except that
traversal of children and reconstruction happens before traversing the final
reconstructed value (\textsc{EBU-Break-Sucs}, \textsc{EBU-No-Break-Sucs},
\textsc{EBU-Fail-Sucs}). The analogy also holds with propagation of exceptional
results and errors (\textsc{EBU-Exc}, \textsc{EBU-No-Break-Err}, \textsc{EBU-No-Break-Exc}).
{\small\[
  \inference[\textsc{EBU-Break-Sucs}]{ \esequence{v''} =
    \textrm{children}(v) &
    \evalbuvisitstar{\esequence{cs}}{\esequence{v''}}{\sigma}{\mathit{br}}{\keyword{success}
      \; \esequence{v'}}{\sigma'} \\
    \mathit{br} = \keyword{break} &
    \reconstruct{v}{\esequence{v'}}{\mathit{rcres}}}{\evalbuvisit{\esequence{\mathit{cs}}}{v}{\sigma}{\mathit{br}}{\mathit{rcres}}{\sigma'}}
\]

\[
  \inference[\textsc{EBU-No-Break-Sucs}]{ \esequence{v''} =
    \textrm{children}(v) &
    \evalbuvisitstar{\esequence{cs}}{\esequence{v''}}{\sigma}{\mathit{br}}{\keyword{success}
      \; \esequence{v'''}}{\sigma''} \\
\mathit{br} = \keyword{no-break} &
\reconstruct{v}{\esequence{v'''}}{\keyword{success} \; v'} \\
\evalcases{\esequence{\mathit{cs}}}{v'}{\sigma''}{\mathit{vfres}'}{\sigma'} \\
}{\evalbuvisit{\esequence{\mathit{cs}}}{v}{\sigma}{\mathit{br}}{\keyword{success}
    \; \textrm{if-fail}(\mathit{vfres}',
        v')}{\sigma'}}
\]

\[
  \inference[\textsc{EBU-Fail-Sucs}]{ \esequence{v''} =
    \textrm{children}(v) &
    \evalbuvisitstar{\esequence{cs}}{\esequence{v''}}{\sigma}{\mathit{br}}{\keyword{fail}}{\sigma''}\\
\evalcases{\esequence{\mathit{cs}}}{v}{\sigma''}{\mathit{vres}}{\sigma'}}{\evalbuvisit{\esequence{\mathit{cs}}}{v}{\sigma}{\mathit{br}}{\mathit{vres}}{\sigma'}}
\]

\[
  \inference[\textsc{EBU-Exc}]{ \esequence{v'} =
    \textrm{children}(v) &
    \evalbuvisitstar{\esequence{cs}}{\esequence{v'}}{\sigma}{\mathit{br}}{\mathit{exres}}{\sigma'}
    & \mathit{exres} \neq \keyword{fail}\\ 
     }{\evalbuvisit{\esequence{\mathit{cs}}}{v}{\sigma}{\mathit{br}}{\mathit{exres}}{\sigma'}}
\]

\[
  \inference[\textsc{EBU-No-Break-Err}]{ \esequence{v''} =
    \textrm{children}(v) &
    \evalbuvisitstar{\esequence{cs}}{\esequence{v''}}{\sigma}{\mathit{br}}{\keyword{success}
      \; \esequence{v'''}}{\sigma'} \\
\mathit{br} = \keyword{no-break} &
\reconstruct{v}{\esequence{v'''}}{\keyword{error}}}{\evalbuvisit{\esequence{\mathit{cs}}}{v}{\sigma}{\mathit{br}}{\keyword{error}}{\sigma'}}
\]

\[
  \inference[\textsc{EBU-No-Break-Exc}]{ \esequence{v''} =
    \textrm{children}(v) &
    \evalbuvisitstar{\esequence{cs}}{\esequence{v''}}{\sigma}{\mathit{br}}{\keyword{success}
      \; \esequence{v'''}}{\sigma''} \\
\mathit{br} = \keyword{no-break} &
\reconstruct{v}{\esequence{v'''}}{\keyword{success} \; v'} \\
\evalcases{\esequence{\mathit{cs}}}{v'}{\sigma''}{\mathit{exres}}{\sigma'} \\
}{\evalbuvisit{\esequence{\mathit{cs}}}{v}{\sigma}{\mathit{br}}{\mathit{exres}}{\sigma'}}
\]}

\noindent{}Evaluating a sequence of bottom-up traversals is analogous to evaluating a
sequence of top-down traversals. Each element in the sequence is evaluated and
their results is combined (\textsc{EBUS-Emp},\textsc{EBUS-More}),
stopping at the first succesful result when using the break strategy (\textsc{EBUS-Break}). Otherwise, non-\keyword{fail} exceptional results are propagated
(\textsc{EBUS-Exc1}, \textsc{EBUS-Exc2}).

{\small
\[
  \inference[\textsc{EBUS-Emp}]{}{\evalbuvisitstar{\esequence{cs}}{\varepsilon}{\sigma}{\mathit{br}}{\keyword{fail}}{\sigma}}
\]

\[
  \inference[\textsc{EBUS-Break}]{\evalbuvisit{\esequence{cs}}{v}{\sigma}{\mathit{br}}{\keyword{success}
      \; v''}{\sigma} & \mathit{br} =
    \keyword{break}}{\evalbuvisitstar{\esequence{cs}}{v,\esequence{v'}}{\sigma}{\mathit{br}}{\keyword{success}
      \; v'', \esequence{v'}}{\sigma'}}
\]

\[
  \inference[\textsc{EBUS-More}]{\evalbuvisit{\esequence{cs}}{v}{\sigma}{\mathit{br}}{\mathit{vfres}}{\sigma''}
    & \mathit{br} \neq \keyword{break} \vee \mathit{vfres} = \keyword{fail} \\
    \evalbuvisitstar{\esequence{cs}}{\esequence{v'}}{\sigma''}{\mathit{br}}{\mathit{vfres}{\star}'}{\sigma'}}{\evalbuvisitstar{\esequence{cs}}{v,\esequence{v'}}{\sigma}{\mathit{br}}{\textrm{vcombine}(\mathit{vfres},
      \mathit{vfres}{\star}', v, \esequence{v'})}{\sigma'}}
\]

\[
  \inference[\textsc{EBUS-Exc1}]{\evalbuvisit{\esequence{cs}}{v}{\sigma}{\mathit{br}}{\mathit{exres}}{\sigma'}
    & \mathit{exres} \neq \keyword{fail}}{\evalbuvisitstar{\esequence{cs}}{v,\esequence{v'}}{\sigma}{\mathit{br}}{\mathit{exres}}{\sigma'}}
\]

\[
  \inference[\textsc{EBUS-Exc2}]{\evalbuvisit{\esequence{cs}}{v}{\sigma}{\mathit{br}}{\mathit{vfres}}{\sigma''}
    & \mathit{br} \neq \keyword{break} \vee \mathit{vfres} = \keyword{fail} \\
    \evalbuvisitstar{\esequence{cs}}{\esequence{v'}}{\sigma''}{\mathit{br}}{\mathit{exres}}{\sigma'}
    & \mathit{exres} \neq \keyword{fail}}{\evalbuvisitstar{\esequence{cs}}{v,\esequence{v'}}{\sigma}{\mathit{br}}{\mathit{exres}}{\sigma'}}
\]}

\paragraph{Auxiliary}
The $\textrm{children}$ function extracts the directly contained values of the
given input value.
{\small\begin{alignat*}{4}
  \textrm{children}(\mathit{vb}) &= \varepsilon &
  \textrm{children}(k(\esequence{v})) &= \esequence{v} \\
  \textrm{children}([\esequence{v}]) &= \esequence{v} &
  \textrm{children}(\{\esequence{v}\}) &= \esequence{v} \\
  \textrm{children}((\esequence{v : v'})) &= \esequence{v}, \esequence{v'} &
  \textrm{children}(\blacksquare) &= \varepsilon
\end{alignat*}}
{\small
\begin{align*}
  \mathit{vfres} &\Coloneqq \keyword{success} \; v \mid \keyword{fail}
\end{align*}}

\noindent{}The $\textrm{if-fail}$ function will return a provided default value if the
first argument is $\keyword{fail}$, and otherwise it will use the provided value
in the first argument.
{\small\begin{alignat*}{4}
  \textrm{if-fail}(\keyword{fail}, v) &= v &\quad
  \textrm{if-fail}(\keyword{success} \; v', v) &= v'
\end{alignat*}}

\noindent{}The $\textrm{vcombine}$ function will combine $\keyword{success}$ and
$\keyword{fail}$ results from visitor, producing $\keyword{fail}$ if both result
arguments are $\keyword{fail}$ otherwise producing $\keyword{success}$ result, possibly using default values

{\small
\begin{align*}
  \textrm{vcombine}(\mathit{vfres}, \mathit{vfres}{\star}', v, \esequence{v'}) =
  \begin{cases}
    \keyword{fail} & \textbf{if }
    \begin{gathered}
      \mathit{vfres} = \keyword{fail} \wedge{} \\ \mathit{vfres}' = \keyword{fail}
    \end{gathered}
      \\
      \keyword{success} \; \left(
        \begin{gathered}
          \textrm{if-fail}(\mathit{vfres}, v), \\ \textrm{if-fail}(\mathit{vfres}{\star}', \esequence{v}')  
        \end{gathered}
\right)
        & \keyword{otherwise}
  \end{cases}
\end{align*}}

\noindent{}The reconstruction relation
\highlightbox{$\reconstruct{v}{\esequence{v'}}{\mathit{rcres}}$} tries to update
the elements of a value, checking whether provided values are type correct and
defined (not $\blacksquare$), otherwise producing an error.
{\small\[
  \mathit{rcres} \Coloneqq \keyword{success} \; v \mid \keyword{error}
\]}

{\small
\[
  \inference[\textsc{RC-Val-Sucs}]{}{\reconstruct{\mathit{vb}}{\varepsilon}{\keyword{success} \; \mathit{vb}}}
\]

\[
  \inference[\textsc{RC-Val-Err}]{}{\reconstruct{\mathit{vb}}{v', \esequence{v''}}{\keyword{error}}}
\]

\[
  \inference[\textsc{RC-Cons-Sucs}]{\keyword{data} \; \mathit{at} = \dots ~|~
    k(\esequence{t}) ~|~ \dots & \esequence{\typing{v'}{t'}} & \esequence{v' \neq \blacksquare}  & \esequence{\subtyping{t'}{t}}}{\reconstruct{k(\esequence{v})}{\esequence{v'}}{\keyword{success}
      \; k(\esequence{v'})}}
\]

\[
  \inference[\textsc{RC-Cons-Err}]{\keyword{data} \; \mathit{at} = \dots ~|~
    k(\esequence{t}) ~|~ \dots &
    \esequence{\typing{v'}{t'}} & v_i = \blacksquare \vee \notsubtyping{t'_i}{t_i}}{\reconstruct{k(\esequence{v})}{\esequence{v'}}{\keyword{error}}}
\]

\[
  \inference[\textsc{RC-List-Sucs}]{\esequence{v' \neq \blacksquare}}{\reconstruct{[\esequence{v}]}{\esequence{v'}}{\keyword{success}
      \; [\esequence{v'}]}}
\]

\[
  \inference[\textsc{RC-List-Err}]{v'_i = \blacksquare}{\reconstruct{[\esequence{v}]}{\esequence{v'}}{\keyword{error}}}
\]

\[
  \inference[\textsc{RC-Set-Sucs}]{\esequence{v' = \blacksquare}}{\reconstruct{\{\esequence{v}\}}{\esequence{v'}}{\keyword{success}
      \; \{\esequence{v'}\}}}
\]

\[
  \inference[\textsc{RC-Set-Err}]{v'_i = \blacksquare}{\reconstruct{\{\esequence{v}\}}{\esequence{v'}}{\keyword{error}}}
\]

\[
  \inference[\textsc{RC-Map-Sucs}]{\esequence{v'' \neq \blacksquare} &
    \esequence{v''' \neq \blacksquare}}{\reconstruct{(\esequence{v : v'})}{
      \esequence{v''}, \esequence{v'''}
        }{\keyword{success}
      \; (\esequence{v'' : v'''})}}
\]

\[
  \inference[\textsc{RC-Map-Err}]{v''_i = \blacksquare \vee v'''_i = \blacksquare}{\reconstruct{(\esequence{v : v'})}{
      \esequence{v''}, \esequence{v'''}
    }{\keyword{error}}}
\]

\[
  \inference[\textsc{RC-Bot-Sucs}]{}{\reconstruct{\blacksquare}{\varepsilon}{\keyword{success}
      \; \blacksquare}}
\]

\[
  \inference[\textsc{RC-Bot-Err}]{}{\reconstruct{\blacksquare}{v', \esequence{v''}}{\keyword{error}}}
\]}

\paragraph{Enumeration}
The enumeration relation
\highlightbox{$\evaleach{e}{\esequence{\rho}}{\sigma}{\mathit{vres}}{\sigma'}$}
iterates over all provided bindings (\textsc{EE-More-Sucs}) until there are
none left (\textsc{EE-Emp}) or the result is neither an ordinary value or $\keyword{continue}$
from one of the iterations (\textsc{EE-More-Exc}); in case the result is $\keyword{break}$ the
evaluation will terminate early with a succesful result (\textsc{EE-More-Break}).

{\small\[
  \inference[\textsc{EE-Emp}]{}{\evaleach{e}{\varepsilon}{\sigma}{\keyword{success} \; \blacksquare}{\sigma}}
\]

\[
  \inference[\textsc{EE-More-Sucs}]{\evalexpr{e}{\sigma
      \rho}{\mathit{vres}}{\sigma''} & \mathit{vres} = \keyword{success} \; v \vee
    \mathit{vres} = \keyword{continue} \\ \evaleach{e}{\esequence{\rho'}}{\left(
        \sigma'' \setminus \mathrm{dom} \; \rho \right)}{\mathit{vres}'}{\sigma'}}{\evaleach{e}{\rho, \esequence{\rho'}}{\sigma}{\mathit{vres}'}{\sigma'}}
\]

\[
  \inference[\textsc{EE-More-Break}]{\evalexpr{e}{\sigma
      \rho}{\keyword{break}}{\sigma'}}{\evaleach{e}{\rho,
      \esequence{\rho'}}{\sigma}{\keyword{success} \; \blacksquare}{\left( \sigma'
        \setminus \mathrm{dom} \; \rho \right)}}
\]

\[
  \inference[\textsc{EE-More-Exc}]{\evalexpr{e}{\sigma
      \rho}{\mathit{exres}}{\sigma'} & \mathit{exres} \in \{\keyword{throw} \;
    v, \keyword{return} \; v, \keyword{fail},
    \keyword{error}\}}{\evaleach{e}{\rho,
      \esequence{\rho'}}{\sigma}{\mathit{exres}}{\left( \sigma' \setminus
        \mathrm{dom} \; \rho \right)}}
\]}

\paragraph{Generator expressions}
\sloppypar
The evaluation relation for
generator expressions has form \highlightbox{$\evalgenexpr{g}{\sigma}{\mathit{envres}}{\sigma'}$}.
For matching assignments the target right-hand side expression is evaluated
first---propagating possible exceptional results (\textsc{G-Pat-Exc}) and then the
value is matched against target pattern (\textsc{G-Pat-Sucs}).
% TO DO What is the scope of enum pattern assignment expressions?
{\small
\begin{alignat*}{3}
  \mathit{envres} &\Coloneqq \keyword{success} \; \vec{\rho} \mid \mathit{exres}
\end{alignat*}
\[
  \inference[\textsc{G-Pat-Sucs}]{\evalexpr{e}{\sigma}{\keyword{success} \;
      v}{\sigma'} & \match{p}{v}{\sigma'}{\vec{\rho}}}{\evalgenexpr{p \coloneqq
      e}{\sigma}{\keyword{success} \; \vec{\rho}}{\sigma'}}
\]
\[
  \inference[\textsc{G-Pat-Exc}]{\evalexpr{e}{\sigma}{\mathit{exres}}{\sigma'}}{\evalexpr{p \coloneqq e}{\sigma}{\mathit{exres}}{\sigma'}}
\]}

\noindent{}For enumerating assignments, each possible value in a collection is provided as a
possible binding to the range variable in the output
(\textsc{G-Enum-List}, \textsc{G-Enum-Set}); for maps in particular,
only the keys are bounds (\textsc{G-Enum-Map}). An error is raised if the result value is not a
collection (\textsc{G-Enum-Err}), and exceptions are propagated as always (\textsc{G-Enum-Exc}).

{\small
\[
  \inference[\textsc{G-Enum-List}]{\evalexpr{e}{\sigma}{\keyword{success} \;
      [v_1, \dots, v_{\mathrm{n}}]}{\sigma'}}{\evalgenexpr{x <-
      e}{\sigma}{\keyword{success} \; [x \mapsto v_1], \dots, [x \mapsto v_{\mathrm{n}}]}{\sigma'}}
\]

\[
  \inference[\textsc{G-Enum-Set}]{\evalexpr{e}{\sigma}{\keyword{success} \;
      \{v_1, \dots, v_{\mathrm{n}}\}}{\sigma'}}{\evalgenexpr{x <-
      e}{\sigma}{\keyword{success} \; [x \mapsto v_1], \dots, [x \mapsto v_{\mathrm{n}}]}{\sigma'}}
\]

\[
  \inference[\textsc{G-Enum-Map}]{\evalexpr{e}{\sigma}{\keyword{success} \;
     (v_1 : v_1', \dots, v_{\mathrm{n}} : v_{\mathrm{n}}')}{\sigma'}}{\evalgenexpr{x <-
      e}{\sigma}{\keyword{success} \; [x \mapsto v_1], \dots, [x \mapsto v_{\mathrm{n}}]}{\sigma'}}
\]

\[
  \inference[\textsc{G-Enum-Err}]{\evalexpr{e}{\sigma}{\keyword{success} \;
      v }{\sigma'} & v = \mathit{vb} \vee v = k(\esequence{v'}) \vee v = \blacksquare}{\evalgenexpr{x <-
     e}{\sigma}{\keyword{error}}{\sigma'}}
\]

\[
  \inference[\textsc{G-Enum-Exc}]{\evalexpr{e}{\sigma}{\mathit{exres}}{\sigma'}}{\evalgenexpr{x <-
      e}{\sigma}{\mathit{exres}}{\sigma'}}
\]}

\subsection*{Pattern Matching}
The pattern matching relation
\highlightbox{$\match{p}{v}{\sigma}{\esequence{\rho}}$} takes as input the current
store $\sigma$, a pattern $p$ and a target value $v$ and produces a sequence of compatible
environments that represent possible bindings of variables to values.
Pattern matching a basic value against target value produces a single
environment that does not bind any variable ($[]$) if the target value is the
same basic value (\textsc{P-Val-Sucs}) and otherwise does not produce any binding environment ($\varepsilon$), (\textsc{P-Val-Fail}).

{\small
\[
  \inference[\textsc{P-Val-Sucs}]{}{\match{\mathit{vb}}{\mathit{vb}}{\sigma}{[]}}
  \quad
  \inference[\textsc{P-Val-Fail}]{v \neq \mathit{vb}}{\match{\mathit{vb}}{v}{\sigma}{\varepsilon}}
\]}

\noindent{}Pattern matching against a variable depends on whether the variable already
exists in the current store. If it is assigned in the current store, then the
target value must the assigned value to return a possible binding
(\textsc{P-Var-Uni}) and otherwise failing with no bindings
(\textsc{P-Var-Fail}); if it is not in the current store, it will simply bind
the variable to the target value (\textsc{P-Var-Bind}).

{\small
\[
  \inference[\textsc{P-Var-Uni}]{x \in \textrm{dom } \sigma & v =
    \sigma(x)}{\match{x}{v}{\sigma}{[]}}
  \quad
  \inference[\textsc{P-Var-Fail}]{x \in \textrm{dom } \sigma & v \neq \sigma(x)}{\match{x}{v}{\sigma}{\varepsilon}}
\]

\[
  \inference[\textsc{P-Var-Bind}]{x \notin \textrm{dom } \sigma}{\match{x}{v}{\sigma}{[x \mapsto v]}}
\]}

\noindent{}When pattern matching against a constructor pattern, it is first checked whether
the target value has the same constructor. If it does, then the sub-patterns are
matched against the contained values of the target value, merging their resulting environments
(\textsc{P-Cons-Sucs}), and otherwise failing with no bindings
(\textsc{P-Cons-Fail}).
The merging procedure is described formally later in this section, but it
intuitively it takes the union of all bindings that have consistent assignments
to the same variables.

{\small\[
  \inference[\textsc{P-Cons-Sucs}]{\match{p_1}{v_1}{\sigma}{\esequence{\rho_1}} & \dots &
    \match{p_{\mathrm{n}}}{v_{\mathrm{n}}}{\sigma}{\esequence{\rho_{\mathrm{n}}}}}{\match{k(\esequence{p})}{k(\esequence{v})}{\sigma}{\textrm{merge}(\esequence{\rho_1},
      \dots, \esequence{\rho_{\mathrm{n}}})}}
\]

\[
  \inference[\textsc{P-Cons-Fail}]{v \neq k(\esequence{v'})}{\match{k(\esequence{p})}{v}{\sigma}{\varepsilon}}
\]}

\noindent{}When pattern matching against a typed labeled pattern, the target value is
checked to have a compatible type---failing with no bindings otherwise (\textsc{P-Type-Fail})---
and then then the inner pattern is matched
against the same value. The result of the sub-pattern match is merged with the
environment where the target value is bound to the label variable (\textsc{P-Type-Sucs}).
{\small\[
  \inference[\textsc{P-Type-Sucs}]{\typing{v}{t'} & \subtyping{t'}{t} &
    \match{p}{v}{\sigma}{\esequence{\rho}}}{\match{t \; x:
      p}{v}{\sigma}{\textrm{merge}([x \mapsto v], \esequence{\rho})}}
\]

\[
  \inference[\textsc{P-Type-Fail}]{\typing{v}{t'} &
    \notsubtyping{t'}{t}}{\match{t \; x: p}{v}{\sigma}{\varepsilon}}
\]}

\noindent{}Pattern matching against a list pattern first checks whether the target value
is a list---otherwise failing (\textsc{P-List-Fail})---and then pattern matches
against the sub-patterns returning their result (\textsc{P-List-Sucs}). 
{\small\[
  \inference[\textsc{P-List-Sucs}]{\matchall{\esequence{{{\star}p}}}{\esequence{v}}{\sigma}{\emptyset}{[] \quad
      ,}{\esequence{\rho}}}{\match{[\esequence{{{\star}p}}]}{[\esequence{v}]}{\sigma}{\esequence{\rho}}}
  \quad
  \inference[\textsc{P-List-Fail}]{v \neq [\esequence{v'}]}{\match{[\esequence{{{\star}p}}]}{v}{\sigma}{\varepsilon}}
\]}

\noindent{}Pattern matching against a set pattern is analogous to pattern matching against
list patterns (\textsc{P-Set-Sucs}, \textsc{P-Set-Fail}).
{\small
\[
  \inference[\textsc{P-Set-Sucs}]{\matchall{\esequence{{{\star}p}}}{\esequence{v}}{\sigma}{\emptyset}{\{\} \quad \uplus}{\esequence{\rho}}}{\match{\{\esequence{{{\star}p}}\}}{\{\esequence{v}\}}{\sigma}{\esequence{\rho}}}
  \quad
  \inference[\textsc{P-Set-Fail}]{v \neq \{\esequence{v'}\}}{\match{\{\esequence{p}\}}{v}{\sigma}{\varepsilon}}
\]}

\noindent{}Negation pattern $!p$ matching succeeds with no variables bound if the
sub-pattern $p$ produces no binding environment (\textsc{P-Neg-Sucs}), and
otherwise fails (\textsc{P-Neg-Fail}).
{\small
\[
  \inference[\textsc{P-Neg-Sucs}]{\match{p}{v}{\sigma}{\varepsilon}}{\match{!p}{v}{\sigma}{[]}}
  \quad
  \inference[\textsc{P-Neg-Fail}]{\match{p}{v}{\sigma}{\rho} & \rho \neq \varepsilon}{\match{!p}{v}{\sigma}{\varepsilon}}
\]}

\noindent{}Descendant pattern matching applies the sub-pattern against target value, and keeps
applying the deep matching pattern against the children values, concatenating
their results (\textsc{P-Deep}).
{\small
\[
  \inference[\textsc{P-Deep}]{\match{p}{v}{\sigma}{\esequence{\rho}} &
    v'_{1},\dots,v'_{\mathrm{n}} = \textrm{children}(v) \\
    \match{/p}{v'_1}{\sigma}{\esequence{\rho_1'}} & \dots &
    \match{/p}{v'_{\mathrm{n}}}{\sigma}{\esequence{\rho_{\mathrm{n}}'}}}{\match{/p}{v}{\sigma}{\esequence{\rho},
      \esequence{\rho_1'}, \dots, \esequence{\rho_{\mathrm{n}}'}}}
\]}

\noindent{}The star pattern matching relation has form
\highlightbox{$\matchall{\esequence{{{\star}p}}}{\esequence{v}}{\sigma}{\mathbb{V}}{\ttuple{}
    \quad \otimes}{\esequence{\rho}}$} which tries to match the sequence of
patterns $\esequence{{{\star}p}}$ on the left-hand side with sequence of
  provided values $\esequence{v}$ on the
right-hand side. The relations is parameterized over the construction function
($\ttuple{}$) and the partition relation ($\otimes$), and because matching
is non-deterministic we keep track of a set of value sequences $\mathbb{V}$ that
have already been tried for each element in the pattern sequence.

If both the pattern sequence and the value sequence is empty, then we have
successfully finished matching (\textsc{PL-Emp-Both}).
Otherwise if the pattern sequence finishes while there are still values left to match
(\textsc{PL-Emp-Pat}) we fail, producing no bindings.
{\small
\[
  \inference[\textsc{PL-Emp-Both}]{}{\matchall{\varepsilon}{\varepsilon}{\sigma}{\mathbb{V}}{\ttuple{} \quad \otimes}{[]}}
  \quad
  \inference[\textsc{PL-Emp-Pat}]{}{\matchall{\varepsilon}{v,\esequence{v'}}{\sigma}{\mathbb{V}}{\ttuple{} \quad \otimes}{\varepsilon}}
\]
}

\noindent{}The \textsc{PL-More-Pat-Re} rule tries to partition the input sequence of values
into a single $v'$ element that is matched against target pattern $p$, and a
sequence of elements $\esequence{v''}$ that are matched against the rest of the
pattern sequence $\esequence{{\star}p}$. Additionally the rule will try to
backtrack, excluding the selected value $v'$ from the values tried for pattern
$p$; in practice there is only one valid partition for elements of a list, but
for elements of sets there can be many due to commutativity of partitioning.
If all valid partitions have been tried, the $\textsc{PL-More-Pat-Exh}$ rule is
used to end the backtracking for the pattern $p$.

{\small
\[
  \inference[\textsc{PL-More-Pat-Re}]{
    \esequence{v} = v' \otimes \esequence{v''} & v' \notin \mathbb{V}
    & \match{p}{v'}{\sigma}{\esequence{\rho}} \\ 
    \matchall{\esequence{{{\star}p}}}{\esequence{v''}}{\sigma}{\emptyset}{\ttuple{}
      \quad \otimes}{\esequence{\rho'}} & \matchall{p,
      \esequence{{{\star}p}}}{\esequence{v}}{\sigma}{\mathbb{V} \cup \{v'\}}{\ttuple{}
      \quad \otimes}{\esequence{\rho''}}}{\matchall{p,
      \esequence{{{\star}p}}}{\esequence{v}}{\sigma}{\mathbb{V}}{\ttuple{}
      \quad \otimes}{\textrm{merge}(\esequence{\rho}, \esequence{\rho'}), \esequence{\rho''}}}
\]}
{\small
\[
  \inference[\textsc{PL-More-Pat-Exh}]{\nexists v', \esequence{v''}.
    \esequence{v} = v' \otimes \esequence{v''} \wedge v' \notin \mathbb{V} }{\matchall{p,
      \esequence{{{\star}p}}}{\esequence{v}}{\sigma}{\mathbb{V}}{\ttuple{}
      \quad \otimes}{\varepsilon}}
\]}

\noindent{}Like ordinary variables, arbitrary matching patterns depend on whether
the binding variable already exists in the current store.
If the variable is assigned in the current store then either there must
exist a partition of values that has a matching subcollection
(\textsc{PL-More-Star-Uni}), or the matching fails producing any consistent binding environment
(\textsc{PL-More-Star-Pat-Fail}, \textsc{PL-More-Star-Val-Fail}).

{\small
\[
  \inference[\textsc{PL-More-Star-Uni}]{x \in \textrm{dom } \sigma & \sigma(x) =
    \ttuple{\esequence{v'}} & \esequence{v} = \esequence{v'}
    \otimes \esequence{v''} \\
    \matchall{\esequence{{{\star}p}}}{\esequence{v''}}{\sigma}{\emptyset}{\ttuple{}
      \quad \otimes}{\esequence{\rho}}}{\matchall{{\star}x, \esequence{{{\star}p}}}{\esequence{v}}{\sigma}{\mathbb{V}}{\ttuple{} \quad \otimes}{\esequence{\rho}}}
\]

\[
  \inference[\textsc{PL-More-Star-Pat-Fail}]{x \in \textrm{dom } \sigma & \sigma(x) =
    \ttuple{\esequence{v'}} & \nexists \esequence{v''}. \esequence{v} = \esequence{v'}
    \otimes \esequence{v''}}{\matchall{{\star}x, \esequence{{{\star}p}}}{\esequence{v}}{\sigma}{\mathbb{V}}{\ttuple{} \quad \otimes}{\varepsilon}}
\]

\[
  \inference[\textsc{PL-More-Star-Val-Fail}]{x \in \textrm{dom } \sigma &
    \sigma(x) = v & v \neq
    \ttuple{\esequence{v'}}}{\matchall{{\star}x, \esequence{{{\star}p}}}{\esequence{v''}}{\sigma}{\mathbb{V}}{\ttuple{} \quad \otimes}{\varepsilon}}
\]}

\noindent{}If the variable is not in the current store then there are two options: either
\begin{enumerate*}[label=\roman*)]
\item there still exist a partition that is possible to try, or
\item we have exhausted all
possible partitions of the value sequence
\end{enumerate*}. In the first case, an arbitrary
partition is bound to the target variable and the rest of the patterns are
matched against the rest of the values, merging their results; additionally, the
other partitions are also tried concatenating their results with the merged one (\textsc{PL-More-Star-Re}).
In the exhausted case, the pattern match produces no bindings (\textsc{PL-More-Star-Exh}).

{\small\[
  \inference[\textsc{PL-More-Star-Re}]{x \notin \textrm{dom } \sigma &
    \esequence{v} = \esequence{v'} \otimes \esequence{v''} &
    \esequence{v'} \notin \mathbb{V} \\
    \matchall{\esequence{{{\star}p}}}{\esequence{v''}}{\sigma}{\emptyset}{\ttuple{} \quad
      \otimes}{\esequence{\rho}} &
    \matchall{{\star}x,
      \esequence{{{\star}p}}}{\esequence{v}}{\sigma}{\mathbb{V} \cup
      \{\esequence{v'}\}}{\ttuple{} \quad \otimes}{\esequence{\rho'}}
  }{\matchall{{\star}x,
      \esequence{{{\star}p}}}{\esequence{v}}{\sigma}{\mathbb{V}}{\ttuple{}
      \quad \otimes}{\textrm{merge}([x \mapsto \ttuple{\esequence{v'}}], \esequence{\rho}), \esequence{\rho'}}}
\]

\[
  \inference[\textsc{PL-More-Star-Exh}]{x \notin \textrm{dom } \sigma &
    \nexists \esequence{v'}, \esequence{v''}. \esequence{v} =
    \esequence{v'} \otimes \esequence{v''} \wedge \esequence{v'}
    \notin \mathbb{V} }{\matchall{{\star}x, \esequence{{{\star}p}}}
    {\esequence{v}}{\sigma}{\mathbb{V}}{\ttuple{} \quad \otimes}{\varepsilon}}
\]}

\noindent{}Merging a sequence of possible variable bindings produces a sequence containing
consistent variable bindings from the sequence: that is, all possible
environments that assign consistent values to the same variables are merged.
{\small\begin{align*}
  \textrm{merge}(\varepsilon) &= [] \\
  \textrm{merge}(\esequence{\rho}, \esequence{\rho_1'}, \dots, \esequence{\rho_{\mathrm{n}}'}) &= \textrm{merge-pairs}(\esequence{\rho} \times \textrm{merge}(\esequence{\rho_1'}, \dots, \esequence{\rho_{\mathrm{n}}'}))
\end{align*}%
\begin{align*}
  \textrm{merge-pairs}(\ttuple{\rho_1, \rho_1'}, \dots, \ttuple{\rho_{\mathrm{n}}, \rho_{\mathrm{n}}'}) &= \textrm{merge-pair}(\rho_1, \rho_1'), \dots, \textrm{merge-pair}(\rho_{\mathrm{n}}, \rho_{\mathrm{n}}')
\end{align*}%
\begin{align*}
  \textrm{merge-pair}(\rho, \rho') &=
    \begin{cases}
    \rho \rho' & \textbf{if } \forall x \in \textrm{dom } \; \rho \cap \textrm{dom } \; \rho'. \rho(x) = \rho'(x) \\
    \varepsilon & \textbf{otherwise}
    \end{cases}
\end{align*}}

\section{Semantics Properties}
\label{sec:theorems}

Backtracking is pure in Rascal (Light) programs, and so if evaluating a set
cases produces $\keyword{fail}$ as result, the initial state is restored.

\begin{restatable}[Backtracking purity]{theorem}{backtrackingpurity}
  If $\begin{array}{c} \mathcal{C\!S} \\
        \evalcases{\esequence{\mathit{cs}}}{v}{\sigma}{\keyword{fail}}{\sigma'} \end{array}$ then $\sigma' = \sigma$
\end{restatable}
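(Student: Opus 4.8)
The plan is to proceed by induction on the derivation of the judgment $\evalcases{\esequence{\mathit{cs}}}{v}{\sigma}{\keyword{fail}}{\sigma'}$. Only three rules define the cases relation, so first I would determine which of them can conclude with a $\keyword{fail}$ result. The rule \textsc{ECS-More-Ord} is immediately excluded: its conclusion carries a result $\mathit{vres}$ constrained by the side condition $\mathit{vres} \neq \keyword{fail}$, so it cannot have produced our $\keyword{fail}$ judgment. This leaves exactly two cases to handle.

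For the base case \textsc{ECS-Emp}, the conclusion is literally $\evalcases{\varepsilon}{v}{\sigma}{\keyword{fail}}{\sigma}$, so the resulting store is syntactically $\sigma$ and the claim $\sigma' = \sigma$ holds with nothing to prove.

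For the inductive case \textsc{ECS-More-Fail}, the essential observation is that the store $\sigma'$ in the conclusion is produced entirely by the recursive premise $\evalcases{\esequence{\mathit{cs}}}{v}{\sigma}{\mathit{vres}}{\sigma'}$, which restarts the evaluation of the remaining cases from the \emph{original} store $\sigma$; the intermediate store $\sigma''$ delivered by the failed single-case premise $\evalcase{\esequence{\rho}}{e}{\sigma}{\keyword{fail}}{\sigma''}$ is simply discarded. Since we assume the overall result is $\keyword{fail}$, matching the result component of the conclusion forces $\mathit{vres} = \keyword{fail}$, so the recursive premise is itself a $\keyword{fail}$-derivation, now over the strictly shorter case sequence $\esequence{\mathit{cs}}$. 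Applying the induction hypothesis to this subderivation yields $\sigma' = \sigma$, as required.

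The reason there is essentially no obstacle is that store restoration is built structurally into the rules, rather than requiring any auxiliary purity lemma about the single-case relation $\evalcase$ or the pattern-matching relation: the only rule that yields $\keyword{fail}$ from a nonempty case list feeds the unmodified $\sigma$ into its recursive call. Consequently, once \textsc{ECS-More-Ord} is ruled out, the argument is an immediate induction requiring no nontrivial calculation. The single point demanding any care is verifying that the recursive cases premise in \textsc{ECS-More-Fail} indeed threads $\sigma$ and not $\sigma''$, which is exactly what the rule as stated does.
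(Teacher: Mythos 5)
Your proposal is correct and follows essentially the same route as the paper's own proof: induction on the derivation, the trivial \textsc{ECS-Emp} base case, the observation that \textsc{ECS-More-Ord} is excluded by its $\mathit{vres} \neq \keyword{fail}$ side condition, and the application of the induction hypothesis to the recursive premise of \textsc{ECS-More-Fail}, which is threaded from the original store $\sigma$. Your added remark that no auxiliary purity lemma about $\evalcase{\esequence{\rho}}{e}{\sigma}{\mathit{vres}}{\sigma'}$ or pattern matching is needed—because store restoration is built structurally into the rule—is a fair and accurate gloss on why the paper's proof is so short.
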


\noindent{}Strong typing is an important safety property of Rascal, which I capture by
specifying a theorems with two result properties: one about the well-typedness of state, and the other
about well-typedness of resulting values.

\begin{restatable}[Strong typing]{theorem}{strongtyping}
\label{thm:strongtyping}
Assume that semantic unary $\sem{\ominus}$ and binary operators $\sem{\oplus}$
  are strongly typed.
  If $\begin{array}{c} \mathcal{E} \\
\evalexpr{e}{\sigma}{\mathit{vres}}{\sigma'} \end{array}$ and there
    exists a type $t$ such that
  $ \begin{array}{c} \mathcal{T} \\ \typing{v}{t} \end{array} $ for each value in the input store $v \in
  \mathrm{img} \; \sigma$, then
  \begin{enumerate}
  \item There exists a type $t'$ such that
  $\typing{v'}{t'}$ for each value in the result store $v' \in \mathrm{img} \; \sigma'$.
\item If the result value $\mathit{vres}$ is either $\keyword{success} \; v''$, $\keyword{return}
  \; v''$, or $\keyword{throw}
  \; v''$, then there exists a type $t''$ such that
  $\typing{v''}{t''}$.
  \end{enumerate}
\end{restatable}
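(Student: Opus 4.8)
The plan is to prove a strengthened statement by simultaneous (mutual) induction on the \emph{height} of the derivation, covering not only the expression judgment $\evalexpr{e}{\sigma}{\mathit{vres}}{\sigma'}$ but every mutually recursive judgment it invokes: expression sequences, the $\keyword{cases}$/$\keyword{case}$ relations, the top-down and bottom-up traversal relations (and their starred variants), $\keyword{each}$, generator expressions, and pattern matching $\match{p}{v}{\sigma}{\esequence{\rho}}$ together with its starred form. Call a store (resp.\ environment) \emph{well-typed}, written $\vdash\sigma$, when every value in its image is typeable, i.e.\ admits some $t$ with $\typing{v}{t}$. The invariant I carry is: (i) a well-typed input store yields a well-typed output store; (ii) whenever a $\keyword{success}$/$\keyword{return}$/$\keyword{throw}$ result carries a value, that value is typeable, and for the starred sequence and traversal judgments every value in a success sequence is typeable; for the matching and generator judgments each produced environment is well-typed, \emph{given} that the scrutinee value is typeable; and the judgments that consume environments ($\keyword{case}$, $\keyword{each}$) additionally take well-typedness of their input environments as a hypothesis, to be discharged at each call site.

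Before the induction I would isolate a few reusable facts. First, \emph{typeability is closed under children}: if $v$ is typeable then every element of $\textrm{children}(v)$ is typeable, by inversion on \textsc{T-Constructor}, \textsc{T-List}, \textsc{T-Set}, \textsc{T-Map}, each of which is derivable only when the contained values are themselves typed. Second, \emph{any list or set built from typeable values is typeable}, via \textsc{T-List}/\textsc{T-Set} and the totality of $\bigsqcup$ (which always returns a type, falling back to $\rascaltype{value}$). Third, a \emph{reconstruction lemma}: if $\reconstruct{v}{\esequence{v'}}{\keyword{success}\;v''}$ and every $v_i'$ is typeable, then $v''$ is typeable, by cases on the reconstruction rules, where \textsc{RC-Cons-Sucs} re-checks the declared field subtyping so \textsc{T-Constructor} applies, and the collection rules appeal to the second fact. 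Finally I invoke the hypothesis that $\sem{\ominus}$ and $\sem{\oplus}$ are strongly typed (their success results are typeable), and note that every $\keyword{throw}$ value introduced is typeable: by the induction hypothesis in \textsc{E-Thr-Sucs}, and for \textsc{E-Lookup-NoKey} because $\mathit{nokey}$'s field has type $\rascaltype{value}$, which accepts any typeable index value.

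With these in hand the induction is largely bookkeeping—threading $\vdash\sigma$ through the stores of the premises and reading part (ii) off the induction hypothesis. The cases worth flagging are: \textsc{E-Asgn-Sucs}, where $\sigma'[x\mapsto v]$ stays well-typed because $v$ is typeable by the hypothesis; \textsc{E-Call-Sucs}, where I must first show the fresh call store $[\esequence{y\mapsto\sigma''(y)},\esequence{x\mapsto v}]$ is well-typed (globals drawn from the well-typed $\sigma''$, parameters from the typeable argument values), then apply the hypothesis to the body and observe that writing the resulting globals back into $\sigma''$ preserves well-typedness; \textsc{E-Lookup-Sucs} and \textsc{E-Update-Sucs}, which combine children-closure with \textsc{T-Map}; the traversal rules \textsc{ETV-Ord-Sucs2} and \textsc{EBU-No-Break-Sucs}, which use children-closure to feed the starred recursion typeable inputs and then the reconstruction lemma; and the matching rules, where every binding is the scrutinee itself, a child of it, or a sub-collection of its children—all typeable—and $\textrm{merge}$ only ever re-uses existing binding values. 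The block, $\keyword{try}$-$\keyword{catch}$, $\keyword{each}$ and $\keyword{case}$ rules merely restrict the store by removing variables (trivially preserving well-typedness) while binding freshly matched or thrown values that are typeable by the hypothesis.

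The main obstacle is not a single deep step but the interaction of the fixed-point and backtracking rules with the choice of induction measure. Rules such as \textsc{E-Solve-Neq}, \textsc{EV-IM-Neq}/\textsc{EV-OM-Neq}, \textsc{E-While-True-Sucs}, and especially the starred-match rule \textsc{PL-More-Pat-Re}—which recurses on the \emph{same} pattern and value sequence with only an enlarged exclusion set $\mathbb{V}$—re-invoke the very same judgment and are not decreasing in any syntactic argument; this forces induction on derivation height (each recursive invocation being a strict sub-derivation) rather than structural induction on expressions or values. I would finally double-check the one place the otherwise-weak ``some type exists'' invariant could break, namely constructor formation: the rules that build a constructor value, \textsc{E-Cons-Sucs} and \textsc{RC-Cons-Sucs}, both verify $\esequence{\subtyping{t'}{t}}$ against the declaration (and the only other constructor introduced, the $\mathit{nokey}$ value already discussed, is well-typed for free), so \textsc{T-Constructor} always applies and no ill-typed constructor can ever enter a store or result—which is exactly what makes the invariant strong enough to be preserved.
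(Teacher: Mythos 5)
Your proposal is correct and follows essentially the same route as the paper: a mutual induction over the derivations of all the interlocking judgments (expression sequences, cases, traversals, each, generators, pattern matching), supported by the same auxiliary facts the paper isolates as lemmas (typeability of children, of reconstructed values, of merged environments, and strong typing of the semantic operators), with the same key cases flagged (\textsc{E-Call-Sucs}, \textsc{E-Asgn-Sucs}, the traversal and matching rules). The only difference is organizational—you package the invariant as one strengthened statement proved by induction on derivation height, where the paper distributes it across a family of mutually inductive lemmas (one per judgment)—and your observation that non-structurally-decreasing rules such as \textsc{E-Solve-Neq} and \textsc{PL-More-Pat-Re} force induction on derivations rather than on syntax is precisely why the paper's proof is phrased that way.
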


\noindent{}Consider an augmented version of the operational semantics where each execution
relation is annotated with \textit{partiality fuel}\,\citep{DBLP:conf/popl/AminR17}---represented by a
superscript natural number $n$---which specifies the maximal number of recursive premises
allowed in the derivation. The fuel is subtracted on each recursion,
resulting in a $\keyword{timeout}$ value when it reaches zero, and the rule set is
amended by congruence rules which propagate $\keyword{timeout}$ results from
the recursive premises to the conclusion.
{\small\begin{align*}
  \mathit{vtres} &\Coloneqq \mathit{vres} \mid \keyword{timeout}
\end{align*}}

\noindent{}For this version of the semantics, we can specify the property that
execution will either produce a result or it will timeout; that is, the semantics does
not get stuck.

\begin{restatable}[Partial progress]{theorem}{partialprogress}
   \label{thm:partialprogress}
   It is possible to construct a derivation
   $\evalexprfuel{e}{\sigma}{\mathit{vtres}}{\sigma'}{n}$ for
    any input expression $e$, well-typed store $\sigma$ and fuel $n$.
\end{restatable}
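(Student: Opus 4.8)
The plan is to prove a stronger, simultaneous statement covering every judgment of the fuel-annotated semantics at once --- the expression judgment $\evalexprfuel{e}{\sigma}{\mathit{vtres}}{\sigma'}{n}$ together with its mutually recursive companions for expression sequences, cases, single cases, enumeration, generator expressions, and the six traversal relations --- since the rule for one form invariably has premises drawn from the others. The only measure that decreases uniformly across all of these is the fuel: by construction every recursive premise of an amended rule carries fuel strictly smaller than the $n$ on its conclusion (the budget is subtracted on each recursion), whereas the auxiliary premises ($\textrm{children}$, $\textrm{if-fail}$, $\textrm{vcombine}$, $\textrm{merge}$, reconstruction, value typing, and the subtyping test) involve no recursion in the main relations. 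I would therefore proceed by strong induction on $n$, stating the induction hypothesis generically over all judgment forms and all inputs. I would also generalise by dropping the well-typedness hypothesis on $\sigma$, since progress needs no store invariant (no rule premise inspects the typing of stored values, only of freshly produced ones); the stated theorem then follows as a special case, and the induction is freed from having to thread a store-typing invariant.

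For the inductive step I would case on the syntactic shape of the configuration (the form of $e$, or of the value/pattern/sequence for the auxiliary judgments). When the matching rules have no recursive premises (\textsc{E-Val}, \textsc{E-Var-Sucs}/\textsc{E-Var-Err}, the control operators, \textsc{ECS-Emp}, and the like) one of them applies outright, its side conditions being decidable; this already handles the $n = 0$ base case for non-recursive forms, while for recursive forms at $n = 0$ the amended timeout congruence rule supplies a derivation producing $\keyword{timeout}$. When $n > 0$ and recursive premises are present, each such premise carries fuel strictly smaller than $n$, so the induction hypothesis furnishes a derivation producing some result. I would then case-analyse the shapes of those results --- $\keyword{success} \; v$ with $v$ of each possible form, each $\mathit{exres}$, and $\keyword{timeout}$ --- and for every combination exhibit a concluding rule, relying on the fact that the rule families are designed to be exhaustive over result shapes (for a binary operation, say, \textsc{E-Bin-Sucs} when both operands succeed, \textsc{E-Bin-Exc1}/\textsc{E-Bin-Exc2} for exceptional results, and a timeout congruence rule otherwise).

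Two supporting facts need to be established first. The non-structural recursions --- the $\keyword{while}$- and $\keyword{solve}$-loops and the $\keyword{innermost}$/$\keyword{outermost}$ fixed-point traversals, which recurse on values or whole expressions of no smaller size --- are handled entirely by the fuel decrease, and are indeed the reason the conclusion is \emph{partial} progress (yielding $\keyword{timeout}$) rather than genuine termination. The auxiliary relations must be shown total so their premises can always be discharged: $\textrm{children}$, $\textrm{if-fail}$ and $\textrm{vcombine}$ are total by their defining case splits; reconstruction always yields $\keyword{success}$ or $\keyword{error}$ (the \textsc{RC-*-Sucs}/\textsc{RC-*-Err} pairs partition every input); every value is typeable and subtyping is decidable, so the type side conditions in \textsc{E-Cons}, \textsc{E-Asgn}, \textsc{E-Call} and elsewhere always resolve to either the success or the error rule; and pattern matching always produces some (possibly empty) environment sequence. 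Pattern-matching totality I would prove by a separate well-founded induction --- structural on patterns and values for the ordinary and descendant cases, and, for the star-pattern backtracking of \textsc{PL-More-Pat-Re} and \textsc{PL-More-Star-Re}, ordered by the shrinking value sequence together with the set $\mathbb{V}$ of already-tried partitions, which strictly grows inside a finite universe.

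The main obstacle is the exhaustiveness bookkeeping of the second paragraph: verifying, form by form, that no combination of sub-result shapes is left without an applicable rule, so that the semantics is never stuck. This is most delicate for the traversal judgments, where the choice of rule depends simultaneously on the break strategy $\mathit{br}$, on whether a sub-result is $\keyword{fail}$ or a genuine value, and on the outcome of reconstruction; the interaction of \textsc{ETV-Ord-Sucs1}/\textsc{ETV-Ord-Sucs2}, \textsc{ETVS-Break}/\textsc{ETVS-More}, and their bottom-up analogues with $\textrm{vcombine}$ and $\textrm{if-fail}$ must be checked to cover every case, and similarly for the $\keyword{fail}$-driven backtracking in \textsc{ECS-More-Fail}/\textsc{ECS-More-Ord} and \textsc{EC-More-Fail}/\textsc{EC-More-Ord}. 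Once coverage is confirmed for each family, the induction closes uniformly.
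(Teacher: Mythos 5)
Your overall architecture---a simultaneous induction on the fuel across all the mutually recursive judgment forms, with separate totality lemmas for pattern matching, reconstruction and the auxiliary functions, and with the star-pattern backtracking handled by an inner induction on the untried partitions---is essentially the paper's proof. But your proposed ``generalisation'' of dropping the well-typedness hypothesis on $\sigma$ is a genuine error, and it rests on a false supporting claim. Not every value is typeable in this system: $\typing{v}{t}$ is an inductively defined judgment, and a constructor value whose arguments violate the declared field types---e.g.\ $\auxiliary{succ}(\mathtt{``foo"})$ for $\keyword{data} \; \auxiliary{Nat} = \auxiliary{zero}() \mid \auxiliary{succ}(\auxiliary{Nat})$---has no type, since \textsc{T-Constructor} is the only candidate rule and its subtyping premise fails. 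Consequently the rule pairs \textsc{E-Cons-Sucs}/\textsc{E-Cons-Err}, \textsc{E-Asgn-Sucs}/\textsc{E-Asgn-Err}, \textsc{E-Call-Sucs}/\textsc{E-Call-Arg-Err}, and \textsc{P-Type-Sucs}/\textsc{P-Type-Fail} are \emph{not} exhaustive over arbitrary values: every rule in each pair, including the error rule, demands a typing derivation $\typing{v}{t'}$ for the value in hand, so an untypeable value leaves the semantics genuinely stuck, and the timeout congruence rules cannot rescue the derivation since they only propagate $\keyword{timeout}$ arising from recursive premises.

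This is exactly where the store invariant earns its keep, contrary to your claim that no rule inspects the typing of stored values: a value reaching one of these side conditions may come straight out of the store via \textsc{E-Var-Sucs} (consider $k(x)$), so typeability of ``freshly produced'' values reduces to typeability of stored ones. The paper therefore keeps the hypothesis and interleaves the progress induction with the strong typing theorem: after each recursive evaluation it invokes \Cref{thm:strongtyping} (or \Cref{lem:evalstartyping}) to conclude that the intermediate store and produced values are well-typed, which is precisely what licenses choosing between the success and error rules at the next step, and what lets the pattern-matching progress lemma assume a well-typed scrutinee in the $t \; x : p$ case. Your proof needs the same threading; with the hypothesis dropped, the cases $e = k(\esequence{e'})$, $e = (x = e')$, $e = f(\esequence{e'})$ and the typed-pattern case of matching cannot be completed.
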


\noindent{}Finally, consider a subset of the expression language described by syntactic category $e_{\mathrm{fin}}$, where
$\keyword{while}$-loops, $\keyword{solve}$-loops and function calls are disallowed, and similarly with
all traversal strategies except $\keyword{bottom-up}$ and $\keyword{bottom-up-break}$.
This subset is known to be terminating:

\begin{restatable}[Terminating expressions]{theorem}{termination}
   \label{thm:termination}
  There exists $n$ such that derivation
  $\begin{array}{c} \mathcal{E}
     \\ \evalexprfuel{e_{\mathrm{fin}}}{\sigma}{\mathit{vres}}{\sigma'}{n} \end{array}$ has a result $\mathit{vres}$ which is not $\keyword{timeout}$ for expression  $e_{\mathrm{fin}}$ in the terminating subset.
\end{restatable}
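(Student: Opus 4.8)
The plan is to prove termination by a well-founded induction that, for each $e_{\mathrm{fin}}$ and store $\sigma$, exhibits a concrete fuel budget sufficient to complete a derivation without hitting $\keyword{timeout}$. Since \Cref{thm:partialprogress} already guarantees that \emph{some} fuelled derivation exists for every $n$, it suffices to bound the recursion depth: I will define a well-founded measure on evaluation configurations, show that every recursive premise of every rule available in the terminating subset has strictly smaller measure than its conclusion, and then read off $n$ as one more than the maximum of the fuels supplied for the finitely many recursive premises, i.e.\ the height of the finite derivation the induction constructs. The induction hypothesis must be universally quantified over stores, because premises are evaluated in stores that change arbitrarily; only the structural components of the measure decrease.

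The measure is naturally lexicographic, $\bigl(\lvert e\rvert, \lvert v\rvert\bigr)$, organised as a nested induction: a primary induction on the structural size $\lvert e\rvert$ of the expression, and a secondary induction for the two constructs that recurse on runtime data rather than on syntax. For almost all rules the recursive premises evaluate strict sub-expressions (operands, branches, case bodies, the scrutinee of a $\keyword{switch}$ or $\keyword{visit}$, the generator of a $\keyword{for}$), so the primary hypothesis applies directly; the sequencing judgments $\mathrm{expr\star}$, $\mathrm{cases}$, $\mathrm{case}$, and $\mathrm{each}$ additionally recurse on strictly shorter sequences of expressions, cases, or bindings, handled by a subsidiary induction on those finite lengths that bottoms out in sub-expression evaluations covered by the primary hypothesis. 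Crucially, pattern matching ($\mathrm{match}$, $\mathrm{match\star}$) and reconstruction never invoke expression evaluation and terminate on their own — descendant patterns recurse on $\textrm{children}$, which are strictly smaller values, and star-pattern backtracking is bounded by the monotonically growing set $\mathbb{V}$ of already-tried partitions over a finite candidate space — so they can be treated as terminating oracles and kept out of the measure entirely.

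The two genuinely non-structural recursions are the $\keyword{for}$-loop and the $\keyword{bottom-up}$ traversal. For $\keyword{for}\;g\;e$ the generator yields a \emph{finite} sequence of bindings $\esequence{\rho}$, and $\mathrm{each}$ iterates the body $e$ — a strict sub-expression — once per binding; an inner induction on the length of $\esequence{\rho}$ discharges the iteration while the body evaluations fall to the primary hypothesis. For $\keyword{bottom-up}$ visit of a value $v$ with cases $\esequence{\mathit{cs}}$ I run an inner induction on $\lvert v\rvert$: the traversal first recurses on the elements of $\textrm{children}(v)$, each strictly smaller than $v$, so the inner hypothesis applies; only \emph{after} reconstruction does it evaluate the cases \emph{once} on the rebuilt value, and because the case bodies are strict sub-expressions of the visit expression their evaluation is governed by the primary hypothesis — no matter how large the rebuilt value happens to be. This last point is the crux of the argument.

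It is also exactly where the restriction to $\keyword{bottom-up}$ and $\keyword{bottom-up-break}$ is indispensable: a $\keyword{top-down}$ traversal applies the cases to the scrutinee \emph{first}, possibly replacing it by a larger value, and only then recurses into the children of that replacement, so the value-size measure need not decrease and the inner induction collapses; likewise $\keyword{innermost}$ and $\keyword{outermost}$ iterate to a fixed point with no a~priori bound, and $\keyword{while}$, $\keyword{solve}$, and function calls reintroduce unbounded recursion — which is precisely why all of these are excluded from $e_{\mathrm{fin}}$. The main obstacle in writing up the argument is therefore packaging the nested inductions cleanly: I must make the two inner inductions (on binding-count and on $\lvert v\rvert$) formally subordinate to the primary induction on $\lvert e\rvert$ while letting the case-body evaluations \emph{reset} the value component yet strictly decrease the expression component, and I must verify rule by rule that no remaining premise secretly recurses on a non-decreasing configuration. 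Well-typedness of intermediate stores, if one wishes to carry it as in \Cref{thm:partialprogress}, is preserved by \Cref{thm:strongtyping}, though termination does not essentially rely on typing. Once the measure is shown to strictly decrease everywhere, the finite derivation height gives the required $n$.
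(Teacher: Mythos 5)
Your proposal is correct and takes essentially the same route as the paper: the paper's proof likewise replaces the fuel induction of \Cref{thm:partialprogress} by induction on the syntax of $e_{\mathrm{fin}}$, adds an inner well-founded induction on the $\prec$ (value-containment) relation precisely for the $\keyword{bottom-up}$ visit rules, and takes the witness fuel to be $1$ plus the maximal fuel used in the sub-derivations. Your lexicographic packaging $\bigl(\lvert e\rvert,\lvert v\rvert\bigr)$, with case bodies falling back to the expression component while the value component resets, is exactly the content of the paper's inner induction, just spelled out in more detail.
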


\begin{remark}
  Why is the $\keyword{top-down}$ traversal strategy potentially non-terminating
  while the $\keyword{bottom-up}$ traversal strategy is terminating? The answer
  lies in that the $\keyword{top-down}$ traverses the children of the target
  expression after evaluating the required case, which makes it possible to keep
  add children ad infinitum as illustrated by the following example:
  \begin{minipage}{1.0\linewidth}
  \begin{lstlisting}[language=Rascal]
    data Nat = zero() | succ(Nat pred);

    Nat infincrement(Nat n) = 
      top-down visit(n) {
         case succ(n) => succ(succ(n))
      }
  \end{lstlisting}
  \end{minipage}
\end{remark}

\section{Formal Semantics, Types and Intented Semantics}
This formal specification of Rascal Light was largely performed when the
official type checker of
Rascal (developed by CWI Amsterdam) was only at the experimental stage. Useful extension of this
formalization include a deductive type system that includes
polymorphic aspects of Rascal to prove the consistency of the type system, and
type safety with regards to the dynamic semantics provided in this formalization.

The presented semantics was checked against the Rascal implementation in various
ways: the existing implementation was
used as a reference point, rules for individual were held up
against the documentation, and correspondence with the Rascal developers was used
to clarify remaining questions. 
A more formal way to check whether the captured semantics is the
intended one, is to construct an independent formal semantics which is
related to the natural semantics presented in this report. This could be an axiomatic semantics\,\cite{DBLP:journals/cacm/Hoare69}, which modularly specifies for each construct
the intented effects using logical formulae as pre and post-conditions, without
necessarily specifying how each construct is evaluated.
The natural semantics would then be checked to satisfy the axiomatic semantics
for each construct, which will further increase confidence that the captured
semantics is the intented one.

\section{Conclusion}
I presented the formalization of a large subset of the operational part of
Rascal\,\cite{DBLP:conf/scam/KlintSV09,Klint2011}, called Rascal Light.
The formalization was primarily based on the available open source
implementation\footnote{https://github.com/usethesource/rascal} and the
accompanying documentation\footnote{http://tutor.rascal-mpl.org/}, and personal
correspondence with the developers further clarified previous ambiguities and mismatches.

\section*{Acknowledgments}
 This report is partially funded by by Danish Council for Independent Research,
 grant no. 0602-02327B. I would like to thank Paul Klint and the Rascal
 team at CWI Amsterdam for answering my questions about Rascal and provide
 corrections to this report. I would also like to thank Aleksandar Dimovski and
 Andrzej W\k{a}sowski for
 proofreading and commenting on this report.

\bibliography{main.bib}

\appendix

\section{Semantics Properties Proofs}
\label{sec:proofs}

\backtrackingpurity*
\begin{proof}
  By induction on the derivation $\mathcal{C\!S}$:
  \begin{itemize}
  \item Case $\mathcal{C\!S} =
    {\footnotesize\inference[\textsc{ECS-Emp}]{}{\evalcases{\varepsilon}{v}{\sigma}{\keyword{fail}}{\sigma}}}$,
    so $\esequence{\mathit{cs}} = \varepsilon$ and $\sigma' = \sigma$. Holds by definition.
  \item Case $\mathcal{C\!S} =
    {\footnotesize\inference[\textsc{ECS-More-Fail}]{
    {\begin{array}{c}\mathcal{M} \\ \match{p}{v}{\sigma}{\esequence{\rho}}\end{array}}
    & {\begin{array}{c}\mathcal{C} \\ \evalcase{\esequence{\rho}}{e}{\sigma}{\keyword{fail}}{\sigma''}\end{array}} \\
    {\begin{array}{c} \mathcal{C\!S}'   \\ \evalcases{\esequence{\mathit{cs}'}}{v}{\sigma}{\keyword{fail}}{\sigma'} \end{array}}
  }{\evalcases{\keyword{case} \;
      p => e, \esequence{\mathit{cs}'}}{v}{\sigma}{\keyword{fail}}{\sigma'}}}$,
    so $\esequence{\mathit{cs}} = (\keyword{case} \; p => e,
    \esequence{\mathit{cs}'})$ and $\mathit{vres} = \keyword{fail}$. 
    \begin{itemize}
    \item By inductive hypothesis of $\mathcal{C\!S}'$ we get $\sigma' = \sigma$.
    \end{itemize}
  \item Note: the rule \textsc{ECS-More-Ord} is inapplicable since its premise
    states that the result value  $\mathit{vres} \neq \keyword{fail}$.
  \end{itemize}
\end{proof}

In order to prove \Cref{thm:strongtyping}, we need to state some helper lemmas
about sub-derivations.

We have a lemma for the auxiliary $\textrm{merge}$ function:
\begin{lemma}
  \label{lem:mergetyping}
  If we have $\esequence{\rho'} = \textrm{\textnormal{merge}}(\esequence{\rho_1}, \dots, \esequence{\rho_{\mathrm{n}}})$ and for each value $v \in \mathrm{img} \; \rho_{i,j}$ in an
  environment in the input sequence of
  environment sequences $\esequence{\rho_1}, \dots, \esequence{\rho_{\mathrm{n}}}$ there exists a type $t$ so that
  $\typing{v}{t}$, then we have that for each value $v' \in \mathrm{img} \; \rho'_k$ in an environment
  in the resulting environment sequence $\esequence{\rho'}$ we have a type $t'$
  so that $\typing{v'}{t'}$
\end{lemma}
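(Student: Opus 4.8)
The plan is to proceed by induction on the number $n$ of environment sequences passed to $\textrm{merge}$, following its recursive definition. The guiding observation is that $\textrm{merge}$ never fabricates new values: every value occurring in an output environment already occurs in one of the input environments, so well-typedness is simply inherited from the hypothesis. Consequently the whole argument reduces to tracking images through the three layers $\textrm{merge}$, $\textrm{merge-pairs}$, and $\textrm{merge-pair}$, and no reasoning about subtyping or consistency of bindings is actually needed.

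First I would establish the core sub-observation at the level of $\textrm{merge-pair}$: for any two environments $\rho, \rho'$, every value occurring in an environment of $\textrm{merge-pair}(\rho,\rho')$ lies in $\mathrm{img}\;\rho \cup \mathrm{img}\;\rho'$. This splits on the two cases of the definition. In the consistent case the result is the single environment $\rho\rho'$, which assigns to each variable either $\rho(x)$ or $\rho'(x)$, so $\mathrm{img}(\rho\rho') \subseteq \mathrm{img}\;\rho \cup \mathrm{img}\;\rho'$. In the inconsistent case the result is $\varepsilon$, which contains no environment and so satisfies the property vacuously. Lifting this pointwise, $\textrm{merge-pairs}$ applied to a sequence of pairs $\ttuple{\rho_i,\rho'_i}$ yields environments whose values all lie in $\bigcup_i(\mathrm{img}\;\rho_i \cup \mathrm{img}\;\rho'_i)$.

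For the base case $\textrm{merge}(\varepsilon) = []$, the sole output environment is empty, its image is empty, and the conclusion holds trivially. For the inductive step we have
\[
  \textrm{merge}(\esequence{\rho}, \esequence{\rho_1'}, \dots, \esequence{\rho_{\mathrm{n}}'}) = \textrm{merge-pairs}\bigl(\esequence{\rho} \times \textrm{merge}(\esequence{\rho_1'},\dots,\esequence{\rho_{\mathrm{n}}'})\bigr).
\]
By the induction hypothesis, every value in every environment produced by the recursive call $\textrm{merge}(\esequence{\rho_1'},\dots,\esequence{\rho_{\mathrm{n}}'})$ is well-typed. Every pair fed to $\textrm{merge-pairs}$ by the Cartesian product has its first component drawn from $\esequence{\rho}$ (well-typed by hypothesis) and its second component from the recursive result (well-typed by the induction hypothesis). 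Applying the $\textrm{merge-pair}$ sub-observation to each such pair, every value in each resulting environment lies in a union of two images both known to contain only well-typed values; hence each such value has a type, which is exactly the conclusion.

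The argument involves no genuine difficulty, since a value's type is carried over unchanged from wherever it first appeared. The only point requiring care is the bookkeeping across the Cartesian product together with the possibility that individual $\textrm{merge-pair}$ invocations collapse to $\varepsilon$; such collapses merely delete environments and can never introduce an ill-typed value. This is the one place where being explicit about the containment $\mathrm{img}(\textrm{merge-pair}(\rho,\rho')) \subseteq \mathrm{img}\;\rho \cup \mathrm{img}\;\rho'$ pays off, so I would state that containment as a named sub-claim before entering the induction.
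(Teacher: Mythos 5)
Your proposal is correct and follows essentially the same route as the paper, which simply performs induction on the input sequence of environment sequences and notes the result follows directly from the premises; your write-up is a detailed elaboration of that same induction, making explicit the containment $\mathrm{img}(\textrm{merge-pair}(\rho,\rho')) \subseteq \mathrm{img}\;\rho \cup \mathrm{img}\;\rho'$ that the paper leaves implicit.
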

\begin{proof}
  Follows directly from the premises by induction of the input sequence of
  environment sequences $\esequence{\rho_1}, \dots, \esequence{\rho_{\mathrm{n}}}$
\end{proof}

We have a lemma for the auxiliary $\textrm{children}$ function:
\begin{lemma}
  \label{lem:childrentyping}
  For a value $v$ such that we have $\esequence{v'} = \textrm{children}(v)$ and
  $\begin{array}{c}\mathcal{T}\\\typing{v}{t}\end{array}$, then there exists a type sequence $\esequence{t'}$ such that $\esequence{\typing{v'}{t'}}$
\end{lemma}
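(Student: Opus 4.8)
The plan is to prove this by case analysis on the structure of the value $v$, or equivalently on the last rule of the typing derivation $\mathcal{T}$. In each case the definition of $\textrm{children}$ fixes the sequence $\esequence{v'}$, and the typing rule that must have concluded $\typing{v}{t}$ supplies the needed component typings as its premises. So the proof amounts to matching each value shape with its unique applicable typing rule and reading off the types of the children.

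First I would dispatch the two leaf cases. When $v = \mathit{vb}$ is a basic value, it is typed by \textsc{T-Basic}, and when $v = \blacksquare$ it is typed by \textsc{T-Bot}; in both cases $\textrm{children}(v) = \varepsilon$, so taking $\esequence{t'} = \varepsilon$ makes $\esequence{\typing{v'}{t'}}$ hold vacuously. Next, the structured cases. For a constructor value $v = k(\esequence{v})$, the derivation $\mathcal{T}$ must end in \textsc{T-Constructor}, whose premise $\esequence{\typing{v}{t'}}$ types exactly $\textrm{children}(k(\esequence{v})) = \esequence{v}$. For a list $v = [\esequence{v}]$ the rule \textsc{T-List} carries the premise $\esequence{\typing{v}{t}}$, and analogously \textsc{T-Set} handles $v = \{\esequence{v}\}$; in each case this premise directly types the element children.

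The only case requiring a small combination step is the map. Here $\textrm{children}((\esequence{v : v'})) = \esequence{v}, \esequence{v'}$ is the keys followed by the values, while \textsc{T-Map} furnishes two separate premises, $\esequence{\typing{v}{t}}$ for the keys and $\esequence{\typing{v'}{t'}}$ for the values; concatenating these two type sequences yields the required typing for the concatenated child sequence. I do not expect any genuine obstacle: the lemma holds precisely because, for every compound value form, the well-typedness of exactly its children appears as a premise of the corresponding typing rule, so the result follows immediately in each case without appeal to the least-upper-bound operator or to subtyping.
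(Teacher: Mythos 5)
Your proposal is correct and matches the paper's proof essentially step for step: the paper also proceeds by induction (in effect, mere case analysis, since no inductive hypothesis is ever invoked) on the structure of $v$, inverts the typing derivation to expose the premises of \textsc{T-Constructor}, \textsc{T-List}, \textsc{T-Set}, and \textsc{T-Map}, and concatenates the key and value typing premises in the map case. Your closing observation that neither $\sqcup$ nor subtyping is needed is accurate and consistent with the paper's argument.
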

\begin{proof}
  By induction on the syntax of $v$:
  \begin{itemize}
  \item Cases $v = \mathit{vb}$ and $v = \blacksquare$, so $\esequence{v'} = \varepsilon$. Holds trivially.
  \item Case $v = k(\esequence{v'})$
    \begin{itemize}
    \item 
    By inversion we get
    \[
      {\footnotesize\inference[\textsc{T-Constructor}]{\keyword{data} \; \mathit{at} = \dots ~|~
    k(\esequence{t''}) ~|~ \dots &
    \esequence{{\begin{array}{c} \mathcal{T}' \\ \typing{v}{t'} \end{array}}} &
    \esequence{{\begin{array}{c} \mathcal{S\!T}' \\
                  \subtyping{t'}{t''} \end{array}}}}{\typing{k(\esequence{v'})}{\mathit{at}}}}\] so $t = \mathit{at}$
\item Here, $\esequence{\mathcal{T}'}$ exactly represents our target goal
    \end{itemize}
  \item Case $v = [\esequence{v'}]$
    \begin{itemize}
    \item By inversion we get
      $\mathcal{T} =  {\footnotesize\inference[\textsc{T-List}]{
        \esequence{{\begin{array}{c}
                    \mathcal{T}' \\
          \typing{v'}{t'}
        \end{array}}}}{\typing{[\esequence{v'}]}{\mathrm{list}\ttuple{\bigsqcup\esequence{t'}}}}}
      $, so $t = \mathrm{list}\ttuple{\bigsqcup\esequence{t'}}$
    \item Here, $\esequence{\mathcal{T}'}$ exactly represents our target goal
    \end{itemize}
  \item Case $v = \{\esequence{v'}\}$
    \begin{itemize}
    \item By inversion we get
      $\mathcal{T} =  {\footnotesize\inference[\textsc{T-Set}]{
        \esequence{{\begin{array}{c}
                    \mathcal{T}' \\
          \typing{v'}{t'}
        \end{array}}}}{\typing{\{\esequence{v'}\}}{\mathrm{set}\ttuple{\bigsqcup\esequence{t'}}}}}
      $, so $t = \mathrm{set}\ttuple{\bigsqcup\esequence{t'}}$
    \item Here, $\esequence{\mathcal{T}'}$ exactly represents our target goal
    \end{itemize}
  \item Case $v = (\esequence{v'' : v'''})$, so $\esequence{v'} = \esequence{v''},\esequence{v'''}$
    \begin{itemize}
    \item By inversion we get \[\mathcal{T} =
        {\footnotesize \inference[\textsc{T-Map}]{
          {\begin{array}{c} \mathcal{T''} \\ \esequence{\typing{v''}{t''}} \end{array}} & 
         {\begin{array}{c} \mathcal{T'''} \\ \esequence{\typing{v'''}{t'''}} \end{array}}
        }{\typing{(\esequence{v'' : v'''})}{\mathrm{map}\ttuple{\bigsqcup\esequence{t''},\bigsqcup\esequence{t'''}}}}
        }
\]
\item Here we take the concatenation of the premises $\mathcal{T''},
  \mathcal{T'''}$ to fulfill our goal.
    \end{itemize}
  \end{itemize}
\end{proof}

We have two mutually recursive lemmas on pattern matching:
\Cref{lem:patternmatchtyping} and \Cref{lem:patternmatchalltyping}.
\begin{lemma}
\label{lem:patternmatchtyping}
  If $\begin{array}{c} \mathcal{M} \\
        \match{p}{v}{\sigma}{\esequence{\rho}} \end{array}$ and there exists a
      type $t$ such that $\begin{array}{c} \mathcal{T} \\ \typing{v}{t} \end{array}$ and a type $t'$ such that
      $\typing{v'}{t'}$ for each value in the input store $v' \in \mathrm{img}
      \; \sigma$, then we have a type $t''$ for each value $v'' \in \mathrm{img}
      \; \rho_i$  in an
      environment in the output sequence $\esequence{\rho}$.
\end{lemma}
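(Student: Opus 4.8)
The plan is to proceed by structural induction on the pattern-matching derivation $\mathcal{M}$, carried out mutually with the induction for \Cref{lem:patternmatchalltyping} on the star-pattern relation; the mutual dependence is forced by the collection-pattern rules \textsc{P-List-Sucs} and \textsc{P-Set-Sucs}, which delegate to match$\star$. All rules whose conclusion yields the empty binding sequence $\varepsilon$ (the various failure rules) satisfy the goal vacuously, since there are no output environments, and the rules producing a single binding-free environment $[]$ (\textsc{P-Val-Sucs}, \textsc{P-Var-Uni}, \textsc{P-Neg-Sucs}) hold because $\mathrm{img}\;[] = \emptyset$.

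Next I would handle the genuine binding cases. For \textsc{P-Var-Bind} the output is $[x \mapsto v]$, and its single bound value $v$ is typed directly by the assumed $\typing{v}{t}$. For \textsc{P-Type-Sucs} the output is $\textrm{merge}([x \mapsto v], \esequence{\rho})$: the value $v$ bound to the label is well-typed by assumption, the environments $\esequence{\rho}$ are well-typed by the inductive hypothesis applied to the sub-derivation $\match{p}{v}{\sigma}{\esequence{\rho}}$, and \Cref{lem:mergetyping} then lifts well-typedness through the merge. The constructor case \textsc{P-Cons-Sucs} is similar but first requires inverting $\mathcal{T}$ via \textsc{T-Constructor} to obtain $\esequence{\typing{v}{t'}}$ for the contained values; each premise $\match{p_i}{v_i}{\sigma}{\esequence{\rho_i}}$ then meets the hypotheses of the inductive hypothesis with its child type $t'_i$, and \Cref{lem:mergetyping} concludes.

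The collection and descendant cases rely on the companion lemmas. For \textsc{P-List-Sucs} and \textsc{P-Set-Sucs} I would invert \textsc{T-List}/\textsc{T-Set} to type the element values and then appeal to \Cref{lem:patternmatchalltyping}. For \textsc{P-Deep}, \Cref{lem:childrentyping} supplies types for the children $\esequence{v'}$ of $v$; the inductive hypothesis then applies both to the head match $\match{p}{v}{\sigma}{\esequence{\rho}}$ and to each recursive child match $\match{/p}{v'_i}{\sigma}{\esequence{\rho'_i}}$, and the conclusion is simply the concatenation of these well-typed environment sequences.

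The main obstacle is the mutual induction with \Cref{lem:patternmatchalltyping}. Discharging it requires a well-founded measure that decreases across both relations, and---inside the star-pattern lemma---an argument that every partition $\esequence{v} = \esequence{v'} \otimes \esequence{v''}$ of a well-typed value sequence produces well-typed sub-sequences, and that the reassembled subcollection $\ttuple{\esequence{v'}}$ bound to a star variable is itself well-typed. This last point is what makes the store-typing hypothesis usable when \textsc{PL-More-Star-Uni} re-matches against a store binding $\sigma(x) = \ttuple{\esequence{v'}}$. Each such step is individually routine, but tracking that sub-sequences and reassembled subcollections of well-typed collections stay well-typed is the bookkeeping that needs the most care.
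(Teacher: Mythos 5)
Your proposal matches the paper's proof essentially step for step: the same rule induction on $\mathcal{M}$ (mutual with \Cref{lem:patternmatchalltyping}), the same vacuous treatment of the failure and empty-binding cases, and the same use of inversion together with \Cref{lem:mergetyping}, \Cref{lem:childrentyping}, and \Cref{lem:patternmatchalltyping} for the constructor, collection, and descendant cases. The bookkeeping you flag at the end---that partitions and reassembled subcollections of well-typed value sequences stay well-typed---is exactly what the paper packages as hypothesis~4 of \Cref{lem:patternmatchalltyping} and discharges by picking out the corresponding typing derivations, and the mutual induction requires no separate measure since it is ordinary induction on the finite derivation trees of the two relations.
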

\begin{proof}
  By induction on the derivation $\mathcal{M}$:
  \begin{itemize}
  \item Cases $\mathcal{M} = \textsc{P-Val-Sucs}$,  $\mathcal{M} =
    \textsc{P-Val-Fail}$, $\mathcal{M} = \textsc{P-Var-Uni}$, $\mathcal{M} =
    \textsc{P-Var-Fail}$, $\mathcal{M} = \textsc{P-Cons-Fail}$, $\mathcal{M} =
    \textsc{P-Type-Fail}$, $\mathcal{M} =
    \textsc{P-List-Fail}$, $\mathcal{M} =
    \textsc{P-Set-Fail}$, $\mathcal{M} =
    \textsc{P-Neg-Sucs}$, $\mathcal{M} =
    \textsc{P-Neg-Fail}$ hold trivially since we have that the output
    environment sequence $\esequence{\rho} = []$ or $\esequence{\rho} =
    \varepsilon$, in both cases containing no values.
  \item Case $\mathcal{M} = \textsc{P-Var-Bind}$ holds by the premise derivation $\mathcal{T}$.
  \item Case $\mathcal{M} = {\footnotesize\inference[\textsc{P-Cons-Sucs}]{
    {\begin{array}{c} \mathcal{M}_1' \\ \match{p_1'}{v_1'}{\sigma}{\esequence{\rho_1'}} \end{array}} \;\dots\;
    {\begin{array}{c} \mathcal{M}_{\mathrm{n}}' \\
       \match{p_{\mathrm{n}}'}{v_{\mathrm{n}}'}{\sigma}{\esequence{\rho_{\mathrm{n}}'}}
     \end{array}}}{\match{k(\esequence{p'})}{k(\esequence{v'})}{\sigma}{\textrm{\textnormal{merge}}(\esequence{\rho_1'},
      \dots, \esequence{\rho_{\mathrm{n}}'})}}}$, so $p = k(\esequence{p'})$, $v
  = k(\esequence{v'})$ and $\esequence{\rho} = \textrm{\textnormal{merge}}(\esequence{\rho_1'}, \dots,
  \esequence{\rho_{\mathrm{n}}'})$
  \begin{itemize}
  \item By inversion we get \[\mathcal{T} =   {\footnotesize\inference[\textsc{T-Constructor}]{\keyword{data} \; \mathit{at} = \dots ~|~
    k(\esequence{t''}) ~|~ \dots &
    \esequence{{\begin{array}{c} \mathcal{T}' \\ \typing{v}{t'} \end{array}}} &
    \esequence{{\begin{array}{c} \mathcal{S\!T}' \\ \subtyping{t'}{t''} \end{array}}}}{\typing{k(v_1',
      \dots, v_{\mathrm{n}}')}{\mathit{at}}}}\]
so $t = \mathit{at}$
\item Now by induction hypotheses of $\esequence{\mathcal{M}'}$ using
  $\esequence{\mathcal{T}'}$, and then using \Cref{lem:mergetyping} we get that
  $\esequence{\rho}$ has well-typed values.
  \end{itemize}
  \item Case $\mathcal{M} = \textsc{P-Type-Sucs}$ holds by the premise
    derivation $\mathcal{T}$ and \Cref{lem:mergetyping}.
  \item Case $\mathcal{M} =  {\footnotesize\inference[\textsc{P-List-Sucs}]{
      {\begin{array}{c} \mathcal{M\!S}' \\ \matchall{\esequence{{{\star}p'}}}{\esequence{v'}}{\sigma}{\emptyset}{[] \quad ,}{\esequence{\rho}} \end{array}}
    }{\match{[\esequence{{{\star}p'}}]}{[\esequence{v'}]}{\sigma}{\esequence{\rho}}}}$,
    so $p = [\esequence{{{\star}p'}}]$ and $v = [\esequence{v'}]$

    \begin{itemize}
    \item By inversion we get
      $\mathcal{T} =  {\footnotesize\inference[\textsc{T-List}]{
        \esequence{{\begin{array}{c}
                    \mathcal{T}' \\
          \typing{v'}{t'}
        \end{array}}}}{\typing{[\esequence{v'}]}{\mathrm{list}\ttuple{\bigsqcup\esequence{t'}}}}}
      $, so $t = \mathrm{list}\ttuple{\bigsqcup\esequence{t'}}$
    \item Partitioning lists by splitting on concatenation $,$ preserves typing
      since we can for each value pick the corresponding type derivation in the
      sequence.
    \item By induction hypothesis (using \Cref{lem:patternmatchalltyping}) of
      $\mathcal{M\!S}'$ using $\esequence{\mathcal{T}'}$ and above fact we get that
      $\esequence{\rho}$ has well-typed values.
    \end{itemize}
  \item Case $\mathcal{M} =  {\footnotesize\inference[\textsc{P-Set-Sucs}]{
      {\begin{array}{c} \mathcal{M\!S}' \\ \matchall{\esequence{{{\star}p'}}}{\esequence{v'}}{\sigma}{\emptyset}{[] \quad ,}{\esequence{\rho}} \end{array}}
    }{\match{\{\esequence{{{\star}p'}}\}}{\{\esequence{v'}\}}{\sigma}{\esequence{\rho}}}}$,
   so $p = \{\esequence{{{\star}p'}}\}$ and $v = \{\esequence{v'}\}$
   \begin{itemize}
   \item By inversion we get
      $
      \mathcal{T} =  {\footnotesize\inference[\textsc{T-Set}]{
        \esequence{{\begin{array}{c}
                    \mathcal{T}' \\
          \typing{v'}{t'}
        \end{array}}}}{\typing{\{\esequence{v'}\}}{\mathrm{set}\ttuple{\bigsqcup\esequence{t'}}}}}
      $, so $t = \mathrm{set}\ttuple{\bigsqcup\esequence{t'}}$
    \item Partitioning sets using disjoint union $\uplus$ preserves typing
      of value sub-sequences, since we can for each value $v_i$ in a subsequence
      pick the corresponding typing derivation $\mathcal{T}'_i$
    \item By induction hypothesis (using \Cref{lem:patternmatchalltyping}) of
      $\mathcal{M\!S}'$ using $\esequence{\mathcal{T}'}$ and above fact we get that
      $\esequence{\rho}$ has well-typed values.
    \end{itemize}
  \item Case $\mathcal{M} =  {\footnotesize\inference[\textsc{P-Deep}]{
      {\begin{array}{c} \mathcal{M}' \\ \match{p'}{v}{\sigma}{\esequence{\rho'}} \end{array}} &
      v'_{1},\dots,v'_{\mathrm{n}} = \textrm{\textnormal{children}}(v) \\
   {\begin{array}{c} \mathcal{M}_1'' \\
      \match{/p'}{v'_1}{\sigma}{\esequence{\rho_1''}} \end{array}} \;\dots\;
{\begin{array}{c} \mathcal{M}_{\mathrm{n}}'' \\
   \match{/p'}{v'_{\mathrm{n}}}{\sigma}{\esequence{\rho_{\mathrm{n}}''}} \end{array}}
}{\match{/p'}{v}{\sigma}{\esequence{\rho'},
    \esequence{\rho_1''}, \dots, \esequence{\rho_{\mathrm{n}}''}}}}$
  \begin{itemize}
  \item By induction hypothesis on $\mathcal{M}'$, we get that
    $\esequence{\rho'}$ has well-typed values
  \item By using \Cref{lem:childrentyping} on $\esequence{v'}$, we get $\esequence{\begin{array}{c} \mathcal{T}' \\ \typing{v'}{t'} \end{array}}$
  \item By induction hypotheses on $\esequence{\mathcal{M}''}$ using
    $\mathcal{T}'$ we get that $\esequence{\rho_1''}, \dots,
    \esequence{\rho_{\mathrm{n}}''}$ is well-typed
  \item Now, we can show that $\esequence{\rho'}, \esequence{\rho_1''}, \dots,
    \esequence{\rho_{\mathrm{n}}''}$ is well-typed using above facts
  \end{itemize}
  \end{itemize}
\end{proof}

\begin{lemma}
\label{lem:patternmatchalltyping}
  If $\begin{array}{c} \mathcal{M\!S} \\
        \matchall{\esequence{{{\star}p}}}{\esequence{\mathit{v}}}{\sigma}{\mathbb{V}}{\ttuple{}
        \quad \otimes}{\esequence{\rho}} \end{array}$ and the following
    properties hold:
    \begin{enumerate}
    \item There exists a
      type sequence $\esequence{t}$ such that $\esequence{\typing{v}{t}}$
    \item There exists a type $t'$ such that
      $\typing{v'}{t'}$ for each value in the input store $v' \in \mathrm{img}
      \; \sigma$
    \item There exists a type $t'$ such that $\typing{v'}{t'}$ for each value
      in the visited value set $v' \in \mathbb{V}$
    \item \label{it:pmalltypprop4} If for all value sequences $\esequence{v'}, \esequence{v''},
      \esequence{v'''}$ where we have $\esequence{v'} = \esequence{v''} \otimes
      \esequence{v'''}$ and there exists a type sequence $\esequence{t'}$ so
      $\esequence{\typing{v'}{t'}}$, then we have that there exists two type
      sequences $\esequence{t''}$ and $\esequence{t'''}$ so that
      $\esequence{\typing{v''}{t''}}$ and $\esequence{\typing{v'''}{t'''}}$
    \end{enumerate}
    Then we have a type $t'$ so that $\typing{v'}{t'}$ for each value $v' \in \mathrm{img} \; \rho_i$  in an
      environment in the output sequence $\esequence{\rho}$.
\end{lemma}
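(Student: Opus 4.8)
The plan is to prove the lemma by induction on the derivation $\mathcal{M\!S}$, carried out simultaneously with the induction for \Cref{lem:patternmatchtyping}, since the two relations are mutually recursive: the ordinary-pattern rule \textsc{PL-More-Pat-Re} invokes the single-pattern $\mathrm{match}$ relation in a premise, while the list and set cases of \Cref{lem:patternmatchtyping} invoke $\mathrm{match}{\star}$. The base cases are exactly the rules whose output environment sequence is $[]$ or $\varepsilon$, namely \textsc{PL-Emp-Both}, \textsc{PL-Emp-Pat}, \textsc{PL-More-Pat-Exh}, \textsc{PL-More-Star-Pat-Fail}, \textsc{PL-More-Star-Val-Fail}, and \textsc{PL-More-Star-Exh}; the output in each contains no values, so the conclusion holds vacuously.

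For the first inductive case, \textsc{PL-More-Pat-Re}, the head is an ordinary pattern $p$, the values split as $\esequence{v} = v' \otimes \esequence{v''}$, and the output is $\textrm{merge}(\esequence{\rho}, \esequence{\rho'}), \esequence{\rho''}$. I would first apply the partition-preservation assumption~\ref{it:pmalltypprop4} to this split (viewing the head $v'$ as a one-element sequence) to obtain well-typedness of both $v'$ and $\esequence{v''}$. Well-typedness of $v'$ then lets me invoke \Cref{lem:patternmatchtyping} on the premise matching $p$ against $v'$ to conclude $\esequence{\rho}$ is well-typed. Next I apply the induction hypothesis to the recursive $\mathrm{match}{\star}$ premise (matching $\esequence{{{\star}p}}$ against $\esequence{v''}$ with visited set reset to $\emptyset$), checking that its four preconditions hold, to obtain $\esequence{\rho'}$ well-typed; and to the backtracking premise with visited set $\mathbb{V} \cup \{v'\}$, whose third precondition now holds precisely because $v'$ is well-typed, to obtain $\esequence{\rho''}$ well-typed. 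Finally \Cref{lem:mergetyping} gives well-typedness of $\textrm{merge}(\esequence{\rho}, \esequence{\rho'})$, and concatenating with the well-typed $\esequence{\rho''}$ closes the case.

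The two star-pattern cases are handled similarly. In \textsc{PL-More-Star-Uni} the variable $x$ is already bound in $\sigma$ and the output is simply the sequence $\esequence{\rho}$ returned by the single recursive premise; assumption~\ref{it:pmalltypprop4} applied to $\esequence{v} = \esequence{v'} \otimes \esequence{v''}$ yields $\esequence{v''}$ well-typed, and the induction hypothesis on that premise gives the result directly. In \textsc{PL-More-Star-Re} the variable $x$ is fresh and is bound to the subcollection $\ttuple{\esequence{v'}}$; here assumption~\ref{it:pmalltypprop4} delivers well-typedness of both $\esequence{v'}$ and $\esequence{v''}$, from which the bound value $\ttuple{\esequence{v'}}$ is itself well-typed by the appropriate collection rule (\textsc{T-List} or \textsc{T-Set}, depending on the construction function $\ttuple{}$). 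I then apply the induction hypothesis to the recursive premise (values $\esequence{v''}$, empty visited set) and to the backtracking premise (visited set $\mathbb{V} \cup \{\esequence{v'}\}$, well-typed since $\esequence{v'}$ is), and finish using \Cref{lem:mergetyping} on $\textrm{merge}([x \mapsto \ttuple{\esequence{v'}}], \esequence{\rho})$ followed by concatenation with $\esequence{\rho'}$.

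The main obstacle is bookkeeping rather than mathematical difficulty: one must carefully thread all four preconditions of the lemma through each recursive invocation of the induction hypothesis. The delicate points are verifying, at every backtracking premise, that the element newly inserted into the visited set $\mathbb{V}$ is well-typed---which is supplied exactly by the partition-preservation assumption~\ref{it:pmalltypprop4}---and, in \textsc{PL-More-Star-Re}, establishing that the freshly built subcollection $\ttuple{\esequence{v'}}$ bound to the star variable is well-typed before it is passed to \Cref{lem:mergetyping}. Because the behaviour of the abstract partition operator $\otimes$ on typing is abstracted away into assumption~\ref{it:pmalltypprop4}, no concrete reasoning about list concatenation or set-union splitting is needed at this level.
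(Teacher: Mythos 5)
Your proof is correct and follows essentially the same route as the paper's: induction on the derivation $\mathcal{M\!S}$ (mutually with \Cref{lem:patternmatchtyping}), the same vacuous base cases, property~\ref{it:pmalltypprop4} to recover well-typedness across partitions, and \Cref{lem:mergetyping} to close the \textsc{PL-More-Pat-Re} and \textsc{PL-More-Star-Re} cases. In fact your writeup makes explicit several steps the paper leaves implicit, such as feeding the well-typedness of the newly visited element into the third precondition of the backtracking premise and justifying the typing of the bound subcollection $\ttuple{\esequence{v'}}$ via \textsc{T-List}/\textsc{T-Set}.
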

\begin{proof}
  By induction on the derivation $\mathcal{M\!S}$:
  \begin{itemize}
  \item Cases $\mathcal{M\!S} = \textsc{PL-Emp-Both}$, $\mathcal{M\!S} =
    \textsc{PL-Emp-Pat}$, $\mathcal{M\!S} =
    \textsc{PL-More-Pat-Exh}$, $\mathcal{M\!S} =
    \textsc{PL-More-Star-Pat-Fail}$, $\mathcal{M\!S} =
    \textsc{PL-More-Star-Val-Fail}$, and $\mathcal{M\!S} = \textsc{PL-More-Star-Exh}$
    hold trivially since we have
    $\esequence{\rho} = []$ or $\esequence{\rho} = \varepsilon$, and there are
    no values in either $[]$ or $\varepsilon$.
  \item Case $\mathcal{M\!S} = \textsc{PL-More-Pat-Re}$ holds by induction
    hypotheses (including \Cref{lem:patternmatchtyping}), and \Cref{lem:mergetyping}.
  \item Case $\mathcal{M\!S} = {\footnotesize\inference[\textsc{PL-More-Star-Uni}]{x \in \textrm{dom } \sigma & \sigma(x) =
    \ttuple{\esequence{v'}} & \esequence{v} = \esequence{v'}
    \otimes \esequence{v''} \\
    {\begin{array}{c} \mathcal{M\!S}' \\ \matchall{\esequence{{{\star}p'}}}{\esequence{v''}}{\sigma}{\emptyset}{\ttuple{}
      \quad \otimes}{\esequence{\rho}}\end{array}}
  }{\matchall{{\star}x,
      \esequence{{{\star}p'}}}{\esequence{v}}{\sigma}{\mathbb{V}}{\ttuple{} \quad
      \otimes}{\esequence{\rho}}}}$, so $\esequence{{{\star}p}} = {\star}x,
  \esequence{{{\star}p'}}$.

  By property \ref{it:pmalltypprop4} we can derive that there exists a type sequence
  $\esequence{t''}$ such that $\esequence{\typing{v''}{t''}}$.
  Then we can apply induction hypothesis on derivation $\mathcal{M\!S}'$, and get our
  target result.
    \item Case $\mathcal{M\!S} = \textsc{PL-More-Star-Re}$ holds by induction
    hypotheses (including \Cref{lem:patternmatchtyping}), and \Cref{lem:mergetyping}.
  \end{itemize}
\end{proof}

We have a lemma on the auxiliary function $\textrm{if-fail}$:
\begin{lemma}
  \label{lem:iffailtyping}
  If we have $v' = \textrm{\textnormal{if-fail}}(\mathit{vfres}, \mathit{v''})$, and:
  \begin{enumerate}
  \item If $\mathit{vfres} = \keyword{success} \; v$, then we have
    $\typing{v}{t}$ for some type $t$
  \item We have $\typing{v''}{t''}$ for somet type $t''$
  \end{enumerate}
  Then there exists a type $t'$ such that $\typing{v'}{t'}$
\end{lemma}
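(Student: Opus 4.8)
The plan is to proceed by a straightforward case analysis on the shape of $\mathit{vfres}$. Recall from the definition of the $\mathit{vfres}$ syntactic category that $\mathit{vfres}$ is either $\keyword{success} \; v$ for some value $v$ or $\keyword{fail}$, so these two cases are exhaustive and the case split suffices.

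First I would handle the case $\mathit{vfres} = \keyword{fail}$. By the first defining equation of $\textrm{if-fail}$, namely $\textrm{if-fail}(\keyword{fail}, v'') = v''$, we obtain $v' = v''$. Premise~2 supplies a type $t''$ with $\typing{v''}{t''}$, so choosing $t' = t''$ immediately discharges the goal.

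Next I would handle the case $\mathit{vfres} = \keyword{success} \; v$. Here the second defining equation $\textrm{if-fail}(\keyword{success} \; v, v'') = v$ gives $v' = v$. Premise~1 is precisely guarded by the hypothesis $\mathit{vfres} = \keyword{success} \; v$, so it yields a type $t$ with $\typing{v}{t}$; choosing $t' = t$ discharges the goal.

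There is no substantive obstacle here: the lemma is an immediate consequence of the two defining equations of $\textrm{if-fail}$ paired with the correspondingly guarded hypotheses. The only point requiring minor care is matching each branch of the case split to the hypothesis that applies to it---premise~1 to the $\keyword{success}$ branch and premise~2 to the $\keyword{fail}$ branch---and observing that $\mathit{vfres}$ admits no further forms, so that the analysis is complete. This lemma is of the same flavour as \Cref{lem:childrentyping} in that it is a small auxiliary typing fact feeding the main strong-typing argument.
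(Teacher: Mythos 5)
Your proof is correct and takes exactly the route the paper does: the paper's own proof reads simply ``Straightforwardly by case analysis on $\mathit{vfres}$,'' and your two cases ($\mathit{vfres} = \keyword{fail}$ using premise~2, and $\mathit{vfres} = \keyword{success}\;v$ using premise~1) spell out precisely that analysis via the two defining equations of $\textrm{if-fail}$. Nothing further is needed.
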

\begin{proof}
  Straightforwardly by case analysis on $\mathit{vfres}$
\end{proof}

Similarly, we have a lemma on the auxiliary function $\textrm{vcombine}$
\begin{lemma}
  \label{lem:vcombinetyping}
  If we have $\mathit{vfres}{\star}'' =
  \textrm{\textnormal{vcombine}}(\mathit{vfres}, \mathit{vfres}{\star}', v, \esequence{v'})$, and:
  \begin{enumerate}
  \item If $\mathit{vfres} = \keyword{success} \; v'''$, then we have
    $\typing{v'''}{t'''}$ for some type $t$
  \item If $\mathit{vfres}{\star}' = \keyword{success} \; \esequence{v''''}$,
    then we $\esequence{\typing{v''''}{t''''}}$ for some typing sequence $\esequence{t''''}$
  \item We have $\typing{v}{t}$ for somet type $t$
  \item We have $\esequence{\typing{v'}{t'}}$ for somet type $t'$
  \end{enumerate}
  Then if $\mathit{vfres}{\star}'' = \keyword{success} \; \esequence{v''}$ there
  exists a type sequence $\esequence{t''}$ such that $\esequence{\typing{v''}{t''}}$
\end{lemma}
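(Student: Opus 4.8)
The plan is to proceed by a case analysis on the two branches in the definition of $\textrm{vcombine}$. In the first branch---where both $\mathit{vfres} = \keyword{fail}$ and $\mathit{vfres}{\star}' = \keyword{fail}$---the result $\mathit{vfres}{\star}''$ is $\keyword{fail}$, so the hypothesis $\mathit{vfres}{\star}'' = \keyword{success} \; \esequence{v''}$ is unsatisfiable and the goal holds vacuously. Thus only the \emph{otherwise} branch requires work, and there by definition $\mathit{vfres}{\star}'' = \keyword{success} \; (\textrm{if-fail}(\mathit{vfres}, v), \textrm{if-fail}(\mathit{vfres}{\star}', \esequence{v'}))$, so the value sequence is exactly $\esequence{v''} = \textrm{if-fail}(\mathit{vfres}, v), \textrm{if-fail}(\mathit{vfres}{\star}', \esequence{v'})$. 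I would then establish well-typedness of the head element and of the tail subsequence separately, and concatenate their type sequences.

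For the head element $\textrm{if-fail}(\mathit{vfres}, v)$, I would invoke \Cref{lem:iffailtyping}: its first premise is discharged by hypothesis~(1) of the present lemma, which supplies a type for $v'''$ whenever $\mathit{vfres} = \keyword{success} \; v'''$, and its second premise is discharged by hypothesis~(3), which supplies a type $t$ for the default value $v$. This yields a type for the head element.

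For the tail subsequence $\textrm{if-fail}(\mathit{vfres}{\star}', \esequence{v'})$ I cannot appeal to \Cref{lem:iffailtyping} directly, since there $\textrm{if-fail}$ is applied to a \emph{sequence} $\esequence{v'}$ rather than to a single value; this mismatch is the one point deserving care, and it is the main (mild) obstacle in the proof. I would resolve it by a short direct case analysis on $\mathit{vfres}{\star}'$ that mirrors the definition of $\textrm{if-fail}$: if $\mathit{vfres}{\star}' = \keyword{fail}$ the result is $\esequence{v'}$, which is well-typed by hypothesis~(4), and if $\mathit{vfres}{\star}' = \keyword{success} \; \esequence{v''''}$ the result is $\esequence{v''''}$, which is well-typed by hypothesis~(2). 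Either way the tail is a well-typed sequence, and concatenating its type sequence with the head's type produces the required $\esequence{t''}$ with $\esequence{\typing{v''}{t''}}$, completing the argument.
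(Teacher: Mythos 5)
Your proposal is correct and follows essentially the same route as the paper, whose proof is exactly a case analysis on $\mathit{vfres}$ and $\mathit{vfres}{\star}'$ combined with \Cref{lem:iffailtyping}. Your extra observation that \Cref{lem:iffailtyping} is stated for a single value, so the tail $\textrm{if-fail}(\mathit{vfres}{\star}', \esequence{v'})$ needs a short inline case analysis (or a sequence analogue of the lemma), is a sound refinement of a detail the paper's one-line proof glosses over.
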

\begin{proof}
  Straightforwardly by case analysis on $\mathit{vfres}$ and
  $\mathit{vfres}{\star}'$ and using \Cref{lem:iffailtyping}.
\end{proof}

We have a lemma on the reconstruct derivation:
\begin{lemma}
  \label{lem:reconstructtyping}
  If we have $
  \begin{array}{c}
   \mathcal{R\!C} \\
  \reconstruct{v}{\esequence{v'}}{\mathit{rcres}} 
  \end{array}$, $\typing{v}{t}$
  for some type $t$, and \esequence{\typing{v'}{t'}} for some type sequence
  $\esequence{t'}$, then when $\mathit{rcres} = \keyword{success} \; v''$ there exists a type $t''$ such that if $\typing{v''}{t''}$
\end{lemma}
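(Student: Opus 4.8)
The plan is to proceed by case analysis on the final rule of the reconstruction derivation $\mathcal{R\!C}$. Since the conclusion is conditioned on $\mathit{rcres} = \keyword{success} \; v''$, I can immediately discard every error-producing rule (\textsc{RC-Val-Err}, \textsc{RC-Cons-Err}, \textsc{RC-List-Err}, \textsc{RC-Set-Err}, \textsc{RC-Map-Err}, and \textsc{RC-Bot-Err}), leaving only the six success cases. Reconstruction is a single-level, non-recursive operation, so no induction hypothesis on $\mathcal{R\!C}$ is needed; each case is closed by exhibiting the appropriate value-typing derivation for the reconstructed value.

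The two leaf cases are immediate. For \textsc{RC-Val-Sucs} the reconstructed value is $v'' = \mathit{vb}$, typed by \textsc{T-Basic} (equivalently, by the hypothesis $\typing{v}{t}$, since $v = \mathit{vb}$ here). For \textsc{RC-Bot-Sucs} we have $v'' = \blacksquare$, typed by \textsc{T-Bot} with $t'' = \rascaltype{void}$. The constructor case \textsc{RC-Cons-Sucs} is also direct: the reconstructed value is $v'' = k(\esequence{v'})$, and the rule already carries exactly the premises demanded by \textsc{T-Constructor}, namely the declaration $\keyword{data} \; \mathit{at} = \dots \mid k(\esequence{t}) \mid \dots$, the typings $\esequence{\typing{v'}{t'}}$, and the subtypings $\esequence{\subtyping{t'}{t}}$, so I can reassemble these into a \textsc{T-Constructor} derivation yielding $t'' = \mathit{at}$.

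The collection cases instead rely on the lemma's hypothesis $\esequence{\typing{v'}{t'}}$ on the supplied children. For \textsc{RC-List-Sucs} and \textsc{RC-Set-Sucs} I apply \textsc{T-List} (resp.\ \textsc{T-Set}) to obtain $t'' = \rascaltype{list}\ttuple{\bigsqcup\esequence{t'}}$ (resp.\ $t'' = \rascaltype{set}\ttuple{\bigsqcup\esequence{t'}}$). The only case requiring a little bookkeeping is \textsc{RC-Map-Sucs}, where the sequence of new children is the concatenation of the new keys followed by the new values, and the reconstructed value is the corresponding map. Here I split the child type sequence provided by the hypothesis along the key/value boundary and feed the two halves into \textsc{T-Map} to obtain a map type. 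I expect this matching of the concatenated child sequence back to its key/value partition to be the only mildly delicate step; every other success case closes immediately by applying the value-typing rule corresponding to the shape of the reconstructed value.
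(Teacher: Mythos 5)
Your proposal is correct and matches the paper's own proof, which likewise proceeds rule-by-rule on $\mathcal{R\!C}$, dismisses the error rules trivially, and closes each success case with the corresponding typing rule (\textsc{T-Basic}/premises, \textsc{T-Bot}, \textsc{T-Constructor}, \textsc{T-List}, \textsc{T-Set}, \textsc{T-Map}). Your observation that plain case analysis suffices because the reconstruction rules have no recursive premises is sound (the paper nominally says ``induction'' but never invokes an induction hypothesis), and your treatment of the key/value split in the \textsc{RC-Map-Sucs} case is exactly the bookkeeping the paper leaves implicit.
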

\begin{proof}
  By induction on the derivation $\mathcal{R\!C}$:
  \begin{itemize}
  \item Cases $\mathcal{R\!C} = \textsc{RC-Val-Err}$, $\mathcal{R\!C} =
    \textsc{RC-Cons-Err}$, $\mathcal{R\!C} =
    \textsc{RC-List-Err}$, $\mathcal{R\!C} =
    \textsc{RC-Set-Err}$, $\mathcal{R\!C} =
    \textsc{RC-Map-Err}$, $\mathcal{R\!C} =
    \textsc{RC-Bot-Err}$ hold trivially.
  \item Case $\mathcal{R\!C} = \textsc{RC-Val-Sucs}$ holds using the premises.
  \item Case $\mathcal{R\!C} = \textsc{RC-Cons-Sucs}$ holds using the premises
    and rule \textsc{T-Constructor}.
  \item Case $\mathcal{R\!C} = \textsc{RC-List-Sucs}$ holds using the premises
    and rule \textsc{T-List}.
  \item Case $\mathcal{R\!C} = \textsc{RC-Set-Sucs}$ holds using the premises
    and rule \textsc{T-Set}.
  \item Case $\mathcal{R\!C} = \textsc{RC-Map-Sucs}$ holds using the premises
    and rule \textsc{T-Map}.
  \item Case $\mathcal{R\!C} = \textsc{RC-Bot-Sucs}$ holds using the premises
    and rule \textsc{T-Bot}.
  \end{itemize}
\end{proof}

We now have a series of mutually inductive lemmas with our
\Cref{thm:strongtyping}, since the operational semantics rules are mutually
inductive themselves. The lemmas are \Cref{lem:evalstartyping},
\Cref{lem:evaleachtyping},  \Cref{lem:evalgentyping}, \Cref{lem:evalcasetyping},
\Cref{lem:evalcasestyping}, \Cref{lem:evalvisityping},
\Cref{lem:evaltdvisittyping}, \Cref{lem:evaltdvisitstartyping},
\Cref{lem:evalbuvisittyping}, and \Cref{lem:evalbuvisitstartyping}.

\begin{lemma}
  \label{lem:evalstartyping}
  If $\begin{array}{c} \mathcal{E\!S} \\
        \evalexprstar{\esequence{e}}{\sigma}{{\mathit{vres}\!{\star}}}{\sigma'} \end{array}$ and there
    exists a type $t$ such that
  $\typing{v}{t}$ for each value in the input store $v \in
  \mathrm{img} \; \sigma$, then
  \begin{enumerate}
  \item There exists a type $t'$ such that
  $\typing{v'}{t'}$ for each value in the result store $v' \in \mathrm{img} \; \sigma'$.
\item If the result value ${\mathit{vres}\!{\star}}$ is
  $\keyword{success} \; \esequence{v''}$, then there exists a type sequence $\esequence{t''}$ such that
  $\esequence{\typing{v''}{t''}}$.
\item If the result value  ${\mathit{vres}\!{\star}}$ is either
  $\keyword{return}
  \; v''$, or $\keyword{throw}
  \; v''$, then there exists a type $t''$ such that $\typing{v''}{t''}$
\end{enumerate}
\end{lemma}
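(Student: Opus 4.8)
The plan is to proceed by induction on the derivation $\mathcal{E\!S}$, mutually with \Cref{thm:strongtyping}, since each nonempty sequence evaluation unfolds into a single-expression evaluation premise (discharged by the theorem) and a shorter-sequence evaluation premise (discharged by the inductive hypothesis). There are four rules to consider: \textsc{ES-Emp}, \textsc{ES-More}, \textsc{ES-Exc1}, and \textsc{ES-Exc2}. In every case property~1 (well-typedness of the result store) must be established, while properties~2 and~3 are mutually exclusive, depending on whether the result is a successful value sequence or an exceptional value.

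First I would dispatch the two leaf cases. In \textsc{ES-Emp} we have $\esequence{e} = \varepsilon$, the store is unchanged ($\sigma' = \sigma$), and the result is $\keyword{success} \; \varepsilon$; property~1 then follows immediately from the well-typedness assumption on $\sigma$, property~2 holds with the empty type sequence, and property~3 is vacuous. The exceptional leaf \textsc{ES-Exc1} has the single premise $\evalexpr{e}{\sigma}{\mathit{exres}}{\sigma'}$; here I would apply \Cref{thm:strongtyping} directly to obtain well-typedness of $\sigma'$ (property~1) and, when $\mathit{exres}$ is $\keyword{return} \; v''$ or $\keyword{throw} \; v''$, a type for $v''$ (property~3).

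The two recursive cases are the crux and share a common pattern of threading well-typedness through the intermediate store. For \textsc{ES-More}, the first premise $\evalexpr{e}{\sigma}{\keyword{success} \; v}{\sigma''}$ gives, via \Cref{thm:strongtyping}, both a type for $v$ and the well-typedness of $\sigma''$; the latter is precisely the hypothesis needed to apply the inductive hypothesis to the second premise $\evalexprstar{\esequence{e'}}{\sigma''}{\keyword{success} \; \esequence{v'}}{\sigma'}$, yielding well-typedness of $\sigma'$ together with a type sequence for $\esequence{v'}$. Prepending the type of $v$ to that sequence furnishes the type sequence for the full result $v, \esequence{v'}$. The case \textsc{ES-Exc2} is identical up to the second premise producing an exceptional result $\mathit{exres}$ instead of a value sequence: the inductive hypothesis again delivers well-typedness of $\sigma'$ and, in the $\keyword{return} \; v''$ and $\keyword{throw} \; v''$ subcases, a type for the carried value.

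I expect no genuine obstacle, as the argument is a routine structural induction. The only point requiring discipline is carrying the well-typedness of the intermediate store $\sigma''$ out of the theorem's conclusion and into the premise of the inductive hypothesis, and correctly invoking the mutually recursive \Cref{thm:strongtyping} on each single-expression subderivation rather than attempting to reprove its content inline.
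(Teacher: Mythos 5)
Your proposal is correct and follows essentially the same route as the paper: induction on the derivation $\mathcal{E\!S}$, mutually with \Cref{thm:strongtyping}, dispatching \textsc{ES-Emp} from the premises and handling \textsc{ES-More}, \textsc{ES-Exc1}, and \textsc{ES-Exc2} by invoking the theorem on the single-expression premise and the inductive hypothesis on the tail, threading well-typedness of the intermediate store between them. The paper states these cases in one line; your write-up merely makes the same argument explicit.
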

\begin{proof}
  By induction on the derivation $\mathcal{E\!S}$:
  \begin{itemize}
  \item Case $\mathcal{E\!S} = \textsc{ES-Emp}$ holds directly using premises.
  \item Cases $\mathcal{E\!S} = \textsc{ES-More}$, $\mathcal{E\!S} =
    \textsc{ES-Exc1}$ and $\mathcal{E\!S} = \textsc{ES-Exc2}$ hold directly from
    the induction hypotheses (including the one given by \Cref{thm:strongtyping}).
  \end{itemize}
\end{proof}

\begin{lemma}
  \label{lem:evaleachtyping}
  If $\begin{array}{c} \mathcal{E\!E} \\
        \evaleach{e}{\esequence{\rho}}{\sigma}{\mathit{vres}}{\sigma'} \end{array}$,
      there
    exists a type $t$ such that
  $\typing{v}{t}$ for each value in the input store $v \in
  \mathrm{img} \; \sigma$, and there exists a type $t$ such that $\typing{v}{t}$
  for each value in an environment $v \in \mathrm{img} \; \rho_i$ in the environment sequence $\esequence{\rho}$
  then
  \begin{enumerate}
  \item There exists a type $t'$ such that
  $\typing{v'}{t'}$ for each value in the result store $v' \in \mathrm{img} \; \sigma'$.
\item If the result value  $\mathit{vres}$ is either
   $\keyword{success}
  \; v''$,
  $\keyword{return}
  \; v''$, or $\keyword{throw}
  \; v''$, then there exists a type $t''$ such that $\typing{v''}{t''}$
\end{enumerate}
\end{lemma}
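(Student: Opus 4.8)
The plan is to proceed by induction on the derivation $\mathcal{E\!E}$, in close analogy with the preceding \Cref{lem:evalstartyping}, but now appealing to \Cref{thm:strongtyping} (with which this lemma is mutually inductive) for the inner body evaluations. The four rules to treat are \textsc{EE-Emp}, \textsc{EE-More-Sucs}, \textsc{EE-More-Break}, and \textsc{EE-More-Exc}.

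The base case \textsc{EE-Emp} is immediate: the store is returned unchanged, so part~(1) follows from the hypothesis on $\sigma$, and the result value $\keyword{success}\; \blacksquare$ is well-typed because $\typing{\blacksquare}{\rascaltype{void}}$ by \textsc{T-Bot}, discharging part~(2). For each of the three recursive cases the body $e$ is evaluated in the extended store $\sigma\rho$, where $\rho$ is the head $\rho_1$ of the input environment sequence. The shared first step is therefore to argue that $\sigma\rho$ is well-typed: the values in $\sigma$ are well-typed by the first hypothesis and the values in $\rho$ by the second, and an extension of two well-typed finite maps is again well-typed. I then apply \Cref{thm:strongtyping} to the sub-derivation $\evalexpr{e}{\sigma\rho}{\cdots}{\sigma''}$ to obtain that the intermediate store $\sigma''$ is well-typed and that any value carried by a $\keyword{success}$, $\keyword{return}$, or $\keyword{throw}$ result is well-typed. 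Since restricting a well-typed store to a smaller domain (here removing $\mathrm{dom}\; \rho$) only shrinks its image, the restricted stores remain well-typed.

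With these facts in hand the individual cases close quickly. In \textsc{EE-More-Sucs} the tail $\esequence{\rho'}$ is well-typed by the second hypothesis, so the inductive hypothesis applies to the recursive $\textrm{each}$ premise over $(\sigma'' \setminus \mathrm{dom}\; \rho)$ and directly yields both parts for the propagated result $\mathit{vres}'$ and final store $\sigma'$. In \textsc{EE-More-Break} the result value is once more $\keyword{success}\; \blacksquare$, typed by \textsc{T-Bot}, with the output store $(\sigma' \setminus \mathrm{dom}\; \rho)$ well-typed as above. In \textsc{EE-More-Exc} the propagated value, when the result is $\keyword{throw}\; v$ or $\keyword{return}\; v$, is exactly the one supplied by \Cref{thm:strongtyping}, while the $\keyword{fail}$ and $\keyword{error}$ results carry no value.

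The only real bookkeeping lies in the store manipulation: correctly combining the two independent well-typedness premises to type the extended store $\sigma\rho$, and checking that the domain restrictions performed on the way out of each iteration preserve well-typedness. Everything else is a routine appeal to the mutually inductive \Cref{thm:strongtyping} and to the inductive hypothesis, just as in \Cref{lem:evalstartyping}.
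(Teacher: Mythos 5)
Your proposal is correct and follows essentially the same route as the paper's proof: induction on the derivation $\mathcal{E\!E}$, discharging the base case \textsc{EE-Emp} directly, and closing the three recursive cases via the mutually inductive \Cref{thm:strongtyping} together with the facts that extending a well-typed store with a well-typed environment ($\sigma\rho$) and removing variables from a well-typed store both preserve well-typedness. You merely spell out details the paper leaves implicit, such as typing $\blacksquare$ via \textsc{T-Bot} in the \textsc{EE-Emp} and \textsc{EE-More-Break} cases.
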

\begin{proof}
  By induction on the derivation $\mathcal{E\!E}$:
  \begin{itemize}
  \item Case $\mathcal{E\!E} = \textsc{EE-Emp}$ holds directly using the premises.
  \item Cases $\mathcal{E\!E} = \textsc{EE-More-Sucs}$,  $\mathcal{E\!E} =
    \textsc{EE-More-Break}$ and  $\mathcal{E\!E} = \textsc{EE-More-Exc}$ hold
    directly from the induction hypotheses (including \Cref{thm:strongtyping}),
    and the (trivial) facts that if any store $\sigma$ is well-typed then $\sigma \setminus
    \mathbb{X}$ is well-typed for any set of variables $\mathbb{X}$ and $\sigma
    \rho$ is well-typed for any well-typed environment $\rho$.
  \end{itemize}
\end{proof}

\begin{lemma}
  \label{lem:evalgentyping}
  If $\begin{array}{c} \mathcal{G} \\
        \evalgenexpr{g}{\sigma}{\mathit{envres}}{\sigma'} \end{array}$,
      and there
    exists a type $t$ such that
  $\typing{v}{t}$ for each value in the input store $v \in
  \mathrm{img} \; \sigma$ then
  \begin{enumerate}
  \item There exists a type $t'$ such that
  $\typing{v'}{t'}$ for each value in the result store $v' \in \mathrm{img} \; \sigma'$.
\item If the result value  $\mathit{envres}$ is 
   $\keyword{success}
   \; \esequence{\rho}$, then there exists a type $t''$ for each value in
   an environment $v'' \in \rho_i$ in the environment sequence
   $\esequence{\rho}$ such that $\typing{v''}{t''}$
\item If the result value  $\mathit{envres}$ is either
  $\keyword{return}
  \; v'''$, or $\keyword{throw}
  \; v'''$, then there exists a type $t'''$ such that $\typing{v'''}{t'''}$
\end{enumerate}
\end{lemma}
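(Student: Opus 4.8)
The plan is to proceed by case analysis on the final rule of the derivation $\mathcal{G}$, mirroring the structure of the companion lemmas in this mutual induction. Every generator rule contains a sub-derivation of the form $\evalexpr{e}{\sigma}{\dots}{\sigma'}$ for the right-hand-side expression, so the first move in each case is to invoke \Cref{thm:strongtyping} as a mutual induction hypothesis on that sub-derivation. This immediately discharges the store obligation (the first conclusion) in all seven cases, since \Cref{thm:strongtyping} guarantees $\sigma'$ is well-typed whenever $\sigma$ is. It also settles the two exceptional cases \textsc{G-Pat-Exc} and \textsc{G-Enum-Exc}: the propagated $\mathit{exres}$ is produced by the inner expression, so the value-typing conclusion of \Cref{thm:strongtyping} supplies a type for any carried value when $\mathit{exres}$ is $\keyword{throw}\;v'''$ or $\keyword{return}\;v'''$, while $\keyword{fail}$, $\keyword{break}$, $\keyword{continue}$ and $\keyword{error}$ carry no value and are vacuous. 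The error case \textsc{G-Enum-Err} is likewise vacuous for the value conclusions.

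For the matching assignment case \textsc{G-Pat-Sucs}, applying \Cref{thm:strongtyping} to the inner evaluation yields both that $\sigma'$ is well-typed and that the produced value $v$ has some type $t$. The pattern match $\match{p}{v}{\sigma'}{\vec{\rho}}$ is performed against this store $\sigma'$, so I feed $\typing{v}{t}$ and the well-typedness of $\sigma'$ into \Cref{lem:patternmatchtyping} to conclude that every value bound in an environment of $\vec{\rho}$ is typeable, which is exactly the required conclusion for this case.

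The three enumerating cases \textsc{G-Enum-List}, \textsc{G-Enum-Set} and \textsc{G-Enum-Map} all follow one pattern. \Cref{thm:strongtyping} furnishes a type for the collection value returned by $e$, and I then invert that typing derivation---using \textsc{T-List}, \textsc{T-Set}, or \textsc{T-Map} respectively---to extract a type for each contained element (for maps, for each key). Since each output environment binds the range variable $x$ to exactly one such element, each environment is well-typed.

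I expect no genuine obstacle, as the lemma is essentially a bookkeeping step in the larger mutual induction. The one point that needs care is that the match in \textsc{G-Pat-Sucs} takes place in the \emph{post-evaluation} store $\sigma'$ rather than the initial $\sigma$, so one must supply the well-typedness of $\sigma'$ obtained from \Cref{thm:strongtyping}---and not merely the assumed well-typedness of $\sigma$---when invoking \Cref{lem:patternmatchtyping}.
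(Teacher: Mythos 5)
Your proposal is correct and follows essentially the same route as the paper's proof: case analysis on $\mathcal{G}$, using \Cref{thm:strongtyping} as the mutual induction hypothesis on the inner expression evaluation, \Cref{lem:patternmatchtyping} for \textsc{G-Pat-Sucs}, and inversion via \textsc{T-List}, \textsc{T-Set}, \textsc{T-Map} for the enumeration cases. Your explicit observation that the pattern match in \textsc{G-Pat-Sucs} occurs in the post-evaluation store $\sigma'$ (so its well-typedness must come from \Cref{thm:strongtyping}, not the hypothesis on $\sigma$) is a detail the paper leaves implicit, but it is the same argument.
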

\begin{proof}
  By induction on the derivation $\mathcal{G}$:
  \begin{itemize}
  \item Cases $\mathcal{G} = \textsc{G-Pat-Sucs}$ and $\mathcal{G} =
    \textsc{G-Pat-Exc}$ hold directly by induction hypothesis given by
    \Cref{thm:strongtyping}, and then using \Cref{lem:patternmatchtyping} if necessary.
  \item Cases $\mathcal{G} = \textsc{G-Enum-List}$, $\mathcal{G} =
    \textsc{G-Enum-Set}$ and  $\mathcal{G} = \textsc{G-Enum-Map}$ hold by the
    induction hypothesis given by \Cref{thm:strongtyping} and then using
    inversion on the type derivation of the result collection value to extract
    the type derivations of the contained values in the result environments
    (from rules $\textsc{T-List}$, $\textsc{T-Set}$, and $\textsc{T-Map}$ respectively).
  \item Cases $\mathcal{G} = \textsc{G-Enum-Err}$ and $\mathcal{G} =
    \textsc{G-Enum-Exc}$ hold directly by the induction hypothesis given by \Cref{thm:strongtyping}.
  \end{itemize}
\end{proof}

\begin{lemma}
    \label{lem:evalcasetyping}
  If $\begin{array}{c} \mathcal{C} \\
        \evalcase{\esequence{\rho}}{e}{\sigma}{\mathit{vres}}{\sigma'} \end{array}$, there exists a type $t$ such that $\typing{v}{t}$
  for each value in an environment $v \in \mathrm{img} \; \rho_i$ in the
  environment sequence $\esequence{\rho}$, and
      there
    exists a type $t$ such that
  $\typing{v}{t}$ for each value in the input store $v \in
  \mathrm{img} \; \sigma$, then
  \begin{enumerate}
  \item There exists a type $t'$ such that
  $\typing{v'}{t'}$ for each value in the result store $v' \in \mathrm{img} \; \sigma'$.
\item If the result value  $\mathit{vres}$ is either
   $\keyword{success}
  \; v''$,
  $\keyword{return}
  \; v''$, or $\keyword{throw}
  \; v''$, then there exists a type $t''$ such that $\typing{v''}{t''}$
\end{enumerate}
\end{lemma}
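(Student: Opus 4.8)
The plan is to proceed by structural induction on the derivation $\mathcal{C}$, as part of the single large mutual induction that also carries \Cref{thm:strongtyping} and its companion lemmas. Only three rules can conclude the single-case evaluation judgment, so there are exactly three cases. Throughout, I will rely on two trivial store facts already used in the proof of \Cref{lem:evaleachtyping}: that extending a well-typed store $\sigma$ by a well-typed environment $\rho$ yields a well-typed store $\sigma\rho$, and that removing a set of variables from a well-typed store preserves well-typedness.

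First I would dispatch the base case \textsc{EC-Emp}, where $\esequence{\rho} = \varepsilon$, the result is $\keyword{fail}$, and $\sigma' = \sigma$. Property~1 holds because the output store coincides with the input store, which is well-typed by assumption; property~2 holds vacuously since $\keyword{fail}$ is none of $\keyword{success}\;v''$, $\keyword{return}\;v''$, $\keyword{throw}\;v''$. Next, for \textsc{EC-More-Fail}, where $\esequence{\rho} = \rho, \esequence{\rho'}$ and the conclusion inherits both its store $\sigma'$ and its result $\mathit{vres}$ from the recursive premise $\evalcase{\esequence{\rho'}}{e}{\sigma}{\mathit{vres}}{\sigma'}$, both properties follow immediately from the induction hypothesis applied to that premise. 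To invoke it I only need its two preconditions: the input store $\sigma$ is well-typed by assumption, and each environment in the tail $\esequence{\rho'}$ has well-typed values since it is a subsequence of $\esequence{\rho}$. Crucially, the expression premise $\evalexpr{e}{\sigma\rho}{\keyword{fail}}{\sigma''}$ is not needed here, since its store $\sigma''$ is discarded and the evaluation backtracks to the restored state $\sigma$.

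The remaining case \textsc{EC-More-Ord} is where the expression premise does the work: the result $\mathit{vres}$ flows from $\evalexpr{e}{\sigma\rho}{\mathit{vres}}{\sigma'}$, and the final store is $\sigma' \setminus \mathrm{dom}\;\rho$. I would invoke \Cref{thm:strongtyping} on this strictly smaller expression sub-derivation. Its applicability requires $\sigma\rho$ to be well-typed, which follows from the store-extension fact together with well-typedness of the head environment $\rho$. Then part~1 of \Cref{thm:strongtyping} gives well-typedness of $\sigma'$, and removing $\mathrm{dom}\;\rho$ preserves this, establishing property~1; property~2 follows directly from part~2 of \Cref{thm:strongtyping} applied to the same premise.

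The content of this lemma is entirely routine, so the only genuine subtlety is foundational rather than computational: because the expression evaluation relation and the single-case relation are defined by mutual recursion, the lemma must be stated and proved simultaneously with \Cref{thm:strongtyping}, so that each appeal to the theorem is made on a derivation strictly smaller than $\mathcal{C}$ and the induction remains well-founded. Once that mutual-induction scaffolding is in place, no case presents any real obstacle.
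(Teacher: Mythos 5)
Your proof is correct and follows essentially the same route as the paper's: induction on the derivation $\mathcal{C}$ with the three rule cases, appealing to the mutually-inductive \Cref{thm:strongtyping} for the expression premise and to the store-extension and variable-removal facts. Your write-up merely makes explicit what the paper compresses into one line (e.g.\ that \textsc{EC-More-Fail} needs only the recursive premise, since the discarded store $\sigma''$ plays no role), which is a faithful elaboration rather than a different argument.
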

\begin{proof}
  By induction on derivation $\mathcal{C}$:
  \begin{itemize}
  \item Case $\mathcal{C} = \textsc{EC-Emp}$ holds directly using the premises.
  \item Case $\mathcal{C} = \textsc{EC-More-Fail}$ and $\mathcal{C} =
    \textsc{EC-More-Ord}$ hold directly using induction hypotheses (including
    \Cref{thm:strongtyping}) and the facts of well-typedness of store extension
    by well-typed environments and well-typedness of variable removals from stores.
  \end{itemize}
\end{proof}

\begin{lemma}
    \label{lem:evalcasestyping}
  If $\begin{array}{c} \mathcal{C\!S} \\
        \evalcases{\esequence{\mathit{cs}}}{v}{\sigma}{\mathit{vres}}{\sigma'}
      \end{array}$, there exists a type $t$ such that $\typing{v}{t}$ , and
      there
    exists a type $t'$ such that
  $\typing{v'}{t'}$ for each value in the input store $v' \in
  \mathrm{img} \; \sigma$, then
  \begin{enumerate}
  \item There exists a type $t''$ such that
  $\typing{v''}{t''}$ for each value in the result store $v'' \in \mathrm{img} \; \sigma'$.
\item If the result value  $\mathit{vres}$ is either
   $\keyword{success}
  \; v'''$,
  $\keyword{return}
  \; v'''$, or $\keyword{throw}
  \; v'''$, then there exists a type $t'''$ such that $\typing{v'''}{t'''}$
\end{enumerate}
\end{lemma}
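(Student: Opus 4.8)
The plan is to prove \Cref{lem:evalcasestyping} by induction on the derivation $\mathcal{C\!S}$ of $\evalcases{\esequence{\mathit{cs}}}{v}{\sigma}{\mathit{vres}}{\sigma'}$, as one member of the mutually inductive bundle anchored by \Cref{thm:strongtyping}; this means that for every strictly smaller subderivation I may appeal to the induction hypothesis of whichever lemma in the bundle matches its judgment form (in particular \Cref{thm:strongtyping} for an $\evalexpr{}{}{}{}$ premise and \Cref{lem:evalcasetyping} for an $\evalcase{}{}{}{}{}$ premise), while the pattern-matching lemma \Cref{lem:patternmatchtyping}, already established independently, may be used freely. Only three rules can conclude this judgment, so the induction has three cases.

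The base case $\mathcal{C\!S} = \textsc{ECS-Emp}$ is immediate: the result store satisfies $\sigma' = \sigma$, so property~1 follows from the assumed well-typedness of $\sigma$, and property~2 is vacuous since the result $\keyword{fail}$ is none of $\keyword{success}\;v'''$, $\keyword{return}\;v'''$, or $\keyword{throw}\;v'''$.

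For $\mathcal{C\!S} = \textsc{ECS-More-Ord}$ the premises provide $\match{p}{v}{\sigma}{\esequence{\rho}}$ together with a single-case evaluation $\evalcase{\esequence{\rho}}{e}{\sigma}{\mathit{vres}}{\sigma'}$ where $\mathit{vres} \neq \keyword{fail}$. The decisive step is to apply \Cref{lem:patternmatchtyping} to the match subderivation, using the hypothesis $\typing{v}{t}$ and the well-typedness of $\sigma$, in order to conclude that every value occurring in an environment of $\esequence{\rho}$ is well-typed. That conclusion is exactly the side condition required to invoke \Cref{lem:evalcasetyping} on the single-case subderivation, which then delivers both target properties for $\sigma'$ and $\mathit{vres}$ directly.

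For $\mathcal{C\!S} = \textsc{ECS-More-Fail}$ the output result $\mathit{vres}$ and output store $\sigma'$ are produced entirely by the recursive premise $\evalcases{\esequence{\mathit{cs}}}{v}{\sigma}{\mathit{vres}}{\sigma'}$, which is evaluated in the \emph{original} store $\sigma$ rather than the intermediate store $\sigma''$ left behind by the failed case. Because the well-typedness hypotheses on $v$ and $\sigma$ are unchanged, the induction hypothesis of this very lemma applies verbatim and yields both properties, and neither the discarded subderivation $\mathcal{C}$ nor its store $\sigma''$ needs to be analyzed. I do not expect a genuine obstacle in this lemma---it is essentially bookkeeping---so the only points demanding care are sequencing the two lemma applications correctly in the \textsc{ECS-More-Ord} case (first \Cref{lem:patternmatchtyping}, then \Cref{lem:evalcasetyping}) and observing that the store restoration in \textsc{ECS-More-Fail}, which is precisely the backtracking-purity behavior, is what makes the recursive induction hypothesis directly applicable.
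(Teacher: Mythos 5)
Your proposal is correct and follows essentially the same route as the paper: induction on $\mathcal{C\!S}$ within the mutual induction bundle, using \Cref{lem:patternmatchtyping} to type the binding environments and then the inductive hypothesis given by \Cref{lem:evalcasetyping} for the single-case premise. If anything, your treatment of \textsc{ECS-More-Fail} is slightly sharper than the paper's (which cites both lemmas for the two non-empty cases collectively), since you correctly observe that there the conclusion comes entirely from the recursive $\evalcases{\esequence{\mathit{cs}}}{v}{\sigma}{\mathit{vres}}{\sigma'}$ premise in the restored store $\sigma$, so only this lemma's own induction hypothesis is needed.
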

\begin{proof}
  By induction on the derivation $\mathcal{C\!S}$:
  \begin{itemize}
  \item Case $\mathcal{C\!S} = \textsc{ECS-Emp}$ holds directly using the premises.
  \item Cases $\mathcal{C\!S} = \textsc{ECS-More-Fail}$ and $\mathcal{C\!S} =
    \textsc{ECS-More-Fail}$ hold using \Cref{lem:patternmatchtyping} and the induction hypotheses (including \Cref{lem:evalcasetyping}).
  \end{itemize}
\end{proof}

\begin{lemma}
  \label{lem:evalvisityping}
  If $\begin{array}{c} \mathcal{V} \\
        \evalvisit{\mathit{st}}{\esequence{\mathit{cs}}}{v}{\sigma}{\mathit{vres}}{\sigma'}
      \end{array}$, there exists a type $t$ such that $\typing{v}{t}$ , and
      there
    exists a type $t'$ such that
  $\typing{v'}{t'}$ for each value in the input store $v' \in
  \mathrm{img} \; \sigma$, then
  \begin{enumerate}
  \item There exists a type $t''$ such that
  $\typing{v''}{t''}$ for each value in the result store $v'' \in \mathrm{img} \; \sigma'$.
\item If the result value  $\mathit{vres}$ is either
   $\keyword{success}
  \; v'''$,
  $\keyword{return}
  \; v'''$, or $\keyword{throw}
  \; v'''$, then there exists a type $t'''$ such that $\typing{v'''}{t'''}$
\end{enumerate}
\end{lemma}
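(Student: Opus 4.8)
The plan is to argue by induction on the derivation $\mathcal{V}$, as part of the mutual induction with the companion traversal lemmas \Cref{lem:evaltdvisittyping} and \Cref{lem:evalbuvisittyping}, performing a case analysis on the final rule. The ten rules fall into three groups determined by the strategy $\mathit{st}$: the two rules delegating directly to a top-down traversal, the two delegating to a bottom-up traversal, and the iterating fixed-point rules for $\keyword{innermost}$ and $\keyword{outermost}$.

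First I would dispatch the delegating rules. For \textsc{EV-TD} and \textsc{EV-TDB} the single premise has the form $\evaltdvisit{\esequence{\mathit{cs}}}{v}{\sigma}{\mathit{br}}{\mathit{vres}}{\sigma'}$ for an appropriate $\mathit{br}$, so both conclusions follow immediately by applying \Cref{lem:evaltdvisittyping} with the assumed typing $\typing{v}{t}$ and well-typedness of $\sigma$. The rules \textsc{EV-BU} and \textsc{EV-BUB} are symmetric, invoking \Cref{lem:evalbuvisittyping} on their bottom-up premise. Likewise the non-recursive fixed-point cases \textsc{EV-IM-Eq}, \textsc{EV-IM-Exc}, \textsc{EV-OM-Eq}, and \textsc{EV-OM-Exc} are immediate, since each has exactly one traversal premise whose result is carried unchanged to the conclusion.

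The substantive cases are the recursive fixed-point rules \textsc{EV-IM-Neq} and \textsc{EV-OM-Neq}. For \textsc{EV-IM-Neq}, the derivation contains a bottom-up premise $\evalbuvisit{\esequence{\mathit{cs}}}{v}{\sigma}{\keyword{no-break}}{\keyword{success} \; v'}{\sigma''}$ together with a strictly smaller visit sub-derivation $\evalvisit{\keyword{innermost}}{\esequence{\mathit{cs}}}{v'}{\sigma''}{\mathit{vres}}{\sigma'}$. I would first apply \Cref{lem:evalbuvisittyping} to the bottom-up premise to obtain two facts: that $\sigma''$ is well-typed, and---because its result is $\keyword{success} \; v'$---that there is a type $t'$ with $\typing{v'}{t'}$. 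These are exactly the hypotheses required to invoke the induction hypothesis on the recursive premise, which then delivers the well-typedness of $\sigma'$ and of the final result value. The case \textsc{EV-OM-Neq} is identical, with \Cref{lem:evaltdvisittyping} replacing the bottom-up lemma.

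The main obstacle is establishing a well-founded induction for the iterating strategies: the recursive visit premise is taken at a fresh scrutinee $v'$ and a fresh store $\sigma''$ rather than on structurally smaller data, so the argument must rest on the height of the derivation and, crucially, on first re-establishing the typing invariant for $v'$ and $\sigma''$ through the bottom-up (resp.\ top-down) lemma before the induction hypothesis becomes applicable. Once this invariant is threaded through each iteration, no further reasoning is needed, as every remaining case merely relays a result produced by one of the companion traversal lemmas.
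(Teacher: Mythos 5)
Your proposal is correct and takes essentially the same approach as the paper: induction on the derivation $\mathcal{V}$, dispatching the top-down-based rules (\textsc{EV-TD}, \textsc{EV-TDB}, \textsc{EV-OM-Eq}, \textsc{EV-OM-Neq}, \textsc{EV-OM-Exc}) via \Cref{lem:evaltdvisittyping} and the bottom-up-based rules (\textsc{EV-BU}, \textsc{EV-BUB}, \textsc{EV-IM-Eq}, \textsc{EV-IM-Neq}, \textsc{EV-IM-Exc}) via \Cref{lem:evalbuvisittyping}. Your explicit treatment of the recursive fixed-point cases---first recovering well-typedness of $v'$ and $\sigma''$ from the companion traversal lemma, then applying the induction hypothesis to the strictly smaller visit sub-derivation---is exactly what the paper's terse appeal to ``induction hypotheses'' leaves implicit.
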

\begin{proof}
  By induction on the derivation $\mathcal{V}$:
  \begin{itemize}
  \item Cases $\mathcal{V} = \textsc{EV-TD}$, $\mathcal{V} = \textsc{EV-TDB}$,
    $\mathcal{V} = \textsc{EV-OM-Eq}$, $\mathcal{V} = \textsc{EV-OM-Neq}$ and
    $\mathcal{V} = \textsc{EV-OM-Exc}$ hold by induction hypotheses
    (including \Cref{lem:evaltdvisittyping}).
  \item Cases $\mathcal{V} = \textsc{EV-BU}$, $\mathcal{V} = \textsc{EV-BUB}$,
    $\mathcal{V} = \textsc{EV-IM-Eq}$, $\mathcal{V} = \textsc{EV-IM-Neq}$ and
    $\mathcal{V} = \textsc{EV-IM-Exc}$ hold by induction hypotheses
    (including \Cref{lem:evalbuvisittyping}).   
  \end{itemize}
\end{proof}

\begin{lemma}
    \label{lem:evaltdvisittyping}
  If $\begin{array}{c} \mathcal{V\!T} \\
        \evaltdvisit{\esequence{\mathit{cs}}}{v}{\sigma}{\mathit{br}}{\mathit{vres}}{\sigma'}
      \end{array}$, there exists a type $t$ such that $\typing{v}{t}$ , and
      there
    exists a type $t'$ such that
  $\typing{v'}{t'}$ for each value in the input store $v' \in
  \mathrm{img} \; \sigma$, then
  \begin{enumerate}
  \item There exists a type $t''$ such that
  $\typing{v''}{t''}$ for each value in the result store $v'' \in \mathrm{img} \; \sigma'$.
\item If the result value  $\mathit{vres}$ is either
   $\keyword{success}
  \; v'''$,
  $\keyword{return}
  \; v'''$, or $\keyword{throw}
  \; v'''$, then there exists a type $t'''$ such that $\typing{v'''}{t'''}$
\end{enumerate}
\end{lemma}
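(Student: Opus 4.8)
The plan is to prove this by induction on the structure of the derivation $\mathcal{V\!T}$, as part of the larger mutual induction that includes \Cref{thm:strongtyping} and its companion lemmas --- in particular \Cref{lem:evaltdvisitstartyping}, whose recursive premises call back into the top-down traversal judgment. The induction is therefore well-founded on the total size of the derivation rather than on any single syntactic argument, and I would case-split on the last rule applied, of which there are five.

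First, the two propagating cases \textsc{ETV-Break-Sucs} and \textsc{ETV-Exc1} are immediate: each simply runs the case evaluation on $v$ and forwards its result together with $\sigma'$, so both conclusions follow directly from \Cref{lem:evalcasestyping} applied to the single premise.

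The substance of the argument lies in \textsc{ETV-Ord-Sucs1}, \textsc{ETV-Ord-Sucs2}, and \textsc{ETV-Exc2}, which share one chain of intermediate steps. I would apply \Cref{lem:evalcasestyping} to the case-evaluation premise to obtain well-typedness of the intermediate store $\sigma''$ together with a type for $v'$ whenever $\mathit{vfres} = \keyword{success}\; v'$; then \Cref{lem:iffailtyping} supplies a type for $v'' = \textrm{if-fail}(\mathit{vfres}, v)$, using the assumed typing of $v$ as the default; \Cref{lem:childrentyping} then yields a type sequence for the children $\esequence{v'''} = \textrm{children}(v'')$; and the mutual induction hypothesis \Cref{lem:evaltdvisitstartyping}, applied to the sequence-traversal premise, gives well-typedness of $\sigma'$ and, in the successful case, a type sequence for $\esequence{v''''}$. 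For \textsc{ETV-Ord-Sucs2} I would close the case with \Cref{lem:reconstructtyping}, feeding it the type of $v''$ and the type sequence for $\esequence{v''''}$ to type the reconstructed result when $\mathit{rcres} = \keyword{success}\; v'$; for \textsc{ETV-Exc1} and \textsc{ETV-Exc2} any propagated value is already typed by the sub-lemma that produced it.

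The main obstacle is not a single case but the bookkeeping of the mutual induction: each interesting rule threads a value through $\textrm{if-fail}$, then $\textrm{children}$, then a recursive sequence traversal, and possibly $\textrm{reconstruct}$, and every link demands its own preservation lemma with precisely matched hypotheses. The delicate point is ordering: \Cref{lem:childrentyping} needs a type for $v''$ that only becomes available after resolving $\textrm{if-fail}$, and the type assigned to the reconstructed value in \textsc{ETV-Ord-Sucs2} genuinely requires both $v''$ and $\esequence{v''''}$ to be typed. Provided each mutual induction hypothesis is invoked at the correct recursive premise, no case needs to invent a type: every output type is delivered by exactly one auxiliary lemma.
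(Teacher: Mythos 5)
Your proposal is correct and follows essentially the same route as the paper's proof, which likewise proceeds by induction on the derivation $\mathcal{V\!T}$ and discharges all five cases using the mutual induction hypotheses together with \Cref{lem:evalcasestyping}, \Cref{lem:evaltdvisitstartyping}, \Cref{lem:iffailtyping}, \Cref{lem:childrentyping}, and \Cref{lem:reconstructtyping}; your write-up merely spells out the case-by-case chaining that the paper compresses into a single sentence. As a minor point in your favor, you correctly invoke the cases-sequence lemma (\Cref{lem:evalcasestyping}) for the $\mathrm{cases}$ premise of the \textsc{ETV} rules, where the paper's text cites the single-case lemma (\Cref{lem:evalcasetyping}), which appears to be a slip.
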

\begin{proof}
  By induction on the derivation $\mathcal{V\!T}$:
  \begin{itemize}
  \item All cases ($\mathcal{V\!T} = \textsc{ETV-Break-Sucs}$, $\mathcal{V\!T}
    = \textsc{ETV-Ord-Sucs1}$,  $\mathcal{V\!T}
    = \textsc{ETV-Ord-Sucs2}$,  $\mathcal{V\!T}
    = \textsc{ETV-Exc1}$ and $\mathcal{V\!T}
    = \textsc{ETV-Exc2}$) hold by induction hypotheses (including
    \Cref{lem:evalcasetyping} and \Cref{lem:evaltdvisitstartyping}),
    \Cref{lem:iffailtyping}, \Cref{lem:childrentyping} and \Cref{lem:reconstructtyping}.
  \end{itemize}
\end{proof}

\begin{lemma}
      \label{lem:evaltdvisitstartyping}
  If $\begin{array}{c} \mathcal{V\!T\!S} \\
        \evaltdvisitstar{\esequence{\mathit{cs}}}{\esequence{v}}{\sigma}{\mathit{br}}{\mathit{vres}{\star}}{\sigma'}
      \end{array}$, there exists a type sequence $\esequence{t}$ such that $\esequence{\typing{v}{t}}$ , and
      there
    exists a type $t'$ such that
  $\typing{v'}{t'}$ for each value in the input store $v' \in
  \mathrm{img} \; \sigma$, then
  \begin{enumerate}
  \item There exists a type $t''$ such that
  $\typing{v''}{t''}$ for each value in the result store $v'' \in \mathrm{img} \; \sigma'$.
\item If the result value  $\mathit{vres}{\star}$ is $\keyword{success} \;
  \esequence{v'''}$ then there exists a type sequence $\esequence{t'''}$ such
  that $\esequence{\typing{v'''}{t'''}}$
\item If the result value  $\mathit{vres}{\star}$ is either
  $\keyword{return}
  \; v'''$, or $\keyword{throw}
  \; v'''$, then there exists a type $t'''$ such that $\typing{v'''}{t'''}$
\end{enumerate}
\end{lemma}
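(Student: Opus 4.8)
The plan is to proceed by induction on the derivation $\mathcal{V\!T\!S}$, following the same pattern as the companion lemmas for the traversal relations; the argument is mutually recursive with \Cref{lem:evaltdvisittyping}, which handles the single top-down traversal step, so every appeal to that lemma is on a strictly smaller sub-derivation and is therefore legitimate within the mutual induction. There are five rules to consider. The base case \textsc{ETVS-Emp} is immediate: the input sequence is empty, the store is returned unchanged so property~(1) follows from the hypothesis, and the result is $\keyword{fail}$, so properties~(2) and~(3) hold vacuously.

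For the successful cases I would first dispatch \textsc{ETVS-Break}. Here the conclusion returns $\keyword{success}\;v'', \esequence{v'}$, where $v''$ comes from a top-down traversal of the head value and $\esequence{v'}$ is the untouched tail. Applying \Cref{lem:evaltdvisittyping} to the premise yields a well-typed result store and a type for $v''$; the tail values $\esequence{v'}$ are already typed by the assumed type sequence $\esequence{t}$, so the combined output sequence is well-typed. The exceptional cases \textsc{ETVS-Exc1} and \textsc{ETVS-Exc2} are analogous: in the former I invoke \Cref{lem:evaltdvisittyping} directly on the single premise, and in the latter I first use \Cref{lem:evaltdvisittyping} to establish well-typedness of the intermediate store $\sigma''$ and then apply the induction hypothesis to the recursive premise, in both cases obtaining a type for the propagated $\keyword{return}$ or $\keyword{throw}$ value.

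The central case is \textsc{ETVS-More}, where the result is $\textrm{vcombine}(\mathit{vfres}, \mathit{vfres}{\star}', v, \esequence{v'})$. I would apply \Cref{lem:evaltdvisittyping} to the first premise to get a well-typed $\sigma''$ together with a type for the carried value when $\mathit{vfres} = \keyword{success}\;w$, then apply the induction hypothesis to the recursive premise (valid since $\sigma''$ is well-typed and $\esequence{v'}$ is typed by $\esequence{t}$) to obtain a well-typed $\sigma'$ and types for the tail results when $\mathit{vfres}{\star}' = \keyword{success}\;\esequence{w'}$. These facts, together with the input types of $v$ and $\esequence{v'}$ drawn from $\esequence{t}$, supply exactly the four hypotheses required by \Cref{lem:vcombinetyping}, which then delivers the typing of the combined success result.

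The main obstacle will be the bookkeeping in the \textsc{ETVS-More} case rather than any deep argument: I must carefully match up which typing fact feeds which argument of $\textrm{vcombine}$, in particular supplying the input value types of $v$ and $\esequence{v'}$ so that the $\textrm{if-fail}$ defaults inside $\textrm{vcombine}$ remain covered even when a sub-traversal returns $\keyword{fail}$. Once these are lined up, \Cref{lem:vcombinetyping} (which itself leans on \Cref{lem:iffailtyping}) closes the case mechanically, and the remaining cases follow by the straightforward applications sketched above.
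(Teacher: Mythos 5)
Your proposal is correct and follows essentially the same route as the paper: induction on the derivation $\mathcal{V\!T\!S}$, with the base case \textsc{ETVS-Emp} discharged by the premises and the remaining cases (\textsc{ETVS-Break}, \textsc{ETVS-More}, \textsc{ETVS-Exc1}, \textsc{ETVS-Exc2}) closed by the mutual induction hypotheses --- in particular \Cref{lem:evaltdvisittyping} --- together with \Cref{lem:vcombinetyping}. The paper states this only in a single terse sentence, so your case-by-case elaboration, especially the bookkeeping of the four hypotheses fed to \Cref{lem:vcombinetyping} in the \textsc{ETVS-More} case, is a faithful expansion of the intended argument rather than a departure from it.
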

\begin{proof}
  By induction on the derivation $\mathcal{V\!T\!S}$
  \begin{itemize}
  \item Case $\mathcal{V\!T\!S} = \textsc{ETVS-Emp}$ holds using the premises.
  \item Cases  $\mathcal{V\!T\!S} = \textsc{ETVS-Break}$, $\mathcal{V\!T\!S} = \textsc{ETVS-More}$, $\mathcal{V\!T\!S} =
    \textsc{ETVS-Exc1}$ and $\mathcal{V\!T\!S} = \textsc{ETVS-Exc2}$ holds using
    the induction hypotheses (including \Cref{lem:evaltdvisittyping}) and \Cref{lem:vcombinetyping}.
  \end{itemize}
\end{proof}

\begin{lemma}
      \label{lem:evalbuvisittyping}
  If $\begin{array}{c} \mathcal{V\!B} \\
        \evalbuvisit{\esequence{\mathit{cs}}}{v}{\sigma}{\mathit{br}}{\mathit{vres}}{\sigma'}
      \end{array}$, there exists a type $t$ such that $\typing{v}{t}$ , and
      there
    exists a type $t'$ such that
  $\typing{v'}{t'}$ for each value in the input store $v' \in
  \mathrm{img} \; \sigma$, then
  \begin{enumerate}
  \item There exists a type $t''$ such that
  $\typing{v''}{t''}$ for each value in the result store $v'' \in \mathrm{img} \; \sigma'$.
\item If the result value  $\mathit{vres}$ is either
   $\keyword{success}
  \; v'''$,
  $\keyword{return}
  \; v'''$, or $\keyword{throw}
  \; v'''$, then there exists a type $t'''$ such that $\typing{v'''}{t'''}$
\end{enumerate}
\end{lemma}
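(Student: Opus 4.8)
The plan is to proceed by structural induction on the derivation $\mathcal{V\!B}$, exactly mirroring the proof of \Cref{lem:evaltdvisittyping} for the top-down case. The bottom-up relation has six rules, and in each case the strategy is to thread the well-typedness of stores and of intermediate values through the sequence of premises, appealing to the auxiliary typing lemmas already established together with the relevant induction hypotheses from the mutual block (chiefly \Cref{lem:evalbuvisitstartyping}, \Cref{lem:evalcasestyping}, \Cref{lem:reconstructtyping}, \Cref{lem:childrentyping}, and \Cref{lem:iffailtyping}).

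First I would dispatch the rules whose obligations follow almost immediately. In \textsc{EBU-Exc} I extract $\esequence{v'} = \textrm{children}(v)$ and apply \Cref{lem:childrentyping} to $\mathcal{T}$ to obtain well-typed children; the induction hypothesis via \Cref{lem:evalbuvisitstartyping} on the $\textrm{bu-visit}{\star}$ premise then yields well-typedness of $\sigma'$ and of the propagated $\mathit{exres}$ value. The \textsc{EBU-No-Break-Err} case is similar but simpler: the result is $\keyword{error}$, so only the store obligation remains, again discharged by \Cref{lem:childrentyping} and \Cref{lem:evalbuvisitstartyping}.

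Next I would treat the reconstruction cases. For \textsc{EBU-Break-Sucs}, after obtaining well-typed children and applying \Cref{lem:evalbuvisitstartyping} to get a well-typed sequence together with the result store, I apply \Cref{lem:reconstructtyping} (using $\mathcal{T}$ for $v$ and the typed children results) to conclude that $\mathit{rcres}$, when successful, carries a well-typed value. For \textsc{EBU-Fail-Sucs} the children traversal fails, so I instead invoke \Cref{lem:evalcasestyping} on the cases premise applied to the original $v$, which is well-typed by $\mathcal{T}$. The two remaining rules, \textsc{EBU-No-Break-Sucs} and \textsc{EBU-No-Break-Exc}, are the most involved: here I must chain \Cref{lem:childrentyping}, \Cref{lem:evalbuvisitstartyping}, \Cref{lem:reconstructtyping} (yielding a well-typed reconstructed value $v'$), \Cref{lem:evalcasestyping} on $v'$, and finally \Cref{lem:iffailtyping} to discharge the $\textrm{if-fail}(\mathit{vfres}', v')$ result.

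The main obstacle is precisely this last chain in \textsc{EBU-No-Break-Sucs}: unlike the top-down case, the cases are evaluated on the \emph{reconstructed} value $v'$ rather than on the original scrutinee $v$, so I must first establish $\typing{v'}{t'}$ through \Cref{lem:reconstructtyping} before \Cref{lem:evalcasestyping} becomes applicable, while also carrying the intermediate store $\sigma''$ correctly into that appeal. Once the order of lemma applications is fixed so that each value is known to be well-typed at the point where the next lemma consumes it, every case closes routinely.
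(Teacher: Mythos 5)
Your proposal is correct and follows essentially the same route as the paper: induction on the derivation $\mathcal{V\!B}$, discharging all six rules by the mutual induction hypotheses (\Cref{lem:evalbuvisitstartyping}, the cases-evaluation lemma) together with \Cref{lem:childrentyping}, \Cref{lem:reconstructtyping}, and \Cref{lem:iffailtyping}; the paper simply states this in one compressed sentence, whereas you spell out the per-rule chaining (including the key ordering point that \Cref{lem:reconstructtyping} must supply well-typedness of the reconstructed value before \Cref{lem:evalcasestyping} can be applied in \textsc{EBU-No-Break-Sucs}).
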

\begin{proof}
  By induction on the derivation $\mathcal{V\!B}$:
  \begin{itemize}
   \item All cases ($\mathcal{V\!B} = \textsc{EBU-No-Break-Sucs}$, $\mathcal{V\!B}
    = \textsc{EBU-Break-Sucs}$,  $\mathcal{V\!B}
    = \textsc{EBU-Fail-Sucs}$,  $\mathcal{V\!B}
    = \textsc{EBU-No-Break-Exc}$, $\mathcal{V\!B}
    = \textsc{EBU-Exc}$, and $\mathcal{V\!B}
    = \textsc{EBU-No-BreakErr})$ hold by induction hypotheses (including
    \Cref{lem:evalcasetyping} and \Cref{lem:evalbuvisitstartyping}),
    \Cref{lem:iffailtyping}, \Cref{lem:childrentyping} and \Cref{lem:reconstructtyping}.
  \end{itemize}
\end{proof}

\begin{lemma}
  \label{lem:evalbuvisitstartyping}
  If $\begin{array}{c} \mathcal{V\!B\!S} \\
        \evalbuvisitstar{\esequence{\mathit{cs}}}{\esequence{v}}{\sigma}{\mathit{br}}{\mathit{vres}{\star}}{\sigma'}
      \end{array}$, there exists a type sequence $\esequence{t}$ such that $\esequence{\typing{v}{t}}$ , and
      there
    exists a type $t'$ such that
  $\typing{v'}{t'}$ for each value in the input store $v' \in
  \mathrm{img} \; \sigma$, then
\begin{enumerate}
  \item There exists a type $t''$ such that
  $\typing{v''}{t''}$ for each value in the result store $v'' \in \mathrm{img} \; \sigma'$.
\item If the result value  $\mathit{vres}{\star}$ is $\keyword{success} \;
  \esequence{v'''}$ then there exists a type sequence $\esequence{t'''}$ such
  that $\esequence{\typing{v'''}{t'''}}$
\item If the result value  $\mathit{vres}{\star}$ is either
  $\keyword{return}
  \; v'''$, or $\keyword{throw}
  \; v'''$, then there exists a type $t'''$ such that $\typing{v'''}{t'''}$
\end{enumerate}

\end{lemma}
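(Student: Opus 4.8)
The plan is to proceed by induction on the derivation $\mathcal{V\!B\!S}$, mirroring exactly the proof of \Cref{lem:evaltdvisitstartyping} for the top-down case, since the five rules concluding the judgment $\evalbuvisitstar{\esequence{\mathit{cs}}}{\esequence{v}}{\sigma}{\mathit{br}}{\mathit{vres}{\star}}{\sigma'}$ — namely \textsc{EBUS-Emp}, \textsc{EBUS-Break}, \textsc{EBUS-More}, \textsc{EBUS-Exc1}, and \textsc{EBUS-Exc2} — have the same shape as their top-down counterparts. This lemma is part of the family proved by simultaneous (mutual) induction with \Cref{thm:strongtyping}, and its recursive appeals will be to the single-element bottom-up lemma \Cref{lem:evalbuvisittyping} and to itself on strictly smaller sub-derivations.

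First I would dispatch the base case \textsc{EBUS-Emp}: here the input value sequence is empty, the result is $\keyword{fail}$, and the result store coincides with the input store $\sigma$, so all three conclusions hold trivially. For each of the remaining rules, which peel off the head value $v$ of the sequence $v, \esequence{v'}$, the argument rests on two ingredients applied in order: the mutually inductive hypothesis \Cref{lem:evalbuvisittyping} applied to the single-element premise $\evalbuvisit{\esequence{\mathit{cs}}}{v}{\sigma}{\mathit{br}}{\mathit{vfres}}{\sigma''}$, which yields well-typedness of the intermediate store $\sigma''$ and of the head result value, followed by the induction hypothesis of the present lemma applied to the tail premise evaluated in $\sigma''$ (whose well-typedness is exactly the first conclusion just obtained for the head).

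In \textsc{EBUS-Break} the head traversal succeeds and the tail entries of the result sequence are copied directly from the assumed well-typed input sequence $\esequence{v'}$, so the combined success sequence is well-typed. In \textsc{EBUS-More} the result value is $\textrm{vcombine}(\mathit{vfres}, \mathit{vfres}{\star}', v, \esequence{v'})$; here I would first establish typing of $\mathit{vfres}$ via \Cref{lem:evalbuvisittyping} and of $\mathit{vfres}{\star}'$ via the induction hypothesis, and then invoke \Cref{lem:vcombinetyping}, whose four preconditions are discharged precisely by these two facts together with the input typing of $v$ and of $\esequence{v'}$, to conclude that the combined value, when successful, is well-typed. The two exceptional cases \textsc{EBUS-Exc1} and \textsc{EBUS-Exc2} propagate a $\keyword{return}$ or $\keyword{throw}$ result and follow directly from \Cref{lem:evalbuvisittyping} and the induction hypothesis, respectively. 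The only delicate point — and the one I expect to require the most care rather than being a genuine obstacle — is the store threading through the sequence: one must confirm that each recursive appeal is to a smaller sub-derivation and that the intermediate store handed to the tail is well-typed, both of which are guaranteed by the first conclusion of \Cref{lem:evalbuvisittyping} on the head premise, so that the overall mutual induction remains well-founded.
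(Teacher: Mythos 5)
Your proposal is correct and follows essentially the same route as the paper's own proof: induction on the derivation $\mathcal{V\!B\!S}$, with \textsc{EBUS-Emp} discharged by the premises and the remaining four rules handled via the mutually inductive \Cref{lem:evalbuvisittyping} on the head premise, the induction hypothesis on the tail premise, and \Cref{lem:vcombinetyping} for the \textsc{EBUS-More} case. The paper states this in a single line per case group; your write-up merely spells out the same ingredients (including the store-typing hand-off from head to tail) in more detail.
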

\begin{proof}
  By induction on the derivation $\mathcal{V\!B\!S}$
  \begin{itemize}
  \item Case $\mathcal{V\!B\!S} = \textsc{EBUS-Emp}$ holds using the premises.
  \item Cases $\mathcal{V\!B\!S} = \textsc{EBUS-Break}$, $\mathcal{V\!B\!S} = \textsc{EBUS-More}$, $\mathcal{V\!B\!S} =
    \textsc{EBUS-Exc1}$ and $\mathcal{V\!B\!S} = \textsc{EBUS-Exc2}$ holds using
    the induction hypotheses (including \Cref{lem:evalbuvisittyping}) and \Cref{lem:vcombinetyping}.
  \end{itemize}
\end{proof}

\strongtyping*
\begin{proof}
  By induction on the derivation $\mathcal{E}$:
  \begin{itemize}
  \item Cases $\mathcal{E} = \textsc{E-Val}$, $\mathcal{E} =
    \textsc{E-Var-Sucs}$, $\mathcal{E} = \textsc{E-Var-Err}$, $\mathcal{E} =
    \textsc{E-Break}$,
     $\mathcal{E} = \textsc{E-Continue}$ and  $\mathcal{E} = \textsc{E-Fail}$  hold directly
    using the premises.
  \item Cases $\mathcal{E} = \textsc{E-Un-Exc1}$, $\mathcal{E} =
    \textsc{E-Bin-Exc1}$, $\mathcal{E} =
    \textsc{E-Bin-Exc2}$, $\mathcal{E} = \textsc{E-Cons-Err}$,
    $\mathcal{E} = \textsc{E-Cons-Exc}$, $\mathcal{E} = \textsc{E-List-Err}$,
    $\mathcal{E} = \textsc{E-List-Exc}$, $\mathcal{E} = \textsc{E-Set-Err}$,
    $\mathcal{E} = \textsc{E-Set-Exc}$, $\mathcal{E} = \textsc{E-Map-Err}$,
    $\mathcal{E} = \textsc{E-Map-Exc}$, $\mathcal{E} = \textsc{E-Lookup-Err}$,
    $\mathcal{E} = \textsc{E-Lookup-Exc1}$, $\mathcal{E} =
    \textsc{E-Lookup-Exc2}$, $\mathcal{E} = \textsc{E-Update-Err1}$,
    $\mathcal{E} = \textsc{E-Update-Err2}$, $\mathcal{E} =
    \textsc{E-Update-Exc1}$, $\mathcal{E} = \textsc{E-Update-Exc2}$,
    $\mathcal{E} = \textsc{E-Update-Exc3}$, $\mathcal{E} =
    \textsc{E-Call-Arg-Err}$, $\mathcal{E} = \textsc{E-Call-Arg-Exc}$,
    $\mathcal{E} = \textsc{E-Ret-Sucs}$, $\mathcal{E} = \textsc{E-Ret-Exc}$,
    $\mathcal{E} = \textsc{E-Asgn-Err}$, $\mathcal{E} = \textsc{E-Asgn-Exc}$,
    $\mathcal{E} = \textsc{E-If-True}$, $\mathcal{E} = \textsc{E-If-False}$,
    $\mathcal{E} = \textsc{E-If-Err}$, $\mathcal{E} = \textsc{E-If-Exc}$,
    $\mathcal{E} = \textsc{E-Switch-Sucs}$, $\mathcal{E} =
    \textsc{E-Switch-Exc1}$, $\mathcal{E} = \textsc{E-Switch-Exc2}$,
     $\mathcal{E} = \textsc{E-Visit-Sucs}$, $\mathcal{E} =
     \textsc{E-Visit-Fail}$,
     $\mathcal{E} = \textsc{E-Visit-Exc1}$, and $\mathcal{E} =
     \textsc{E-Visit-Exc2}$, $\mathcal{E} = \textsc{E-For-Sucs}$, $\mathcal{E} =
     \textsc{E-For-Exc}$, $\mathcal{E} =
     \textsc{E-While-True-Sucs}$, $\mathcal{E} =
     \textsc{E-While-Exc1}$, $\mathcal{E} =
     \textsc{E-While-Exc2}$, $\mathcal{E} =
     \textsc{E-While-Err}$, $\mathcal{E} =
     \textsc{E-Solve-Eq}$, $\mathcal{E} =
     \textsc{E-Solve-Neq}$, $\mathcal{E} =
     \textsc{E-Solve-Exc}$, $\mathcal{E} =
     \textsc{E-Solve-Err}$, , $\mathcal{E} =
     \textsc{E-Thr-Sucs}$, $\mathcal{E} =
     \textsc{E-Thr-Err}$,  $\mathcal{E} =
     \textsc{E-Fin-Sucs}$, $\mathcal{E} =
     \textsc{E-Fin-Exc}$, and $\mathcal{E} =
     \textsc{E-Try-Ord}$
     hold by induction
    hypotheses (including \Cref{lem:evalstartyping}, \Cref{lem:evalgentyping}, \Cref{lem:evalcasestyping}
    and \Cref{lem:evalvisityping}).
  \item Case $\mathcal{E} = \textsc{E-Un-Sucs}$ holds by its induction hypothesis
    and strong typing of $\sem{\ominus}$.
  \item Case $\mathcal{E} = \textsc{E-Bin-Sucs}$ holds by its induction hypothesis
    and strong typing of $\sem{\oplus}$.
  \item Case $\mathcal{E} = \textsc{E-Cons-Sucs}$ holds by its induction
    hypothesis given by \Cref{lem:evalstartyping}, and then by using
    $\textsc{T-Constructor}$ with the provided typing premises.
  \item Case $\mathcal{E} = \textsc{E-List-Sucs}$ holds by its induction
    hypothesis given by \Cref{lem:evalstartyping}, and then by using
    $\textsc{T-List}$.
  \item Case $\mathcal{E} = \textsc{E-Set-Sucs}$ holds by its induction
    hypothesis given by \Cref{lem:evalstartyping}, and then by using
    $\textsc{T-Set}$.
  \item Case $\mathcal{E} = \textsc{E-Map-Sucs}$ holds by its induction
    hypothesis given by \Cref{lem:evalstartyping}, and then by using
    $\textsc{T-Map}$.
  \item Case $\mathcal{E} = \textsc{E-Lookup-Sucs}$ holds by its induction
    hypotheses and inversion of $\mathcal{T}$ to $\textsc{T-Map}$.
  \item Case $\mathcal{E} = \textsc{E-Lookup-NoKey}$ holds by its induction
    hypotheses and using \textsc{T-Cons} on the definition of $\textrm{NoKey}$.
  \item Case $\mathcal{E} = \textsc{E-Update-Sucs}$ holds by its induction
    hypotheses and by using $\textsc{T-Map}$ to reconstruct the type derivation
    for the result.
  \item Cases $\mathcal{E} = \textsc{E-Call-Sucs}$, $\mathcal{E} =
    \textsc{E-Call-Res-Exc}$, $\mathcal{E} = \textsc{E-Call-Res-Err1}$, and
    $\mathcal{E} = \textsc{E-Call-Res-Err2}$ hold by induction hypotheses (including
    \Cref{lem:evalstartyping}) and by using the fact that extracting variables from
    and extending well-typed stores produces well-typed stores.
  \item Cases $\mathcal{E} = \textsc{E-Asgn-Sucs}$ and $\mathcal{E} = \textsc{E-Try-Catch}$ hold by induction
    hypotheses and the fact that extending well-typed stores with well-typed
    environments preserves well-typedness.
  \item Cases $\mathcal{E} = \textsc{E-Switch-Fail}$, $\mathcal{E} =
    \textsc{E-While-False}$, and $\mathcal{E} = \textsc{E-While-True-Break}$
    hold by induction hypotheses and using $\textsc{T-Void}$.
  \item Cases $\mathcal{E} = \textsc{E-Block-Sucs}$, $\mathcal{E} =
    \textsc{E-Block-Exc}$ hold by the induction hypothesis
    given by \Cref{lem:evalstartyping} and the fact that removing variables from
    a well-typed store produces a well-typed store.
  \end{itemize}
\end{proof}

We will for \Cref{thm:partialprogress}, also need to state some lemmas. We will
for the proof focus mainly on cases where the induction hypothesis does not
$\keyword{timeout}$, since if it does it is trivially possible to construct a
$\keyword{timeout}$ derivation for the result syntax.

First, we need a lemma that specifies that the sequence of values produced by
the $\textrm{children}$ is strictly smaller than the input value. Let $v \prec v'$
denote the relation that $v$ is syntactically contained in $v'$, then our target property
is specified in \Cref{lem:childrenordering}.
\begin{lemma}
  \label{lem:childrenordering}
  If $\esequence{v'} = \textrm{\textnormal{children}}(v)$ then $v'_i \prec v$
  for all $i$.
\end{lemma}
\begin{proof}
  Directly by induction on $v$.
\end{proof}

We have a lemma for progress on reconstruction:
\begin{lemma}
  \label{lem:reconsprogress}
  It is possible to construct a derivation
  $\reconstruct{v}{\esequence{v'}}{\mathit{rcres}}$ for any well-typed value $v$
  and well-typed value sequence $\esequence{v'}$.
\end{lemma}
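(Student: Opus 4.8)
The plan is to proceed by case analysis on the syntax of the value $v$. No induction is needed, since none of the reconstruction rules has a recursive $\reconstruct{}{}{}$ premise; each case is discharged by exhibiting one applicable rule. For every shape of $v$ the rules come in matched $\keyword{success}$/$\keyword{error}$ pairs whose side-conditions are complementary, so it suffices to show that exactly one rule of the relevant pair fires.

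First I would handle the leaf cases $v = \mathit{vb}$ and $v = \blacksquare$. Here the outcome depends only on whether the supplied sequence $\esequence{v'}$ is empty: if it is, \textsc{RC-Val-Sucs} (respectively \textsc{RC-Bot-Sucs}) applies, and if it is non-empty, \textsc{RC-Val-Err} (respectively \textsc{RC-Bot-Err}) applies. These alternatives are exhaustive, so a derivation always exists.

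For the composite cases I would use the two hypotheses. When $v = k(\esequence{v})$, inversion of the derivation $\typing{v}{t}$ through \textsc{T-Constructor} recovers the declaration $\keyword{data} \; \mathit{at} = \dots \mid k(\esequence{t}) \mid \dots$, while well-typedness of $\esequence{v'}$ supplies derivations $\esequence{\typing{v'}{t'}}$. I then split on whether $\esequence{v' \neq \blacksquare}$ and $\esequence{\subtyping{t'}{t}}$ both hold: in the affirmative case \textsc{RC-Cons-Sucs} applies, and otherwise some $v'_i = \blacksquare$ or $\notsubtyping{t'_i}{t_i}$, so \textsc{RC-Cons-Err} applies. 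The list and set cases are simpler, since the choice between the success rules (\textsc{RC-List-Sucs}, \textsc{RC-Set-Sucs}) and the error rules (\textsc{RC-List-Err}, \textsc{RC-Set-Err}) turns solely on whether every element of $\esequence{v'}$ is defined. For $v = (\esequence{v : v'})$ I would partition the supplied sequence into a key-part and a value-part matching the shape of the original map, and then select \textsc{RC-Map-Sucs} or \textsc{RC-Map-Err} according to whether both parts are free of $\blacksquare$.

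The main obstacle is not the definedness and subtyping dichotomies, which are plainly complementary, but ensuring that the supplied sequence $\esequence{v'}$ aligns structurally with $v$: for a constructor it must have the arity of $k$, and for a map it must have even length split correctly into keys and values, or no rule of the matched pair can fire. This alignment is precisely the invariant guaranteed by the fact that reconstruction is only ever invoked on (possibly modified) children of $v$, whose number is pinned down by the definition of $\textrm{children}$; together with the well-typedness hypotheses this renders the relevant side-conditions well-formed, after which the case analysis goes through.
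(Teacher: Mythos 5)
Your proposal is correct and takes essentially the same approach as the paper: the paper's entire proof reads ``Straightforwardly by case analysis on $v$,'' which your case-by-case argument (matched $\keyword{success}$/$\keyword{error}$ rule pairs with complementary side-conditions, no induction needed) simply spells out. Your closing observation about structural alignment of $\esequence{v'}$ with $v$ (arity for constructors, the key/value split for maps) identifies a genuine subtlety that the paper's one-line proof silently glosses over, and your resolution---that reconstruction is only ever invoked on sequences arising from $\textrm{children}(v)$---is the appropriate way to discharge it.
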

\begin{proof}
  Straightforwardly by case analysis on $v$.
\end{proof}

We have two mutually recursive lemmas for pattern matching:
\Cref{lem:matchprogress} and \Cref{lem:matchallprogress}.
\begin{lemma}
  \label{lem:matchprogress}
  It is possible to construct a
  derivation $\match{p}{v}{\sigma}{\esequence{\rho}}$ for any pattern
  $p$, well-typed value $v$, well-typed store $\sigma$.
\end{lemma}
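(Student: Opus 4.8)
The plan is to prove \Cref{lem:matchprogress} together with its companion \Cref{lem:matchallprogress} by simultaneous well-founded induction, since list and set patterns delegate matching to the star-matching relation, which in turn matches its component patterns. For \Cref{lem:matchprogress} the guiding measure is the lexicographic pair consisting of the value $v$, ordered by the strict containment relation $\prec$ (well-founded because values are finite trees), and then the pattern $p$ under its strict subpattern order. I would first do a case analysis on the shape of $p$, checking in each case that at least one rule is applicable and that every recursive premise is on a strictly smaller configuration.

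The base cases follow from decidability of the side conditions: for a basic value pattern $\mathit{vb}$ one picks \textsc{P-Val-Sucs} or \textsc{P-Val-Fail} according to whether $v = \mathit{vb}$; for a variable pattern $x$ one inspects membership $x \in \mathrm{dom}\,\sigma$ and then equality $v = \sigma(x)$ to choose among \textsc{P-Var-Uni}, \textsc{P-Var-Fail}, and \textsc{P-Var-Bind}. Constructor patterns $k(\esequence{p})$ take \textsc{P-Cons-Fail} unless $v$ is a $k$-application, in which case \textsc{P-Cons-Sucs} recurses on each $p_i$ against $v_i \prec v$. Typed patterns $t\,x{:}p$ use the well-typedness hypothesis crucially: since $\typing{v}{t'}$ holds for some $t'$, the subtyping check is decidable and one applies \textsc{P-Type-Sucs} or \textsc{P-Type-Fail}, recursing on the strict subpattern $p$ with $v$ fixed; negation $!p$ is analogous, branching on whether the inner match returns $\varepsilon$.

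The main obstacle is the descendant pattern $/p$ (rule \textsc{P-Deep}), where the traversal recurses with the \emph{same} pattern $/p$ against each child of $v$, so the pattern does not shrink. This is precisely where \Cref{lem:childrenordering} is needed: each child $v'_i = \textrm{children}(v)_i$ satisfies $v'_i \prec v$, so the first component of the measure strictly decreases and the induction hypothesis applies (the parallel recursion of $p$ against $v$ decreases the second component instead). A second subtlety is the mutual recursion: \textsc{P-List-Sucs} and \textsc{P-Set-Sucs} invoke \Cref{lem:matchallprogress} on the children of $[\esequence{v}]$ (resp.\ $\{\esequence{v}\}$), and the star-matching relation only ever calls back into \Cref{lem:matchprogress} on individual elements strictly contained in that collection, so the value strictly decreases each time control crosses between the two relations. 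I would therefore phrase the induction over a single well-founded order on the disjoint union of match- and matchall-configurations.

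Within \Cref{lem:matchallprogress} the one recursion that does not shrink the value sequence is the backtracking step (\textsc{PL-More-Pat-Re}, \textsc{PL-More-Star-Re}), which instead enlarges the visited set $\mathbb{V}$; this terminates because a finite value sequence has only finitely many partitions under $\otimes$, so I would add the number of untried partitions as an innermost component of the measure. With well-foundedness established, exhaustiveness --- that some rule always fires --- reduces to decidability of value equality, store membership, subtyping, and the existence of an untried partition, after which the output environment sequence (possibly assembled through the total function $\textrm{merge}$) is obtained.
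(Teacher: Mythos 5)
Your proof is correct and its content coincides case-by-case with the paper's own: the same rule selection via decidable tests (value equality, store membership, subtyping), the same mutual recursion with \Cref{lem:matchallprogress}, the same use of strict containment of children for descendant patterns, and the same finiteness-of-partitions argument for the backtracking rules. The difference is purely in how the induction is organized: the paper does structural induction on the syntax of $p$ (resp.\ of the pattern sequence), and handles the two places where the pattern fails to shrink by \emph{nested inner inductions} --- a well-founded induction on $\prec$ inside the $/p$ case, and an induction on $|\mathbb{V}_{\mathrm{all}}| - |\mathbb{V}|$ inside the star-matching lemma --- whereas you flatten everything into a single lexicographic measure (value under $\prec$, then pattern, then untried partitions) over the disjoint union of match- and matchall-configurations. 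Your packaging is more uniform and closer to what a mechanized proof would require, at the cost of having to define the order on heterogeneous configurations with some care: your claim that the value strictly decreases at \emph{every} crossing between the two relations is slightly too strong (in \textsc{PL-More-Pat-Re} applied to a singleton sequence, the value component is unchanged and only the pattern component shrinks), which is precisely why the pattern must appear as a component of your measure; the paper's nested-induction phrasing sidesteps this because every cross-lemma call is a strict pattern decrease. Incidentally, your sketch also covers the negation pattern $!p$, a case the paper's own case analysis silently omits.
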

\begin{proof}
  By induction on syntax of $p$:
  \begin{itemize}
  \item Case $p = \mathit{vb}$:
    We proceed by testing whether $v$ is equal to $\mathit{vb}$:
    \begin{itemize}
    \item Case $v = \mathit{vb}$ then use $\textsc{P-Val-Sucs}$.
    \item Case $v \neq \mathit{vb}$ then use $\textsc{P-Val-Fail}$.
    \end{itemize}
  \item Case $p = x$:
    We proceed by testing whether $x$ is in $\mathrm{dom} \; \sigma$
    \begin{itemize}
    \item Case $x \in \mathrm{dom} \; \sigma$: we proceed to test whether
      v is equal to $\sigma(x)$
      \begin{itemize}
      \item Case $v = \sigma(x)$ then use $\textsc{P-Var-Uni}$.
      \item Case $v \neq \sigma(x)$ then use $\textsc{P-Var-Fail}$.
      \end{itemize}
    \item Case $x \notin \mathrm{dom} \; \sigma$ then use $\textsc{P-Var-Bind}$.
    \end{itemize}
  \item Case $p = k(\esequence{p'})$:
    We proceed to test whether $v$ is equal to $k(\esequence{v'})$ for some \esequence{v'}
    \begin{itemize}
    \item Case $v = k(v')$ then use $\textsc{P-Cons-Sucs}$ using the induction
      hypotheses of $\esequence{p'}$ with $\esequence{v'}$.
    \item Case $v \neq k(v')$ then use $\textsc{P-Cons-Fail}$.
    \end{itemize}
  \item Case $p = t \; x : p'$:
    From our premise we know that $v$ is well-typed, i.e. that there exists a
    $t'$ such that $\typing{v}{t'}$. We proceed to test whether
    $\subtyping{t'}{t}$.
    \begin{itemize}
    \item Case $\subtyping{t'}{t}$ then use $\textsc{P-Type-Sucs}$ using the
      induction hypothesis on $p'$ with v.
    \item Case $\notsubtyping{t'}{t}$ then use $\textsc{P-Type-Fail}$.
    \end{itemize}
  \item Case $p = [\esequence{{{\star}p'}}]$:
   
    We proceed to test whether $v = [\esequence{v'}]$ for some value sequence
    $\esequence{v'}$.
    \begin{itemize}
    \item Case $v = [\esequence{v'}]$ then use $\textsc{P-List-Sucs}$ induction hypothesis given by \Cref{lem:matchallprogress} on $\esequence{{{\star}p'}}$ with $\esequence{v'}$
    \item Case $v \neq  [\esequence{v'}]$ then use $\textsc{P-List-Fail}$
    \end{itemize}
  \item Case $p = \{\esequence{{{\star}p'}}\}$:

     We proceed to test whether $v = \{\esequence{v'}\}$ for some value sequence
    $\esequence{v'}$.
    \begin{itemize}
    \item Case $v = \{\esequence{v'}\}$ then use $\textsc{P-Set-Sucs}$ induction hypothesis given by \Cref{lem:matchallprogress} on $\esequence{{{\star}p'}}$ with $\esequence{v'}$
    \item Case $v \neq  \{\esequence{v'}\}$ then use $\textsc{P-Set-Fail}$
    \end{itemize}
  \item Case $p = /p'$:

    \begin{itemize}
    \item Using the induction hypothesis on $p'$ with $v$, we get $\match{p'}{v}{\sigma}{\esequence{\rho}}$.
      \item Now, let $\esequence{v'} = \textrm{\textnormal{children}}(v)$.
    In order to handle the self-recursive calls using $/p'$, we proceed by inner
    well-founded induction on the relation $\prec$ using value sequence $\esequence{v'}$:
    \begin{itemize}
    \item Using the inner induction hypothesis we get derivations
      $\match{/p'}{v'_i}{\sigma}{\esequence{\rho_i}}$ for all $i$
    \end{itemize}
    \item Finally, we use $\textsc{P-Deep}$ on the derivations we got from the
      outer and inner induction hypotheses.
  \end{itemize}
  \end{itemize}
\end{proof}

\begin{lemma}
  \label{lem:matchallprogress}
  It is possible to construct a
  derivation
  $\matchall{\esequence{{{\star}p}}}{v}{\sigma}{\mathbb{V}}{\ttuple{} \quad \otimes}{\esequence{\rho}}$ for any pattern
  $p$, well-typed value $v$, well-typed store $\sigma$, well-typed visited value
  set $\mathbb{V}$, type-preserving partition operator $\otimes$.
\end{lemma}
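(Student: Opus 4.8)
The plan is to prove this by a nested well-founded induction that mirrors the two distinct recursive premises appearing in the rules \textsc{PL-More-Pat-Re} and \textsc{PL-More-Star-Re}. Each of these rules has one recursive premise on the strictly shorter pattern tail $\esequence{{\star}p}$ applied to a partitioned subsequence $\esequence{v''}$ (with the visited set reset to $\emptyset$), and one \emph{backtracking} premise that keeps both the pattern sequence and the value sequence fixed while enlarging the visited set $\mathbb{V}$ by the partition head just tried. I would therefore run an outer structural induction on the length of the pattern sequence $\esequence{{\star}p}$, and inside each step an inner well-founded induction on the measure $\mu(\esequence{v}, \mathbb{V})$ defined as the number of partitions $\esequence{v} = w \otimes \esequence{v''}$ whose head $w$ is not in $\mathbb{V}$. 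Because $\esequence{v}$ is a finite sequence (resp.\ finite multiset, for sets) and $\otimes$ admits only finitely many splits, $\mu$ is finite, and each backtracking step strictly decreases it since it excludes one further head; the two inductions then compose lexicographically on the pair (pattern length, $\mu$).

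Next I would case-split on the shape of the pattern sequence. The empty-pattern base cases are immediate: apply \textsc{PL-Emp-Both} when the value sequence is also empty and \textsc{PL-Emp-Pat} otherwise. For a non-empty sequence I branch on whether the head is an ordinary pattern $p$ or a star pattern ${\star}x$. For an ordinary head I test whether some partition $\esequence{v} = v' \otimes \esequence{v''}$ with $v' \notin \mathbb{V}$ exists; if none does I conclude with \textsc{PL-More-Pat-Exh}, and otherwise I assemble \textsc{PL-More-Pat-Re}, discharging its first premise by \Cref{lem:matchprogress} on $p$ and $v'$, its tail premise by the outer induction hypothesis on $\esequence{v''}$, and its backtracking premise by the inner induction hypothesis (valid since $\mu$ has dropped because $v' \notin \mathbb{V}$). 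For a star head ${\star}x$ I first test membership of $x$ in $\mathrm{dom}\,\sigma$: when $x \in \mathrm{dom}\,\sigma$ I inspect $\sigma(x)$ and select \textsc{PL-More-Star-Uni} if it is a collection $\ttuple{\esequence{v'}}$ that splits off $\esequence{v}$, \textsc{PL-More-Star-Pat-Fail} if it is a collection with no such split, and \textsc{PL-More-Star-Val-Fail} if it is not a collection; when $x \notin \mathrm{dom}\,\sigma$ I conclude with \textsc{PL-More-Star-Exh} when no untried subcollection partition remains and with \textsc{PL-More-Star-Re} otherwise, again using the outer hypothesis for the tail premise and the inner hypothesis for the backtracking premise.

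Throughout, the well-typedness side conditions required to invoke \Cref{lem:matchprogress} and both induction hypotheses are re-established from the assumption that $\otimes$ is type-preserving: every split of $\esequence{v}$ yields a well-typed head $v'$ (resp.\ $\esequence{v'}$) and a well-typed remainder $\esequence{v''}$, exactly the property exploited in item~\ref{it:pmalltypprop4} of \Cref{lem:patternmatchalltyping}, and the enlarged visited set $\mathbb{V} \cup \{v'\}$ stays well-typed. I expect the main obstacle to be the termination bookkeeping rather than any individual rule: I must show that $\mu$ is genuinely finite and strictly decreasing along every backtracking chain, and that the outer and inner inductions interlock correctly, namely that each backtracking premise leaves the pattern length unchanged (so the inner hypothesis applies) while each tail premise strictly shortens it (so the outer hypothesis applies). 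Both points reduce to the finiteness of the set of splits admitted by $\otimes$ on a finite value sequence together with the guard $v' \notin \mathbb{V}$, so once that finiteness is established the remaining cases are routine.
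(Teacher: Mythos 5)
Your proposal is correct and matches the paper's proof in all essentials: the same outer induction on the pattern sequence, the same inner well-founded induction on the number of untried partition heads (the paper phrases the measure as $|\mathbb{V}_{\mathrm{all}}| - |\mathbb{V}|$ with $\mathbb{V}_{\mathrm{all}}$ the set of all partition heads, which coincides with your $\mu$), and the same case analysis selecting \textsc{PL-Emp-Both}/\textsc{PL-Emp-Pat}, \textsc{PL-More-Pat-Re}/\textsc{PL-More-Pat-Exh} via \Cref{lem:matchprogress}, and the five star-pattern rules according to whether $x \in \mathrm{dom}\,\sigma$. Your explicit attention to the finiteness of the partition set and to the lexicographic interlocking of the two inductions is bookkeeping the paper leaves implicit, but it is the same argument.
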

\begin{proof}
  By induction on the syntax of $\esequence{{{\star}p}}$:
  \begin{itemize}
  \item Case $\esequence{{{\star}p}} = \varepsilon$

    By case analysis on $\esequence{v}$:
    \begin{itemize}
    \item Case $\esequence{v} = \varepsilon$ then use $\textsc{PL-Emp-Both}$
    \item Case $\esequence{v} = v', \esequence{v''}$ then use $\textsc{PL-Emp-Pat}$
    \end{itemize}
  \item Case $\esequence{{{\star}p}} = p', \esequence{{{\star}p''}}$:

    Because of backtracking we need to do an inner induction to handle the
    cases which recurse to the same pattern sequence. Let
    $\mathbb{V}_{\mathrm{all}} = \left\{ v' \middle| \exists \esequence{v''}. \esequence{v} =
      v' \otimes \esequence{v''} \right\}$, then we proceed by inner
    induction on $|\mathbb{V}_{\mathrm{all}}| - |\mathbb{V}|$:
    \begin{itemize}
    \item Case $|\mathbb{V}_{\mathrm{all}}| - |\mathbb{V}| = 0$ then we have
      $|\mathbb{V}_{\mathrm{all}}| = |\mathbb{V}|$ and thus we have to use $\textsc{PL-More-Pat-Exh}$.
    \item Case $|\mathbb{V}_{\mathrm{all}}| - |\mathbb{V}| > 0$:
      Then there exists a $\mathbb{V}' \neq \emptyset$ such that
      $\mathbb{V}_{\mathrm{all}} = \mathbb{V}' \uplus \mathbb{V}$, and we can
      pick a $v' \in \mathbb{V}'$ such that $\esequence{v} = v' \otimes
      \esequence{v''}$ for some $v'$.
      We can finally use $\textsc{PL-More-Pat-Re}$ using
      \Cref{lem:matchprogress} with $p$, $v'$ and $\sigma$, the outer induction
      hypothesis with $\esequence{{\star}p''}$, $\esequence{v''}$ and $\sigma$,
      and the inner induction hypothesis on $\mathbb{V} \cup \{v'\}$
    \end{itemize}
  \item Case $\esequence{{{\star}p}} = {\star}x, \esequence{{{\star}p''}}$:

    We proceed to test whether $x$ is in $\mathrm{dom} \; \sigma$
    \begin{itemize}
    \item Case $x \in \mathrm{dom} \; \sigma$:
      By case analysis on $\sigma(x) = v'$:
      \begin{itemize}
      \item Case $v' = \ttuple{\esequence{v''}}$:

        By checking whether there exists $\esequence{v'''}$ such that
        $\esequence{v} = \esequence{v''} \otimes \esequence{v'''}$
        \begin{itemize}
        \item Case $\exists \esequence{v'''}. \esequence{v} = \esequence{v''}
          \otimes \esequence{v'''}$ then use $\textsc{PL-More-Star-Uni}$ with
          induction hypothesis applied to $\esequence{{\star}p''}$ and $\esequence{v'''}$.
        \item Case $\nexists \esequence{v'''}. \esequence{v} = \esequence{v''}
          \otimes \esequence{v'''}$: then use rule $\textsc{PL-More-Star-Pat-Fail}$.
        \end{itemize}
      \item Case $v' \neq \ttuple{\esequence{v''}}$: then use $\textsc{PL-More-Star-Val-Fail}$.
      \end{itemize}
    \item Case $x \notin \mathrm{dom} \; \sigma$:
      
    Similarly to the ordinary pattern case, we need to do an inner induction to handle the
    cases which recurse to the same star pattern sequence. Let
    $\mathbb{V}_{\mathrm{all}} = \left\{ \esequence{v}' \middle| \exists \esequence{v''}. \esequence{v} =
      \esequence{v'} \otimes \esequence{v''} \right\}$, then we proceed by inner
    induction on $|\mathbb{V}_{\mathrm{all}}| - |\mathbb{V}|$
    \begin{itemize}
    \item Case $|\mathbb{V}_{\mathrm{all}}| - |\mathbb{V}| = 0$:

      Then we have $\mathbb{V} = \mathbb{V}_{\mathrm{all}}$ and so the only
      applicable rule is $\textsc{PL-More-Star-Exh}$ since
      $\mathbb{V}_{\mathrm{all}}$ covers all partitions and thus $\nexists \esequence{v'}, \esequence{v''}. \esequence{v} =
    \esequence{v'} \otimes \esequence{v''} \wedge \esequence{v'}
    \notin \mathbb{V}$ holds.
    \item Case $|\mathbb{V}_{\mathrm{all}}| - |\mathbb{V}| > 0$:

      Then there exists $\mathbb{V}' \neq \emptyset$ such that
      $\mathbb{V}_{\mathrm{all}} = \mathbb{V}' \uplus \mathbb{V}$ and we can
      pick $\esequence{v'} \in \mathbb{V}'$ such that $\esequence{v} =
      \esequence{v}' \otimes \esequence{v}''$ for some $\esequence{v}''$.
      
      We can now use $\textsc{PL-More-Star-Re}$ by using the outer hypothesis on
      $\esequence{{{\star}p}}$ with $\esequence{v''}$ and $\emptyset$, and the
      inner hypothesis on $\mathbb{V} \cup \{\esequence{v'}\}$ (since
      the size decreases by 1)
    \end{itemize}
    \end{itemize}
  \end{itemize}
\end{proof}

We now have a series of lemmas that are mutually recursive with
\Cref{thm:partialprogress}:
\Cref{lem:evalstarprogress}, \Cref{lem:evaleachprogress},
\Cref{lem:evalgenprogress},
\Cref{lem:evalcaseprogress}, \Cref{lem:evalcasesprogress},
\Cref{lem:evalvisitprogress}, \Cref{lem:evaltdvisitprogress},
\Cref{lem:evaltdvisitstarprogress}, \Cref{lem:evalbuvisitprogress}, and
\Cref{lem:evalbuvisitstarprogress}.

\begin{lemma}
  \label{lem:evalstarprogress}
  It is possible to construct a derivation
  $\evalexprstarfuel{\esequence{e}}{\sigma}{\mathit{vres}{\star}}{\sigma'}{n}$, for any
  expression sequence $\esequence{e}$,
  well-typed store $\sigma$ and fuel $n$.
\end{lemma}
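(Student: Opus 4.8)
The plan is to prove this by structural induction on the expression sequence $\esequence{e}$, treating it as one more case in the mutual induction with \Cref{thm:partialprogress}: the head expression is discharged by the induction hypothesis given by \Cref{thm:partialprogress}, and the tail by the induction hypothesis of this lemma on the strictly shorter sequence. Following the convention established in the surrounding text, I will handle any premise that returns $\keyword{timeout}$ by immediately applying the corresponding $\keyword{timeout}$-congruence rule, so the interesting work is confined to the cases where the sub-derivations produce an actual $\mathit{vres}$.

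For the base case $\esequence{e} = \varepsilon$, I simply apply \textsc{ES-Emp} to obtain $\evalexprstarfuel{\varepsilon}{\sigma}{\keyword{success} \; \varepsilon}{\sigma}{n}$.

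For the inductive case $\esequence{e} = e, \esequence{e'}$, I first invoke \Cref{thm:partialprogress} on the head $e$ with the given store $\sigma$ and fuel $n$, producing a derivation ending in some result $\mathit{vtres}$ and store $\sigma''$. I then case-split on $\mathit{vtres}$: if it is an exceptional result $\mathit{exres}$, the sequence derivation is closed with \textsc{ES-Exc1}; if it is $\keyword{success} \; v$, I apply this lemma's induction hypothesis to the tail $\esequence{e'}$ from $\sigma''$ and case-split on its result, selecting \textsc{ES-More} when the tail also succeeds and \textsc{ES-Exc2} when it raises an exceptional result (the $\keyword{timeout}$ sub-cases again closed by congruence).

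The main obstacle is discharging the precondition of the tail's induction hypothesis: it requires $\sigma''$ to be well-typed, but the lemma only hands me a well-typed $\sigma$. The fix is to thread the strong-typing invariant through the sequence --- since a non-$\keyword{timeout}$ fuelled derivation erases to an ordinary $\evalexpr$ derivation, \Cref{thm:strongtyping} applies to the head step and guarantees that every value in $\mathrm{img} \; \sigma''$ is well-typed, so the induction hypothesis is legitimately available at each step. With well-typedness secured, the remaining bookkeeping is routine: the fuel is merely passed to the two sub-derivations and decremented on recursion, and since the sequence is finite no separate termination argument is needed, the structural descent on $\esequence{e'}$ making the induction well-founded.
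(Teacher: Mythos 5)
Your skeleton for the $n>0$, non-empty case (head via \Cref{thm:partialprogress}, tail via this lemma, then \textsc{ES-More}/\textsc{ES-Exc1}/\textsc{ES-Exc2}, with \Cref{thm:strongtyping} threading well-typedness of the intermediate store to the tail hypothesis) is exactly the paper's, but the induction you build it on is not, and the fuel bookkeeping breaks. The paper proves this lemma by induction on the fuel $n$, not on the structure of the sequence: the case $n=0$ is closed by the dedicated $\keyword{timeout}$ rule, and for $n>0$ \emph{both} sub-derivations --- the head from \Cref{thm:partialprogress} and the tail from this lemma's own hypothesis --- are obtained at fuel $n-1$. That is forced by the fuelled rules themselves: fuel is subtracted on each recursive premise, so a conclusion carrying fuel $n$ must have premises carrying fuel $n-1$. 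Your proposal requests the head derivation from \Cref{thm:partialprogress} at fuel $n$; such a derivation cannot appear as a premise of the fuelled \textsc{ES-More} (or the exception/timeout congruence rules) with conclusion at fuel $n$, so the pieces you construct do not assemble into the derivation the lemma demands. If you repair this by instantiating both hypotheses at $n-1$, the structural descent on $\esequence{e'}$ no longer does any work --- the decrease is entirely in the fuel --- and you have simply reconstructed the paper's proof.

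Two further consequences of dropping the fuel induction. First, your blanket policy of "propagate $\keyword{timeout}$ by congruence" cannot handle $n=0$: a congruence rule needs a premise at fuel $n-1$, which does not exist at fuel $0$, so that case must be closed by the direct $\keyword{timeout}$ rule; your case analysis, which splits only on the sequence, never isolates it. Second, your closing claim that the structural descent makes the induction well-founded, so that "no separate termination argument is needed," overlooks why fuel exists in this development at all: \Cref{thm:partialprogress} is mutually dependent on this lemma, and its own proof recurses on expressions that are \emph{not} subterms of its input (function bodies at call sites, the re-evaluation of $\keyword{while}$ and $\keyword{solve}$, the visit relations). No syntactic measure orders those calls, so the mutual family of statements can only be justified by a measure in which the fuel decreases --- which is precisely the simultaneous induction on $n$ that the paper performs and that your proof silently relies on when it treats \Cref{thm:partialprogress} as an available induction hypothesis.
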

\begin{proof}
  By induction on $n$:
  \begin{itemize}
  \item Case $n = 0$ then use corresponding $\keyword{timeout}$-rule.
  \item Case $n > 0$, then by case analysis on the expression sequence
    $\esequence{e}$:
    \begin{itemize}
    \item Case $\esequence{e} = \varepsilon$ then use $\textsc{ES-Emp}$.
    \item Case $\esequence{e} = e', \esequence{e''}$:
      \begin{itemize}
      \item Using induction hypothesis given by \Cref{thm:partialprogress} on
        $n-1$ with $e'$ and $\sigma$ we get a derivation
        $\evalexprfuel{e'}{\sigma}{\mathit{vres}'}{\sigma''}{n-1}$
        
        By case analysis on $\mathit{vres}'$:
        \begin{itemize}
        \item Case $\mathit{vres}' = \keyword{success} \; v'$:

        By \Cref{thm:strongtyping} we get that $\sigma''$ is well-typed.

        By induction hypothesis on $n-1$ with $\esequence{e''}$ and $\sigma''$
        we get a derivation
        $\evalexprstarfuel{\esequence{e''}}{\sigma''}{\mathit{vres}{\star}''}{\sigma'}{n-1}$
        
        By case analysis on $\mathit{vres}{\star}''$:
        \begin{itemize}
        \item Case $\mathit{vres}{\star}'' = \keyword{success} \;
          \esequence{v''}$ then use $\textsc{ES-More}$
        \item Case $\mathit{vres}{\star}'' = \mathit{exres}$ then use $\textsc{ES-Exc2}$
        \end{itemize}
        \item Case $\mathit{vres}' = \mathit{exres}$ then use \textsc{ES-Exc1}
        \end{itemize}
      \end{itemize}
    \end{itemize}
  \end{itemize}
\end{proof}

\begin{lemma}
  \label{lem:evaleachprogress}
  It is possible to construct a derivation
  $\evaleachfuel{e}{\esequence{\rho}}{\sigma}{\mathit{vres}}{\sigma'}{n}$ for
  any expression $\esequence{e}$, well-typed environment sequence
  $\esequence{\rho}$, well-typed store $\sigma$ and fuel $n$.
\end{lemma}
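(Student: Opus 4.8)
The plan is to mirror the structure of the proof of \Cref{lem:evalstarprogress}, proceeding by induction on the fuel $n$. In the base case $n = 0$, no recursion is permitted, so I would immediately apply the corresponding $\keyword{timeout}$ congruence rule to obtain a derivation with result $\keyword{timeout}$. For the inductive case $n > 0$, I would perform a case analysis on the environment sequence $\esequence{\rho}$.

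If $\esequence{\rho} = \varepsilon$, then rule \textsc{EE-Emp} directly yields a derivation with result $\keyword{success} \; \blacksquare$ and unchanged store. Otherwise $\esequence{\rho} = \rho, \esequence{\rho'}$, and the first step is to observe that the extended store $\sigma\rho$ remains well-typed, since $\sigma$ is well-typed by assumption and $\rho$ contributes only well-typed values (the environment sequence is well-typed). I would then invoke the induction hypothesis given by \Cref{thm:partialprogress} at fuel $n - 1$ on expression $e$ and store $\sigma\rho$ to obtain a derivation $\evalexprfuel{e}{\sigma\rho}{\mathit{vres}}{\sigma''}{n-1}$.

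The proof then branches on the shape of $\mathit{vres}$. When $\mathit{vres} = \keyword{success} \; v$ or $\mathit{vres} = \keyword{continue}$, I would first apply \Cref{thm:strongtyping} to conclude that $\sigma''$---and hence $\sigma'' \setminus \mathrm{dom} \; \rho$---is well-typed, then apply the (outer) induction hypothesis at fuel $n - 1$ to the remaining bindings $\esequence{\rho'}$ with this restored store, and combine the two sub-derivations using \textsc{EE-More-Sucs}. When $\mathit{vres} = \keyword{break}$, rule \textsc{EE-More-Break} applies directly, and when $\mathit{vres}$ is one of $\keyword{throw} \; v$, $\keyword{return} \; v$, $\keyword{fail}$, or $\keyword{error}$, rule \textsc{EE-More-Exc} applies. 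As in the surrounding lemmas, the subcases where a recursive premise itself times out are discharged trivially by the $\keyword{timeout}$ congruence rules, so I would focus only on the non-timeout subcases.

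The main obstacle is bookkeeping rather than conceptual: I must ensure the fuel decrements line up with the recursive premises of \textsc{EE-More-Sucs} (both the evaluation of $e$ and the recursion over $\esequence{\rho'}$ consume exactly one unit of fuel relative to the conclusion), and that the well-typedness invariant is threaded correctly through the store transformations---extension by $\rho$ before evaluation and removal of $\mathrm{dom} \; \rho$ afterward---relying on \Cref{thm:strongtyping} together with the (already invoked) facts that store extension by well-typed environments and variable removal both preserve well-typedness.
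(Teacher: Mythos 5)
Your proposal matches the paper's proof essentially step for step: induction on the fuel $n$, a $\keyword{timeout}$ derivation when $n = 0$, case analysis on $\esequence{\rho}$ (rule \textsc{EE-Emp} for $\varepsilon$), then an appeal to \Cref{thm:partialprogress} on $e$ with store $\sigma\rho$ at fuel $n-1$, followed by the same three-way case split on the result (\textsc{EE-More-Sucs} with the outer induction hypothesis on the restored store $\sigma'' \setminus \mathrm{dom}\;\rho$, \textsc{EE-More-Break}, and \textsc{EE-More-Exc}). Your additional remarks about threading well-typedness through store extension and removal only make explicit what the paper leaves implicit, so the argument is correct and the same as the paper's.
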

\begin{proof}
  By induction on $n$:

  \begin{itemize}
  \item Case $n = 0$ then use the corresponding $\keyword{timeout}$-derivation.
  \item Case $n > 0$:

   By case analysis on the environment sequence $\esequence{\rho}$:
  \begin{itemize}
  \item Case $\esequence{\rho} = \varepsilon$ then use $\textsc{EE-Emp}$
  \item Case $\esequence{\rho} = \rho', \esequence{\rho''}$:

    By induction hypothesis given by \Cref{thm:partialprogress} on $n - 1$ with
    $e$ and $\sigma \rho$ we get a derivation $\evalexprfuel{e}{\sigma
      \rho}{\mathit{vres}'}{\sigma''}{n - 1}$.

    By case analysis on $\mathit{vres}'$:
    \begin{itemize}
    \item Cases $\mathit{vres}' = \keyword{success} \; v'$ and $\mathit{vres}' =
      \keyword{continue}$ then use $\textsc{EE-More-Sucs}$ with above derivation
      and the induction hypothesis on  $n-1$ with $\esequence{\rho''}$, $e$,
      and $\sigma''' = \sigma'' \setminus \mathrm{dom} \; \rho'$.
    \item Cases $\mathit{vres}' = \keyword{break}$ then use $\textsc{EE-More-Break}$
    \item Cases $\mathit{vres}' = \keyword{throw} \; v'$, $\mathit{vres}' =
      \keyword{return} \; v'$, $\mathit{vres}' = \keyword{fail}$, and
      $\mathit{vres}' = \keyword{error}$ then use $\textsc{EE-More-Exc}$
    \end{itemize}
  \end{itemize}
  \end{itemize}
\end{proof}

\begin{lemma}
  \label{lem:evalgenprogress}
  It is possible to construct a derivation
  $\evalgenexprfuel{g}{\sigma}{\mathit{envres}}{\sigma'}{n}$ for any generator
  expression $g$, well-typed store $\sigma$ and fuel $n$.
\end{lemma}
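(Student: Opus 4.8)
The plan is to proceed by induction on the fuel $n$, mirroring the structure of the neighbouring progress lemmas (\Cref{lem:evalstarprogress}, \Cref{lem:evaleachprogress}). For the base case $n = 0$, the corresponding $\keyword{timeout}$-rule immediately yields a derivation. For $n > 0$, I would do a case analysis on the generator expression $g$, which is by the grammar either a matching assignment $p \coloneqq e$ or an enumerating assignment $x <- e$.

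For the matching-assignment case $g = p \coloneqq e$, I would first invoke the induction hypothesis given by \Cref{thm:partialprogress} on $n-1$ with $e$ and $\sigma$ to obtain a derivation $\evalexprfuel{e}{\sigma}{\mathit{vres}}{\sigma'}{n-1}$, and then case-split on $\mathit{vres}$. If $\mathit{vres} = \keyword{success}\; v$, then \Cref{thm:strongtyping} tells us that $v$ is well-typed and that $\sigma'$ remains well-typed, so \Cref{lem:matchprogress} lets me construct a pattern-match derivation $\match{p}{v}{\sigma'}{\vec{\rho}}$, and I conclude with \textsc{G-Pat-Sucs}. If $\mathit{vres}$ is an exceptional result, \textsc{G-Pat-Exc} applies directly.

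For the enumerating-assignment case $g = x <- e$, I would again use \Cref{thm:partialprogress} on $n-1$ with $e$ and $\sigma$, and case-split on the resulting $\mathit{vres}$. If it is an exceptional result, \textsc{G-Enum-Exc} applies. If $\mathit{vres} = \keyword{success}\; v$, I would do an exhaustive case analysis on the shape of $v$: a list value selects \textsc{G-Enum-List}, a set value \textsc{G-Enum-Set}, and a map value \textsc{G-Enum-Map}; in the remaining cases---$v$ a basic value, a constructor value, or $\blacksquare$---\textsc{G-Enum-Err} applies. Since these exhaust the value grammar, some rule always applies.

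The proof presents no real difficulty beyond bookkeeping: the generator judgment never recurses into itself, so the only fuel-consuming premise is the single subexpression evaluation, and decrementing to $n-1$ suffices (the pattern-match premise is drawn from the fuel-free $\match{}{}{}{}$ judgment via \Cref{lem:matchprogress}). The only point requiring genuine care is the matching-assignment case, where one must first establish through \Cref{thm:strongtyping} that the post-evaluation store $\sigma'$ and the produced value $v$ are well-typed before applying \Cref{lem:matchprogress}, whose hypotheses demand exactly that well-typedness.
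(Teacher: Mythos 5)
Your proof is correct and follows essentially the same route as the paper's: induction on the fuel $n$, the $\keyword{timeout}$-rule at $n=0$, case analysis on $g$, the mutual induction hypothesis from \Cref{thm:partialprogress} for the subexpression, and then the appropriate rule (\textsc{G-Pat-Sucs}/\textsc{G-Pat-Exc} or \textsc{G-Enum-List}/\textsc{G-Enum-Set}/\textsc{G-Enum-Map}/\textsc{G-Enum-Err}/\textsc{G-Enum-Exc}) by case analysis on the result. Your explicit appeal to \Cref{thm:strongtyping} to establish the well-typedness of $v$ and $\sigma'$ before invoking \Cref{lem:matchprogress} is a step the paper leaves implicit, and it is a sound and welcome addition.
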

\begin{proof}
  By induction on $n$:

  \begin{itemize}
  \item Case $n = 0$  then use the corresponding $\keyword{timeout}$-derivation.
  \item Case $n > 0$:

    By case analysis on $g$:
    \begin{itemize}
    \item Case $g = p \coloneqq e$:

      By induction hypothesis given by \Cref{thm:partialprogress} on $n - 1$ with
      $e$ and $\sigma$ we get a derivation
      $\evalexprfuel{e}{\sigma}{\mathit{vres}'}{\sigma'}{n - 1}$

      By case analysis on $\mathit{vres}'$:
      \begin{itemize}
      \item Case $\mathit{vres}' = \keyword{success} \; v'$ then use
        $\textsc{G-Pat-Sucs}$ with above derivation and the derivation from
        applying \Cref{lem:matchprogress} on $p$ with $\sigma'$ and $v'$.
      \item Case $\mathit{vres}' = \mathit{exres}$ then use $\textsc{G-Pat-Exc}$:
      \end{itemize}
    \item Case $g = x <- e$:
      
      By induction hypothesis given by \Cref{thm:partialprogress} on $n - 1$ with
      $e$ and $\sigma$ we get a derivation
      $\evalexprfuel{e}{\sigma}{\mathit{vres}'}{\sigma'}{n - 1}$
      
      By case analysis on $\mathit{vres}'$
      \begin{itemize}
      \item Case $\mathit{vres}' = \keyword{success} \; v'$  
        
      By case analysis on $v'$:
      \begin{itemize}
      \item Case $v' = [\esequence{v''}]$ then use $G-Enum-List$
      \item Case $v' = \{\esequence{v''}\}$ then use $G-Enum-Set$
      \item Case $v' = (\esequence{v'' : v'''})$ then use $G-Enum-Map$
      \item Cases $v' = \mathit{vb}$, $v' = k(\esequence{v''})$ and $v' = \blacksquare$ then use $\textsc{G-Enum-Err}$
      \end{itemize}
      \item Case $\mathit{vres}' = \mathit{exres}$ then use $\textsc{G-Enum-Exc}$
      \end{itemize}
    \end{itemize}
  \end{itemize}
\end{proof}

\begin{lemma}
  \label{lem:evalcaseprogress}
  It is possible to construct a derivation
  $\evalcasefuel{\esequence{\rho}}{e}{\sigma}{\mathit{vres}}{\sigma'}{n}$ for
  any well-typed environment sequence $\esequence{\rho}$, expression $e$,
  well-typed store $\sigma$ and fuel $n$.
\end{lemma}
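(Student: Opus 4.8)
The plan is to mirror the structure of the analogous iteration lemma \Cref{lem:evaleachprogress}, proceeding by induction on the fuel $n$. For the base case $n = 0$ I would immediately discharge the goal with the corresponding $\keyword{timeout}$-congruence rule, since no recursive premises are permitted. For the inductive case $n > 0$, I would perform case analysis on the environment sequence $\esequence{\rho}$.

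If $\esequence{\rho} = \varepsilon$, the rule $\textsc{EC-Emp}$ applies directly, producing $\keyword{fail}$ with the unchanged store. If $\esequence{\rho} = \rho', \esequence{\rho''}$, the first step is to apply the induction hypothesis supplied by \Cref{thm:partialprogress} at fuel $n - 1$ to the expression $e$ evaluated in the extended store $\sigma\rho'$. This requires $\sigma\rho'$ to be well-typed, which follows from well-typedness of $\sigma$ together with the assumption that $\rho'$ — being an environment in the well-typed sequence $\esequence{\rho}$ — is well-typed, using the standard fact (already invoked in the companion lemmas) that extending a well-typed store with a well-typed environment preserves well-typedness. This yields a derivation $\evalexprfuel{e}{\sigma\rho'}{\mathit{vres}'}{\sigma''}{n-1}$, and I would then branch on $\mathit{vres}'$.

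If $\mathit{vres}' = \keyword{fail}$, I would apply $\textsc{EC-More-Fail}$, whose second premise is a recursive case evaluation on the tail $\esequence{\rho''}$ in the \emph{restored} initial store $\sigma$; for this I invoke the induction hypothesis of the present lemma at fuel $n - 1$, noting that $\sigma$ is already well-typed. If instead $\mathit{vres}' \neq \keyword{fail}$, then $\textsc{EC-More-Ord}$ applies directly, propagating $\mathit{vres}'$ with final store $\sigma'' \setminus \mathrm{dom} \; \rho'$.

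Since this lemma concerns only the constructibility of a derivation (partial progress) rather than any soundness property, the argument is essentially routine. The single point requiring care is the fuel accounting: both premises of $\textsc{EC-More-Fail}$ — the expression evaluation and the recursive case evaluation — must each be discharged at fuel $n - 1$, which is exactly what the one decrement of the surrounding budget supplies. The only genuinely nontrivial ingredient is therefore the preservation of store well-typedness needed to apply \Cref{thm:partialprogress}, and that is supplied by the store-extension fact noted above; I do not anticipate any real obstacle beyond keeping the store bookkeeping (restoration to $\sigma$ under $\textsc{EC-More-Fail}$ versus the variable cleanup $\sigma'' \setminus \mathrm{dom} \; \rho'$ under $\textsc{EC-More-Ord}$) consistent with the rules.
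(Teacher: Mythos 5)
Your proof is correct and follows essentially the same route as the paper's: induction on the fuel $n$, the $\keyword{timeout}$ rule at $n=0$, case analysis on the environment sequence, then on the result of evaluating $e$ via \Cref{thm:partialprogress} at fuel $n-1$, closing with \textsc{EC-Emp}, \textsc{EC-More-Fail} (using the inner induction hypothesis on the tail at fuel $n-1$), or \textsc{EC-More-Ord}. If anything you are slightly more careful than the paper's own text, which invokes \Cref{thm:partialprogress} with store $\sigma$ even though the rules evaluate $e$ in the extended store $\sigma\rho'$ --- exactly the point you discharge via well-typedness of store extension.
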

\begin{proof}
  By induction on $n$:
  \begin{itemize}
  \item Case $n = 0$ then use the corresponding $\keyword{timeout}$-derivation.
  \item Case $n > 0$:
    
    By case analysis on the environment sequence $\esequence{\rho}$
    \begin{itemize}
    \item Case $\esequence{\rho} = 0$ then use $\textsc{EC-Emp}$.
    \item Case $\esequence{\rho} = \rho', \esequence{\rho''}$:

      Using the induction hypothesis given by \Cref{thm:partialprogress} on $n -
      1$ with $e$ and $\sigma$ we get a derivation
      $\evalexprfuel{e}{\sigma}{\mathit{vres}'}{\sigma'}{n - 1}$.

      By case analysis on $\mathit{vres}'$
      \begin{itemize}
      \item Case $\mathit{vres}' = \keyword{fail}$ then use
        $\textsc{EC-More-Fail}$ on above derivation and the derivation from
        applying the induction hypothesis on $n - 1$ with $\esequence{\rho''}$,
        $e$ and $\sigma$.
      \item Case $\mathit{vres}' \neq \keyword{fail}$ then use $\textsc{EC-More-Ord}$.
      \end{itemize}
    \end{itemize}
  \end{itemize}
\end{proof}

\begin{lemma}
  \label{lem:evalcasesprogress}
  It is possible to construct a derivation
  $\evalcasesfuel{\esequence{\mathit{cs}}}{v}{\sigma}{\mathit{vres}}{\sigma'}{n}$ for
  any case sequence $\esequence{\mathit{cs}}$, well-typed value $v$,
  well-typed store $\sigma$ and fuel $n$.
\end{lemma}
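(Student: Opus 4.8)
The plan is to proceed by induction on the fuel $n$, mirroring exactly the structure of the preceding progress lemmas (notably \Cref{lem:evalcaseprogress}). For the base case $n = 0$ I would immediately discharge the goal with the corresponding $\keyword{timeout}$-derivation, so all the real work lives in the inductive step $n > 0$, where I would perform a case analysis on the shape of the case sequence $\esequence{\mathit{cs}}$.

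When $\esequence{\mathit{cs}} = \varepsilon$, the rule \textsc{ECS-Emp} applies unconditionally, so this subcase is immediate. When $\esequence{\mathit{cs}} = \keyword{case} \; p => e, \esequence{\mathit{cs}'}$, I would first invoke \Cref{lem:matchprogress} on $p$, the well-typed value $v$, and the well-typed store $\sigma$ to obtain a matching derivation $\match{p}{v}{\sigma}{\esequence{\rho}}$. Because $v$ and $\sigma$ are well-typed, \Cref{lem:patternmatchtyping} guarantees that the resulting binding sequence $\esequence{\rho}$ is itself well-typed, which is precisely the precondition needed to apply \Cref{lem:evalcaseprogress} on $n-1$ with $\esequence{\rho}$, $e$, and $\sigma$, yielding a single-case derivation $\evalcasefuel{\esequence{\rho}}{e}{\sigma}{\mathit{vres}'}{\sigma''}{n-1}$.

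I would then split on the result $\mathit{vres}'$. If $\mathit{vres}' = \keyword{fail}$, I would apply the induction hypothesis on $n-1$ to the remaining cases $\esequence{\mathit{cs}'}$ with the \emph{original} store $\sigma$ and conclude with \textsc{ECS-More-Fail}. If $\mathit{vres}' \neq \keyword{fail}$, the derivation is completed directly by \textsc{ECS-More-Ord}, requiring no recursive call. Since the match premise carries no fuel, the only fuel-consuming sub-derivations are the $\evalcase$ premise and (in the fail branch) the recursive $\evalcases$ premise, both of which are supplied with $n-1$, so the accounting is routine.

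The main point to get right will be the bookkeeping of the well-typedness side conditions demanded by the auxiliary lemmas. The observation that keeps this clean is that backtracking is pure: the recursive \textsc{ECS-More-Fail} premise is evaluated against the \emph{initial} store $\sigma$ rather than the post-case store $\sigma''$, so the induction hypothesis is applied using a store already assumed well-typed, and no appeal to strong typing (\Cref{thm:strongtyping}) is needed to re-establish that precondition. The single genuine typing obligation—well-typedness of the bindings $\esequence{\rho}$ feeding the $\evalcase$ call—is discharged entirely by \Cref{lem:patternmatchtyping}.
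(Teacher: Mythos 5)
Your proposal is correct and follows essentially the same route as the paper's proof: induction on the fuel $n$, case analysis on $\esequence{\mathit{cs}}$, using \Cref{lem:matchprogress} and \Cref{lem:patternmatchtyping} to feed a well-typed binding sequence into \Cref{lem:evalcaseprogress} at fuel $n-1$, and closing with \textsc{ECS-Emp}, \textsc{ECS-More-Fail} (with the inductive hypothesis on the remaining cases at the original store $\sigma$), or \textsc{ECS-More-Ord}. Your added observation that the \textsc{ECS-More-Fail} premise re-uses the initial store $\sigma$, so no appeal to \Cref{thm:strongtyping} is needed, is accurate and matches the paper, which likewise never invokes strong typing here.
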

\begin{proof}
  By induction on $n$:
  \begin{itemize}
  \item Case $n = 0$ then use the corresponding $\keyword{timeout}$-derivation.
  \item Case $n > 0$:

  By case analysis on the case sequence $\esequence{\mathit{cs}}$
  \begin{itemize}
  \item Case $\esequence{\mathit{cs}} = \varepsilon$ then use \textsc{ECS-Emp}
  \item Case $\esequence{\mathit{cs}} = \keyword{case} \; p => e,
    \esequence{\mathit{cs}}'$:

    Using \Cref{lem:matchprogress} on $p$ with $v$ and $\sigma$ gives us a
    derivation $\match{p}{v}{\sigma}{\esequence{\rho}}$. By
    \Cref{lem:patternmatchtyping} we know that $\esequence{\rho}$ is well-typed.

    Using induction hypothesis given by \Cref{lem:evalcaseprogress} on $n-1$
    with $\esequence{\rho}$, e and $\sigma$ we get a derivation
    $\evalcasefuel{\esequence{\rho}}{e}{\sigma}{\mathit{vres}'}{\sigma'}{n -1}$.
    
    By case analysis on $\mathit{vres}'$:
    \begin{itemize}
    \item Case $\mathit{vres}' = \keyword{fail}$ then use
      $\textsc{ECS-More-Fail}$ using above derivation and the derivation given
      by the induction
      hypothesis on $n - 1$ with $\esequence{\mathit{cs}}'$, $v$ and $\sigma$.
    \item Case $\mathit{vres}' \neq \keyword{fail}$ then use $\textsc{ECS-More-Ord}$
    \end{itemize}
  \end{itemize}
  \end{itemize}
\end{proof}

\begin{lemma}
  \label{lem:evalvisitprogress}
  It is possible to construct a derivation 
  $\evalvisitfuel{\mathit{st}}{\esequence{\mathit{cs}}}{v}{\sigma}{\mathit{vres}}{\sigma'}{n}$
  for any case sequence $\esequence{\mathit{cs}}$, well-typed value $v$,
  well-typed store $\sigma$, traversal strategy $\mathit{st}$ and fuel $n$.
\end{lemma}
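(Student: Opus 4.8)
The plan is to proceed by induction on the fuel $n$, mirroring the structure of the other progress lemmas such as \Cref{lem:evalstarprogress} and \Cref{lem:evalcasesprogress}. In the base case $n = 0$ I would simply invoke the corresponding $\keyword{timeout}$-derivation, so all the interesting work lies in the case $n > 0$, which I would handle by case analysis on the traversal strategy $\mathit{st}$. The six strategies split naturally into two groups: the four that delegate directly to a single traversal, and the two fixed-point strategies that iterate.

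For the four non-fixed-point strategies the argument should be immediate delegation. For $\keyword{top-down}$ and $\keyword{top-down-break}$ I would apply the induction hypothesis given by \Cref{lem:evaltdvisitprogress} at fuel $n-1$ with the appropriate breaking flag---$\keyword{no-break}$ and $\keyword{break}$ respectively---and then conclude with \textsc{EV-TD} or \textsc{EV-TDB}. Symmetrically, for $\keyword{bottom-up}$ and $\keyword{bottom-up-break}$ I would use \Cref{lem:evalbuvisitprogress} and conclude with \textsc{EV-BU} or \textsc{EV-BUB}. No case analysis on the result of the traversal is needed here, since these rules propagate the traversal result unchanged.

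The genuinely delicate cases are the fixed-point strategies $\keyword{innermost}$ and $\keyword{outermost}$, and I expect these to be the main obstacle. Taking $\keyword{innermost}$ as representative, I would first apply \Cref{lem:evalbuvisitprogress} at fuel $n-1$ to obtain a bottom-up derivation with some result $\mathit{vres}'$ and store $\sigma''$, then perform case analysis on $\mathit{vres}'$. If it is an exceptional result I conclude with \textsc{EV-IM-Exc}; if it is $\keyword{success} \; v$ with the result value equal to the input value I conclude with \textsc{EV-IM-Eq}. The subtle subcase is $\keyword{success} \; v'$ with $v' \neq v$, where I must fire the recursive premise $\evalvisit{\keyword{innermost}}{\esequence{\mathit{cs}}}{v'}{\sigma''}{\mathit{vres}}{\sigma'}$ and close with \textsc{EV-IM-Neq}. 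To invoke the outer induction hypothesis (this very lemma) at fuel $n-1$ on that recursive call, I need that $v'$ is well-typed and that $\sigma''$ is well-typed; both follow from \Cref{lem:evalbuvisittyping} applied to the bottom-up derivation together with the assumed well-typedness of $v$ and $\sigma$. The $\keyword{outermost}$ case is entirely analogous, using \Cref{lem:evaltdvisitprogress} and \Cref{lem:evaltdvisittyping} in place of their bottom-up counterparts and concluding with \textsc{EV-OM-Eq}, \textsc{EV-OM-Neq}, or \textsc{EV-OM-Exc}.

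The main difficulty is therefore not the control flow but ensuring that the fuel bookkeeping and the typing side-conditions line up in the fixed-point subcases: the recursive $\keyword{innermost}$ or $\keyword{outermost}$ visit must be justified by the induction hypothesis at strictly smaller fuel, and threading the relevant typing lemma through to re-establish well-typedness of the intermediate value and store is precisely what legitimizes that recursive appeal. Everything else reduces to a routine dispatch on the strategy and on the shape of the traversal result.
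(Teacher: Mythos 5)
Your proposal is correct and follows essentially the same route as the paper's proof: induction on the fuel $n$ with a $\keyword{timeout}$ base case, a case split on $\mathit{st}$, direct delegation to \Cref{lem:evaltdvisitprogress} and \Cref{lem:evalbuvisitprogress} for the four non-fixed-point strategies, and for $\keyword{innermost}$/$\keyword{outermost}$ the same Eq/Neq/Exc analysis using \Cref{lem:evalbuvisittyping} and \Cref{lem:evaltdvisittyping} to re-establish well-typedness of the intermediate value and store. If anything, your handling of the Neq subcase is slightly more precise than the paper's: the recursive premise of \textsc{EV-IM-Neq}/\textsc{EV-OM-Neq} is a $\mathrm{visit}$ derivation, so it must indeed come from the induction hypothesis of this very lemma at fuel $n-1$ as you state, whereas the paper's text (presumably a slip) cites \Cref{lem:evalbuvisitprogress}/\Cref{lem:evaltdvisitprogress} for that recursive call.
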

\begin{proof}
By induction on $n$:
\begin{itemize}
\item Case $n = 0$ then use the corresponding $\keyword{timeout}$-derivation.
\item Case $n > 0$:

  By case analysis on syntax of $\mathit{st}$
  \begin{itemize}
  \item Case $\mathit{st} = \keyword{top-down}$ then use $\textsc{EV-TD}$ with
    the derivation from the induction hypothesis given by \Cref{lem:evaltdvisitprogress}
    on $n - 1$ with $\esequence{\mathit{cs}}$, $v$, $\sigma$ and $\keyword{no-break}$.
  \item Case $\mathit{st} = \keyword{top-down-break}$ then use $\textsc{EV-TDB}$
    with the derivation from the induction hypothesis given by \Cref{lem:evaltdvisitprogress}
    on $n - 1$ with $\esequence{\mathit{cs}}$, $v$, $\sigma$ and $\keyword{break}$.
  \item Case $\mathit{st} = \keyword{bottom-up}$ then use $\textsc{EV-BU}$ with
    the derivation from the induction hypothesis given by \Cref{lem:evalbuvisitprogress}
    on $n - 1$ with $\esequence{\mathit{cs}}$, $v$, $\sigma$ and $\keyword{no-break}$.
  \item Case $\mathit{st} = \keyword{bottom-up-break}$ then use
    $\textsc{EV-BUB}$ with
    the derivation from the induction hypothesis given by \Cref{lem:evalbuvisitprogress}
    on $n - 1$ with $\esequence{\mathit{cs}}$, $v$, $\sigma$ and $\keyword{break}$.
  \item Case $\mathit{st} = \keyword{outermost}$:

    Using the induction hypothesis by \Cref{lem:evaltdvisitprogress} on $n - 1$
    with $\esequence{\mathit{cs}}$, $v$, $\sigma$ and $\keyword{no-break}$ we get a
    derivation
    $\evaltdvisitfuel{\esequence{\mathit{cs}}}{v}{\sigma}{\keyword{no-break}}{\mathit{vres}'}{\sigma''}{n-1}$.
    We know the well-typedness of the output components from \Cref{lem:evaltdvisittyping}.

    By case analysis on $\mathit{vres}'$:
    \begin{itemize}
    \item Case $\mathit{vres}' = \keyword{success} \; v'$:
      
      We proceed by checking whether $v = v'$:
      \begin{itemize}
      \item Case $v = v'$ then use $\textsc{EV-OM-Eq}$ using above derivation.
      \item Case $v \neq v'$ then use $\textsc{EV-OM-Neq}$ using above
        derivation and the derivation from the induction hypothesis given by
        \Cref{lem:evaltdvisitprogress} on $n - 1$ with
        $\esequence{\mathit{cs}}$, $\mathit{v'}$, $\sigma''$ and $\keyword{no-break}$.
      \end{itemize}
      \item Case $\mathit{vres}' = \mathit{exres}$ then use $\textsc{EV-OM-Exc}$
        using above derivation.
      \end{itemize}
    \item Case $\mathit{st} = \keyword{innermost}$:
      Using the induction hypothesis by \Cref{lem:evalbuvisitprogress} on $n - 1$
      with $\esequence{\mathit{cs}}$, $v$, $\sigma$ and $\keyword{no-break}$ we get a
      derivation
      $\evalbuvisitfuel{\esequence{\mathit{cs}}}{v}{\sigma}{\keyword{no-break}}{\mathit{vres}'}{\sigma''}{n-1}$.
      We know the well-typedness of the output components from \Cref{lem:evalbuvisittyping}.

      By case analysis on $\mathit{vres}'$:
      \begin{itemize}
      \item Case $\mathit{vres}' = \keyword{success} \; v'$:

        We proceed by checking whether $v = v'$:
        \begin{itemize}
        \item Case $v = v'$ then use $\textsc{EV-IM-Eq}$ using above derivation.
        \item Case $v \neq v'$ then use $\textsc{EV-IM-Neq}$ using above
          derivation and the derivation from the induction hypothesis given by
          \Cref{lem:evalbuvisitprogress} on $n - 1$ with
          $\esequence{\mathit{cs}}$, $\mathit{v'}$, $\sigma''$ and $\keyword{no-break}$.
        \end{itemize}
      \item Case $\mathit{vres}' = \mathit{exres}$ then use $\textsc{EV-IM-Exc}$
        using above derivation.
      \end{itemize}
    \end{itemize}
  \end{itemize}
  \end{proof}

  \begin{lemma}
    \label{lem:evaltdvisitprogress}
    It is possible to construct a derivation
    $\evaltdvisitfuel{\esequence{\mathit{cs}}}{v}{\sigma}{\mathit{br}}{\mathit{vres}}{\sigma'}{n}$ for any $\esequence{\mathit{cs}}$, well-typed value $v$,
    well-typed store $\sigma$, breaking strategy $\mathit{br}$ and fuel $n$.
  \end{lemma}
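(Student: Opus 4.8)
The plan is to proceed by induction on the fuel $n$, mirroring the structure of the neighbouring progress lemmas (\Cref{lem:evalstarprogress}, \Cref{lem:evalcasesprogress}). The base case $n = 0$ is immediate: the corresponding $\keyword{timeout}$ congruence rule applies directly. As in the surrounding lemmas, I would also note at the outset that whenever an appeal to an induction hypothesis yields $\keyword{timeout}$, the matching congruence rule produces a $\keyword{timeout}$ derivation for the conclusion, so attention can be restricted to the non-$\keyword{timeout}$ results throughout. For the inductive case $n > 0$, the first step is to evaluate the case sequence against the scrutinee by appealing to the induction hypothesis given by \Cref{lem:evalcasesprogress} on $n - 1$ with $\esequence{\mathit{cs}}$, $v$, and $\sigma$, yielding a derivation $\evalcasesfuel{\esequence{\mathit{cs}}}{v}{\sigma}{\mathit{vres}'}{\sigma''}{n-1}$. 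By \Cref{lem:evalcasestyping}, using the assumed well-typedness of $v$ and $\sigma$, we know that $\sigma''$ is well-typed and that any value carried by $\mathit{vres}'$ is well-typed.

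Next I would perform a case analysis on $\mathit{vres}'$ together with the breaking strategy $\mathit{br}$. If $\mathit{vres}' = \keyword{success} \; v'$ and $\mathit{br} = \keyword{break}$, the traversal stops at the first successful match and \textsc{ETV-Break-Sucs} applies. If instead $\mathit{vres}'$ is an exceptional result other than $\keyword{fail}$, then \textsc{ETV-Exc1} applies directly. The remaining situations---where $\mathit{vres}' = \keyword{success} \; v'$ with $\mathit{br} \neq \keyword{break}$, or $\mathit{vres}' = \keyword{fail}$---all satisfy the shared side condition $\mathit{br} \neq \keyword{break} \vee \mathit{vfres} = \keyword{fail}$ of the recursive rules, and can be handled uniformly.

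For that uniform case I would set $v'' = \textrm{if-fail}(\mathit{vres}', v)$ and $\esequence{v'''} = \textrm{children}(v'')$. Well-typedness of $v''$ follows from \Cref{lem:iffailtyping} (using the well-typedness of $v$ and of the value carried by $\mathit{vres}'$), and \Cref{lem:childrentyping} then gives well-typedness of each element of $\esequence{v'''}$. This discharges the preconditions needed to invoke the induction hypothesis given by \Cref{lem:evaltdvisitstarprogress} on $n - 1$ with $\esequence{v'''}$, $\sigma''$, and $\mathit{br}$, obtaining a child-traversal derivation with some result $\mathit{vres}{\star}$. A final case analysis on $\mathit{vres}{\star}$ closes the argument: if it is $\keyword{fail}$, use \textsc{ETV-Ord-Sucs1}; if it is $\keyword{success} \; \esequence{v''''}$, then \Cref{lem:evaltdvisitstartyping} gives well-typedness of $\esequence{v''''}$, so \Cref{lem:reconsprogress} produces a reconstruction derivation $\reconstruct{v''}{\esequence{v''''}}{\mathit{rcres}}$ and \textsc{ETV-Ord-Sucs2} applies (regardless of whether $\mathit{rcres}$ is a success or an error); and if $\mathit{vres}{\star}$ is a non-$\keyword{fail}$ exceptional result, use \textsc{ETV-Exc2}.

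The argument is essentially a disciplined case split, so the main obstacle is not the branching but keeping the well-typedness side conditions available at each recursive call. Both \Cref{lem:evalcasesprogress} and \Cref{lem:evaltdvisitstarprogress} demand well-typed inputs, so the proof must repeatedly re-establish these obligations using the corresponding typing lemmas (\Cref{lem:evalcasestyping}, \Cref{lem:iffailtyping}, \Cref{lem:childrentyping}, \Cref{lem:evaltdvisitstartyping}); the delicate point is ensuring that the intermediate value $v''$ and its children remain typed along the whole chain so that each subsequent progress appeal is licensed.
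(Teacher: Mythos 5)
Your proposal is correct and follows essentially the same route as the paper's own proof: induction on the fuel $n$, a first appeal to \Cref{lem:evalcasesprogress} at $n-1$, then a case split on the cases-result and $\mathit{br}$ dispatching to \textsc{ETV-Break-Sucs}, \textsc{ETV-Exc1}, or the uniform branch via $\textrm{if-fail}$, $\textrm{children}$, \Cref{lem:evaltdvisitstarprogress}, and \Cref{lem:reconsprogress}, with the typing lemmas discharging the well-typedness side conditions. Your explicit invocation of \Cref{lem:evalcasestyping} to justify well-typedness of the intermediate store $\sigma''$ is a point the paper leaves implicit, and your observation that \textsc{ETV-Ord-Sucs2} covers both outcomes of $\mathit{rcres}$ is accurate.
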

  \begin{proof}
    By induction on $n$:

    \begin{itemize}
    \item Case $n = 0$ then use the corresponding
      $\keyword{timeout}$-derivation.
    \item Case $n > 0$:

      Applying the induction hypothesis given by \Cref{lem:evalcasesprogress} on
      $n - 1$ with $\mathit{cs}$, $v$ and $sigma$ we get a derivation
      $\begin{array}{c} \mathcal{C\!S} \\ \evalcasesfuel{\esequence{\mathit{cs}}}{v}{\sigma}{\mathit{vres}'}{\sigma''}{n-1} \end{array}$

      We proceed to check whether $\mathit{vres}'$ has syntax $\mathit{vfres}'$
      for some $\mathit{vfres}'$:
      \begin{itemize}
      \item Case $\mathit{vres}' = \mathit{vfres}'$:

        We proceed to check whether $\mathit{vfres}' = \keyword{success} \; v'$
        and $\mathit{br} = \keyword{break}$
        
        \begin{itemize}
        \item Case $\mathit{vfres}' = \keyword{success} \; v'$ and $\mathit{br} =
          \keyword{break}$ then use $\textsc{ETV-Break-Sucs}$ with  $\mathcal{C\!S}$.
        \item Case $\mathit{vfres}' \neq \keyword{success} \; v'$ or $\mathit{br} \neq \keyword{break}$:
         
          By Boolean logic, we have that $\mathit{br} \neq
          \keyword{break} => \mathit{vfres}' = \keyword{fail}$ is satisfied.

          Let $v'' = \textrm{if-fail}(\mathit{vfres}, v)$ and $\esequence{v'''}
          = \textrm{children}(v'')$ (all well-typed from
          \Cref{lem:iffailtyping} and \Cref{lem:childrentyping}).

          By induction hypothesis given by \Cref{lem:evaltdvisitstartyping} on
          $n - 1$ with $\esequence{\mathit{cs}}$, $\esequence{v'''}$ and
          $\sigma''$, we get a derivation  $\begin{array}{c} \mathcal{C\!S}' \\
                                              \evaltdvisitstarfuel{\esequence{\mathit{cs}}}{\esequence{v'''}}{\sigma''}{\mathit{br}}{\mathit{vres}''}{\sigma'}{n
                                              - 1} \end{array}$.
                                         
        We know that the results are well typed from
        \Cref{lem:evaltdvisitstartyping}.

        By case analysis on $\mathit{vres}''$:
        \begin{itemize}
        \item Case $\mathit{vfres}'' = \keyword{success} \; \esequence{v''''}$
          then use $\textsc{ETV-Ord-Sucs2}$ with $\mathcal{C\!S}$,
          $\mathcal{C\!S}'$ and the result derivation from applying \Cref{lem:reconsprogress}.
        \item Case $\mathit{vfres}'' = \keyword{fail}$ then use
          $\textsc{ETV-Ord-Sucs1}$ with $\mathcal{C\!S}$,
          $\mathcal{C\!S}'$.
        \item Case $\mathit{vfres}'' = \mathit{exres}$ where $\mathit{exres}
          \neq \keyword{fail}$ then use   $\textsc{ETV-Exc2}$ with $\mathcal{C\!S}$,
          $\mathcal{C\!S}'$.
        \end{itemize}
        \end{itemize}
      \item Case $\mathit{vres}' \neq \mathit{vfres}'$:
        Necessarily, we then have that $\mathit{vres}'$ = $\mathit{exres}$ where
        $\mathit{exres} \neq \keyword{fail}$ and so we use $\textsc{ETV-Exc1}$
        with $\mathcal{C\!S}$.
      \end{itemize}
    \end{itemize}
  \end{proof}

  \begin{lemma}
    \label{lem:evaltdvisitstarprogress}
    It is possible to construct a derivation
    $\evaltdvisitstarfuel{\esequence{\mathit{cs}}}{\esequence{v}}{\sigma}{\mathit{br}}{\mathit{vres}{\star}}{\sigma'}{n}$
    for any $\esequence{\mathit{cs}}$, well-typed value sequence $\esequence{v}$,
    well-typed store $\sigma$, breaking strategy $\mathit{br}$ and fuel $n$.
  \end{lemma}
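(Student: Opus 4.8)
The plan is to proceed by induction on the fuel $n$, mirroring the companion \Cref{lem:evaltdvisitprogress} for single top-down traversals and the sequence lemma \Cref{lem:evalstarprogress}. In the base case $n = 0$ I would immediately discharge the goal with the corresponding $\keyword{timeout}$-congruence rule, so all interesting work lives in the case $n > 0$. There I first case-analyze the input value sequence $\esequence{v}$: if it is empty, rule $\textsc{ETVS-Emp}$ applies directly; otherwise I write $\esequence{v} = v', \esequence{v''}$ and note that both $v'$ and the suffix $\esequence{v''}$ remain well-typed, being components of the well-typed input sequence.

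Next I would apply the induction hypothesis given by \Cref{lem:evaltdvisitprogress} at fuel $n-1$ to $\esequence{\mathit{cs}}$, $v'$, $\sigma$ and $\mathit{br}$, obtaining a single-traversal derivation with some result $\mathit{vres}'$ and store $\sigma''$. The proof then splits on the shape of $\mathit{vres}'$. If $\mathit{vres}'$ is an exceptional result other than $\keyword{fail}$, I close with $\textsc{ETVS-Exc1}$. Otherwise $\mathit{vres}'$ is a $\mathit{vfres}'$ (either $\keyword{success} \; v''$ or $\keyword{fail}$), and I further distinguish the sub-case where $\mathit{vfres}' = \keyword{success} \; v''$ and $\mathit{br} = \keyword{break}$ --- closed by $\textsc{ETVS-Break}$ --- from its complement, in which the side condition $\mathit{br} \neq \keyword{break} \vee \mathit{vfres}' = \keyword{fail}$ holds.

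In that complementary case I would invoke \Cref{lem:evaltdvisittyping} to conclude that $\sigma''$ is well-typed, which licenses applying the outer induction hypothesis (this lemma) at fuel $n-1$ to the suffix $\esequence{v''}$ and store $\sigma''$, yielding a result $\mathit{vres}{\star}'$. A final split on $\mathit{vres}{\star}'$ chooses between $\textsc{ETVS-More}$ (when it is a $\mathit{vfres}{\star}'$, whose combination is handled totally by $\textrm{vcombine}$) and $\textsc{ETVS-Exc2}$ (when it is a propagated non-$\keyword{fail}$ exceptional result).

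The only delicate point is bookkeeping the shared Boolean side condition $\mathit{br} \neq \keyword{break} \vee \mathit{vfres}' = \keyword{fail}$ so that the split among $\textsc{ETVS-Break}$, $\textsc{ETVS-More}$ and $\textsc{ETVS-Exc2}$ is exhaustive and the chosen rule's premise is genuinely met; I must also be careful to thread well-typedness of $\sigma''$ through \Cref{lem:evaltdvisittyping} before each recursive appeal. Beyond this I expect no substantive obstacle, since progress here reduces entirely to applying the already-established traversal and typing lemmas at strictly smaller fuel.
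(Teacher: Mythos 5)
Your proposal is correct and follows essentially the same route as the paper's proof: induction on the fuel $n$ with a $\keyword{timeout}$ base case, case analysis on $\esequence{v}$, an appeal to \Cref{lem:evaltdvisitprogress} for the head value, and the same four-way dispatch among $\textsc{ETVS-Break}$, $\textsc{ETVS-More}$, $\textsc{ETVS-Exc1}$ and $\textsc{ETVS-Exc2}$ governed by the Boolean side condition $\mathit{br} \neq \keyword{break} \vee \mathit{vfres}' = \keyword{fail}$. Your explicit invocation of \Cref{lem:evaltdvisittyping} to justify well-typedness of $\sigma''$ before the recursive appeal is a small point of extra care that the paper leaves implicit, but it does not constitute a different approach.
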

  \begin{proof}
    By induction on $n$:
    \begin{itemize}
    \item Case $n = 0$ then use the corresponding
      $\keyword{timeout}$-derivation.
    \item Case $n > 0$:
      By case analysis on $\esequence{v}$:

      \begin{itemize}
      \item Case $\esequence{v} = \varepsilon$ then use $\textsc{ETVS-Emp}$.
      \item Case $\esequence{v} = v', \esequence{v''}$:

      By induction hypothesis given by \Cref{lem:evaltdvisitprogress} on $n - 1$
      with $\esequence{\mathit{cs}}$, $v$ and $\sigma$ we get a derivation
      $\begin{array}{c} \mathcal{V\!T} \\
         \evaltdvisitfuel{\esequence{\mathit{cs}}}{v}{\sigma}{\mathit{br}}{\mathit{vres}''}{\sigma''}{n-1}
         \end{array}$.

      We proceed by checking whether $\mathit{vres}''$ is syntactically a
      $\mathit{vfres}''$ for some $\mathit{vfres}''$:

      \begin{itemize}
      \item Case  $\mathit{vres}'' = \mathit{vfres}''$:
        
        We proceed by checking whether $\mathit{vfres}'' = \keyword{success} \;
        v'''$ and $\mathit{br} = \keyword{break}$:
        \begin{itemize}
        \item Case $\mathit{vfres}'' = \keyword{success} \; v'''$ and
          $\mathit{br} = \keyword{break}$ then use $\textsc{ETVS-Break}$ with
          $\mathcal{V\!T}$.
        \item Case $\mathit{vfres}'' \neq \keyword{success} \; v'''$ or
          $\mathit{br} \neq \keyword{break}$:

          By Boolean logic, we have that $\mathit{br} \neq
            \keyword{break} => \mathit{vfres}'' = \keyword{fail}$ is satisfied.

            By induction hypothesis on $n - 1$ with $\esequence{\mathit{cs}}$,
            $\esequence{v''}$ and $\sigma''$ we get a derivation

        $\begin{array}{c} \mathcal{V\!T\!S} \\
           \evaltdvisitstarfuel{\esequence{\mathit{cs}}}{\esequence{v''}}{\sigma''}{\mathit{br}}{\mathit{vres}{\star}'''}{\sigma'}{n
           - 1} \end{array}$.

       By case analysis on $\mathit{vres}{\star}'''$:
       
       \begin{itemize}
       \item Case $\mathit{vres}{\star}''' = \mathit{vfres}{\star}'''$ then use
         $\textsc{ETVS-More}$ with $\mathcal{V\!T}$ and $\mathcal{V\!T\!S}$.
       \item Case $\mathit{vres}{\star}''' = \mathit{exres}$ where
         $\mathit{exres} \neq \keyword{fail}$ then use $\textsc{ETVS-Exc2}$  with $\mathcal{V\!T}$ and $\mathcal{V\!T\!S}$.
       \end{itemize}
        \end{itemize}
      \item Case  $\mathit{vres}'' \neq \mathit{vfres}''$:
        
        Here we then have that $\mathit{vres}'' = \mathit{exres}$, where
        $\mathit{exres} \neq \keyword{fail}$ and so we use $\textsc{ETVS-Exc1}$.
      \end{itemize}
      \end{itemize}
    \end{itemize}
  \end{proof}

  \begin{lemma}
    \label{lem:evalbuvisitprogress}
    It is possible to construct a derivation
    $\evalbuvisitfuel{\esequence{\mathit{cs}}}{v}{\sigma}{\mathit{br}}{\mathit{vres}}{\sigma'}{n}$ for any $\esequence{\mathit{cs}}$, well-typed value $v$,
    well-typed store $\sigma$, breaking strategy $\mathit{br}$ and fuel $n$.
  \end{lemma}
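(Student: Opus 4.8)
The plan is to proceed by induction on the fuel $n$, mirroring the structure of \Cref{lem:evaltdvisitprogress} but respecting the reversed evaluation order of the bottom-up strategy: the children are traversed and the value reconstructed \emph{before} the cases are applied. For $n = 0$ I would immediately use the corresponding $\keyword{timeout}$-derivation. For $n > 0$, the first step is to compute $\esequence{v''} = \textrm{children}(v)$---well-typed by \Cref{lem:childrentyping}---and apply the induction hypothesis given by \Cref{lem:evalbuvisitstarprogress} on $n - 1$ with $\esequence{\mathit{cs}}$, $\esequence{v''}$, $\sigma$ and $\mathit{br}$, obtaining a derivation $\evalbuvisitstarfuel{\esequence{\mathit{cs}}}{\esequence{v''}}{\sigma}{\mathit{br}}{\mathit{vres}{\star}'}{\sigma''}{n-1}$; by \Cref{lem:evalbuvisitstartyping} the store $\sigma''$ and any contained value sequence are well-typed.

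The bulk of the work is a case analysis on $\mathit{vres}{\star}'$. When it is $\keyword{fail}$, I would apply \Cref{lem:evalcasesprogress} to the \emph{original} value $v$ in $\sigma''$ and conclude with \textsc{EBU-Fail-Sucs}. When it is a non-$\keyword{fail}$ exceptional result, \textsc{EBU-Exc} applies directly. The interesting case is $\mathit{vres}{\star}' = \keyword{success} \; \esequence{v'''}$, where I split on the breaking strategy. If $\mathit{br} = \keyword{break}$, I invoke \Cref{lem:reconsprogress} to obtain $\reconstruct{v}{\esequence{v'''}}{\mathit{rcres}}$ and conclude with \textsc{EBU-Break-Sucs}, which accepts either a success or an error reconstruction result and needs no further case split. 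If $\mathit{br} = \keyword{no-break}$, I again apply \Cref{lem:reconsprogress} and then case-split on the reconstruction outcome: the error outcome yields \textsc{EBU-No-Break-Err}, while a successful $\reconstruct{v}{\esequence{v'''}}{\keyword{success} \; v'}$ requires one further step.

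In that remaining subcase the reconstructed value $v'$ is well-typed by \Cref{lem:reconstructtyping} (using well-typedness of $v$ and of $\esequence{v'''}$), so I may apply \Cref{lem:evalcasesprogress} on $n - 1$ with $v'$ and $\sigma''$ to evaluate the cases, and then analyse the result: a $\mathit{vfres}'$ (that is, $\keyword{success}$ or $\keyword{fail}$) gives \textsc{EBU-No-Break-Sucs}, with the final value packaged through $\textrm{if-fail}$, and a non-$\keyword{fail}$ exceptional result gives \textsc{EBU-No-Break-Exc}.

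The main obstacle, as in the top-down analogue, is bookkeeping rather than depth: one must thread well-typedness of $\sigma''$, $\esequence{v'''}$ and the reconstructed $v'$ through the chain of auxiliary progress and typing lemmas so that each subsequent lemma's hypotheses are satisfied. The one genuinely delicate point is that \textsc{EBU-Fail-Sucs} re-runs the cases on the original $v$, whereas the no-break success branch runs them on the reconstructed $v'$---feeding the wrong value to \Cref{lem:evalcasesprogress} would break the alignment with the rule premises and invalidate the constructed derivation.
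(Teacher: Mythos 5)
Your proposal is correct and takes essentially the same route as the paper's own proof: induction on the fuel, traversal of the children via \Cref{lem:evalbuvisitstarprogress}, then the identical case analysis on the star-result, breaking strategy, and reconstruction outcome, concluding with the same rules (\textsc{EBU-Fail-Sucs}, \textsc{EBU-Exc}, \textsc{EBU-Break-Sucs}, \textsc{EBU-No-Break-Err}, \textsc{EBU-No-Break-Sucs}, \textsc{EBU-No-Break-Exc}) and the same auxiliary lemmas (\Cref{lem:childrentyping}, \Cref{lem:reconsprogress}, \Cref{lem:evalcasesprogress}, \Cref{lem:evalbuvisitstartyping}). The two points you flag---that \textsc{EBU-Break-Sucs} needs no case split on the reconstruction result, and that the fail branch re-runs the cases on the original $v$ while the no-break success branch runs them on the reconstructed $v'$---are exactly how the paper's derivation is assembled.
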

  \begin{proof}
    By case analysis on $n$:
    \begin{itemize}
    \item Case $n = 0$ then use the corresponding
      $\keyword{timeout}$-derivation.
    \item Case $n > 0$:

      Let $\esequence{v''} = \textrm{children}(v)$. We know that
      $\esequence{v''}$ is well-typed from \Cref{lem:childrentyping}.
      
      By induction hypothesis given by \Cref{lem:evalbuvisitstarprogress} on $n
      - 1$ with $\esequence{\mathit{cs}}$, $\esequence{v''}$ and $\sigma$ we get
      a derivation $\begin{array}{c} \mathcal{V\!B\!S} \\
                      \evalbuvisitstarfuel{\esequence{\mathit{cs}}}{\esequence{v''}}{\sigma}{\mathit{br}}{\mathit{vres}''}{\sigma''}{n
                      - 1} \end{array}$. Recall that the output is well-typed from \Cref{lem:evalbuvisitstartyping}.

    By case analysis on $\mathit{vres}{\star}''$:
    \begin{itemize}
    \item Case $\mathit{vres}{\star}'' = \mathit{vfres}''$:
      
      We proceed by case analysis on $\mathit{vfres}''$
      \begin{itemize}
      \item Case  $\mathit{vfres}'' = \keyword{success} \;
        \esequence{v'''}$:
        We proceed by case analysis on $\mathit{br}$:

        \begin{itemize}
        \item Case $\mathit{br} = \keyword{break}$ then use
        $\textsc{EBU-Break-Sucs}$ with $\mathcal{V\!B\!S}$ and the derivation
        from applying \Cref{lem:reconsprogress} on $v$ and $\esequence{v'''}$.
        \item Case $\mathit{br} = \keyword{no-break}$:

        By applying \Cref{lem:reconsprogress} on $v$ and $\esequence{v'''}$ we
        get a derivation $\begin{array}{c} \mathcal{R\!C} \\
                            \reconstruct{v}{\esequence{v'''}}{\mathit{rcres}} \end{array}$.
         
        By case analysis on $\mathit{rcres}$:
        \begin{itemize}
        \item Case $\mathit{rcres} = \keyword{success} \; v'$:

          By induction hypothesis given by \Cref{lem:evalcasesprogress} on $n -
          1$ with $\esequence{\mathit{cs}}$, $v'$ and $\sigma''$ we get a
          derivation $\begin{array}{c} \mathcal{C\!S} \\
                        \evalcasesfuel{\esequence{\mathit{cs}}}{v'}{\sigma''}{\mathit{vres}'}{\sigma'}{n-1} \end{array}$

          By case analysis on $\mathit{vres}'$:
          \begin{itemize}
          \item Case $\mathit{vres}' = \mathit{vfres}'$ then use
            $\textsc{EBU-No-Break-Sucs}$ with $\mathcal{V\!B\!S}$,
            $\mathcal{R\!C}$ and $\mathcal{C\!S}$ .
          \item Case $\mathit{exres}$ then use $\textsc{EBU-No-Break-Exc}$  with $\mathcal{V\!B\!S}$,
            $\mathcal{R\!C}$ and $\mathcal{C\!S}$ .
          \end{itemize}
        \item Case $\mathit{rcres} = \keyword{error}$ then use $\textsc{EBU-No-Break-Err}$.
        \end{itemize}
        \end{itemize} 
      \item Case  $\mathit{vfres}'' = \keyword{fail}$ then use
        $\textsc{EBU-Fail-Sucs}$ with the induction hypothesis given by
        \Cref{lem:evalcasesprogress} on $n - 1$ with $\esequence{\mathit{cs}}$,
        $v$ and $\sigma''$.
      \end{itemize}
    \item Case $\mathit{vres}{\star}'' = \mathit{exres}$ where $\mathit{exres} \neq
      \keyword{fail}$ then use $\textsc{EBU-Exc}$ with $\mathcal{V\!B\!S}$.
    \end{itemize}
    \end{itemize}
  \end{proof}

  \begin{lemma}
    \label{lem:evalbuvisitstarprogress}
    It is possible to construct a derivation
    $\evalbuvisitstarfuel{\esequence{\mathit{cs}}}{\esequence{v}}{\sigma}{\mathit{br}}{\mathit{vres}{\star}}{\sigma'}{n}$
    for any $\esequence{\mathit{cs}}$, well-typed value sequence $\esequence{v}$,
    well-typed store $\sigma$, breaking strategy $\mathit{br}$ and fuel $n$.
  \end{lemma}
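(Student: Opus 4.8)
The plan is to prove this by induction on the fuel $n$, following exactly the structure of the top-down analogue \Cref{lem:evaltdvisitstarprogress}, since the $\keyword{bottom-up}$ sequence relation is governed by a set of rules (\textsc{EBUS-Emp}, \textsc{EBUS-Break}, \textsc{EBUS-More}, \textsc{EBUS-Exc1}, \textsc{EBUS-Exc2}) that mirror the top-down ones. For the base case $n = 0$, I would immediately discharge the goal using the corresponding $\keyword{timeout}$-derivation. For $n > 0$, I would proceed by case analysis on the value sequence $\esequence{v}$.

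When $\esequence{v} = \varepsilon$, the rule \textsc{EBUS-Emp} applies directly. When $\esequence{v} = v', \esequence{v''}$, the first step is to apply the induction hypothesis given by \Cref{lem:evalbuvisitprogress} on $n - 1$ with $\esequence{\mathit{cs}}$, the head value $v'$, the store $\sigma$, and the breaking strategy $\mathit{br}$; here I use that $v'$ is well-typed (since $\esequence{v}$ is well-typed by assumption) and that $\sigma$ is well-typed. This yields a derivation $\begin{array}{c} \mathcal{V\!B} \\ \evalbuvisitfuel{\esequence{\mathit{cs}}}{v'}{\sigma}{\mathit{br}}{\mathit{vres}''}{\sigma''}{n-1} \end{array}$, and I would then check whether $\mathit{vres}''$ is syntactically a $\mathit{vfres}''$.

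If $\mathit{vres}'' = \mathit{vfres}''$, I would further test whether $\mathit{vfres}'' = \keyword{success} \; v'''$ together with $\mathit{br} = \keyword{break}$: in that sub-case \textsc{EBUS-Break} applies using $\mathcal{V\!B}$, and otherwise Boolean reasoning establishes the side condition $\mathit{br} \neq \keyword{break} \vee \mathit{vfres}'' = \keyword{fail}$ required by the combining rules. In the latter situation I would invoke the outer induction hypothesis on $n - 1$ with $\esequence{v''}$ and $\sigma''$ to obtain $\begin{array}{c} \mathcal{V\!B\!S} \\ \evalbuvisitstarfuel{\esequence{\mathit{cs}}}{\esequence{v''}}{\sigma''}{\mathit{br}}{\mathit{vres}{\star}'''}{\sigma'}{n-1} \end{array}$, relying on \Cref{lem:evalbuvisittyping} to guarantee that $\sigma''$ remains well-typed so the hypothesis is applicable; then I would case-split on $\mathit{vres}{\star}'''$, using \textsc{EBUS-More} (whose conclusion folds the results through $\textrm{vcombine}$) when it is a $\mathit{vfres}{\star}'''$ and \textsc{EBUS-Exc2} when it is a non-$\keyword{fail}$ exceptional result. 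Finally, if $\mathit{vres}'' \neq \mathit{vfres}''$, then necessarily $\mathit{vres}'' = \mathit{exres}$ with $\mathit{exres} \neq \keyword{fail}$, and \textsc{EBUS-Exc1} closes the case. The only real subtlety—exactly as in the top-down case—is maintaining the well-typedness invariant of the threaded store across the recursive premises so that both applications of the induction hypotheses are justified; this is handled uniformly by appealing to \Cref{lem:evalbuvisittyping}.
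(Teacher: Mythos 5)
Your proposal is correct and takes essentially the same approach as the paper's proof: induction on the fuel $n$, case analysis on the value sequence, handling the head via \Cref{lem:evalbuvisitprogress} and the tail via the outer induction hypothesis, with the same case splits dispatching \textsc{EBUS-Emp}, \textsc{EBUS-Break}, \textsc{EBUS-More}, \textsc{EBUS-Exc1}, and \textsc{EBUS-Exc2}. Your explicit appeal to \Cref{lem:evalbuvisittyping} to justify that the intermediate store $\sigma''$ remains well-typed is a point the paper leaves implicit, and stating it makes the application of the induction hypothesis to the tail fully justified.
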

  \begin{proof}
    By induction on $n$:
    \begin{itemize}
    \item Case $n = 0$ then use the corresponding
      $\keyword{timeout}$-derivation.
    \item Case $n > 0$:

      By case analysis on $\esequence{v}$:
      \begin{itemize}
      \item Case $\esequence{v} = \varepsilon$ then use $\textsc{EBUS-Emp}$.
      \item Case $\esequence{v} = v', \esequence{v''}$:
        By induction hypothesis given by \Cref{lem:evalbuvisitprogress} on $n-1$
        with $\esequence{\mathit{cs}}$, $v'$ and $\sigma$ we get a derivation
        $\begin{array}{c} \mathcal{V\!B} \\
           \evalbuvisitfuel{\esequence{\mathit{cs}}}{v'}{\sigma}{\mathit{br}}{\mathit{vres}''}{\sigma''}{n-1} \end{array}$.

         By case analysis on $\mathit{vres}''$:
         \begin{itemize}
         \item Case $\mathit{vres}'' = \mathit{vfres}''$:

           By case analysis on $\mathit{vfres}''$ and $\mathit{br}$:
           \begin{itemize}
           \item Case $\mathit{vfres}'' = \keyword{success} \; v'''$ and
             $\mathit{br} = \keyword{break}$: then use $\textsc{EBUS-Break}$
             with $\mathcal{V\!B}$.
           \item  Case $\mathit{vfres}'' \neq \keyword{success} \; v'''$ or
             $\mathit{br} \neq \keyword{break}$:
             
             By Boolean logic we have $\mathit{br} = \keyword{break} =>
             \mathit{vfres} = \keyword{fail}$.

             By induction hypothesis on $n - 1$ with $\esequence{\mathit{cs}}$,
             $\esequence{v''}$ and $\sigma$ we get a derivation  $\begin{array}{c} \mathcal{V\!B\!S} \\
                                                                    \evalbuvisitstarfuel{\esequence{\mathit{cs}}}{\esequence{v''}}{\sigma''}{\mathit{br}}{\mathit{vres}{\star}'}{\sigma'}{n-1} \end{array}$.
                                                                  
             By case analysis on $\mathit{vres}{\star}'$:
             \begin{itemize}
             \item Case $\mathit{vres}{\star}' = \mathit{vfres}{\star}'$ then use
               $\textsc{EBUS-More}$ with  $\mathcal{V\!B}$ and  $\mathcal{V\!B\!S}$
             \item Case $\mathit{vres}{\star}' = \mathit{exres}$ then use
               $\textsc{EBUS-Exc2}$ with  $\mathcal{V\!B}$ and  $\mathcal{V\!B\!S}$
             \end{itemize}
           \end{itemize}
         \item Case $\mathit{vres}'' = \mathit{exres}$ then use
           $\textsc{EBUS-Exc1}$ with $\mathcal{V\!B}$.
         \end{itemize}
      \end{itemize}
    \end{itemize}
  \end{proof}

  \partialprogress*
  \begin{proof}
    By induction on $n$:
    \begin{itemize}
    \item Case $n = 0$ then use the corresponding
      $\keyword{timeout}$-derivation.
    \item Case $n > 0$:

      By case analysis on syntax $e$:
      \begin{itemize}
      \item Case $e = \mathit{vb}$ then use $\textsc{T-Basic}$.
      \item Case $e = x$

        We proceed by checking $x \in \mathrm{dom} \; \sigma$:
        \begin{itemize}
        \item Case $x \in \mathrm{dom} \; \sigma$ then use $\textsc{E-Var-Sucs}$.
        \item Case $x \notin \mathrm{dom} \; \sigma$ then use $\textsc{E-Var-Err}$.
        \end{itemize}
      \item Case $e = \ominus \; e'$:
        
        By induction hypothesis on $n - 1$ with $e'$ and $\sigma$ we get a
        derivation $\evalexprfuel{e'}{\sigma}{\mathit{vres}'}{\sigma'}{n - 1}$.

        By case analysis on $\mathit{vres}'$:
        \begin{itemize}
        \item Case $\mathit{vres}' = \keyword{success} v$ then use
          $\textsc{E-Un-Sucs}$ with above derivation.
        \item Case $\mathit{vres}' = \mathit{exres}$ then use
          $\textsc{E-Un-Exc}$ with above derivation.
        \end{itemize}
      \item Case $e = e_1 \oplus e_2$:
        By induction hypothesis on $n - 1$ with $e_1$ and $\sigma$ we get a
        derivation $\begin{array}{c} \mathcal{E}_1 \\
          \evalexprfuel{e_1}{\sigma}{\mathit{vres}_1}{\sigma''}{n - 1} \end{array}$.

      By case analysis on $\mathit{vres}_1$:
      \begin{itemize}
      \item Case  $\mathit{vres}_1 = \keyword{success} \; v_1$:
        By induction hypothesis on $n - 1$ with $e_2$ and $\sigma''$ we get a
        derivation $\begin{array}{c} \mathcal{E}_2 \\
          \evalexprfuel{e_2}{\sigma''}{\mathit{vres}_2}{\sigma'}{n - 1} \end{array}$.
        
        By case analysis on $\mathit{vres}_2$:
        \begin{itemize}
        \item Case  $\mathit{vres}_2 = \keyword{success} \; v_2$ then use
          $\textsc{E-Bin-Sucs}$ with  $\mathcal{E}_1$ and  $\mathcal{E}_2$.
        \item Case $\mathit{vres}_2 = \mathit{exres}$ then use
          $\textsc{E-Bin-Exc2}$ with  $\mathcal{E}_1$ and  $\mathcal{E}_2$.
        \end{itemize}
        
      \item Case  $\mathit{vres}_1 = \mathit{exres}$ then use
        $\textsc{E-Bin-Exc1}$ with $\mathcal{E}_1$.
      \end{itemize}

      \item Case $e = k(\esequence{e'})$:

        Recall that all derivations in our paper are assumed to be well-scoped
        so there must exist a corresponding  data-type $\mathit{at}$ that has
        $k(\esequence{t})$ as a constructor.

        By using induction hypothesis given by \Cref{lem:evalstarprogress} on $n
        - 1$ with $\esequence{e'}$ and $\sigma$ we get a derivation
        $\begin{array}{c} \mathcal{E\!S} \\
           \evalexprstarfuel{\esequence{e'}}{\sigma}{\mathit{vres}{\star}'}{\sigma'}{n - 1}  \end{array}$.

         By case analysis on $\mathit{vres}{\star}'$:
       \begin{itemize}
       \item Case $\mathit{vres}{\star}' = \keyword{success} \; \esequence{v'}$:

         By \Cref{lem:evalstartyping} we know that $\esequence{v'}$ is
         well-typed, i.e. that we have $\esequence{\typing{v'}{t'}}$ for some
         type sequence $t'$.
         We proceed to check whether all values in the sequence are
         non-$\blacksquare$ and each have a type $t'_i$ that is a subtype of the
         target type $t_i$.

         \begin{itemize}
         \item Case $\esequence{v \neq \blacksquare}$ and
           $\esequence{\subtyping{t'}{t}}$ then use $\textsc{E-Cons-Sucs}$ with
           $\mathcal{E\!S}$ and the required typing derivations.
         \item Case $v_i = \blacksquare$ or
           $\subtyping{t'_i}{t_i}$ for some $i$ then use $\textsc{E-Cons-Err}$
           with  $\mathcal{E\!S}$.
         \end{itemize}
       \item Case $\mathit{vres}{\star}' = \mathit{exres}$ then use
         $\textsc{E-Cons-Exc}$ with $\mathcal{E\!S}$
       \end{itemize}
      \item Case $e = [\esequence{e'}]$:

        By induction hypothesis given by \Cref{lem:evalstarprogress} on $n - 1$
        with $\esequence{e'}$ and $\sigma$ we get a derivation
        $\evalexprstarfuel{\esequence{e'}}{\sigma}{\mathit{vres}{\star}'}{\sigma'}{n-1}$.

        By case analysis on $\mathit{vres}{\star}'$:
        \begin{itemize}
        \item Case $\mathit{vres}{\star}' = \keyword{success} \; \esequence{v}$:
          
          We proceed by checking whether all values $\esequence{v}$ are
          non-$\blacksquare$:
          \begin{itemize}
          \item Case $\esequence{v \neq \blacksquare}$ then use
            $\textsc{E-List-Sucs}$ with above derivation.
          \item Case $v_i \neq \blacksquare$ for some $i$ then use
            $\textsc{E-List-Err}$ with above derivation.
          \end{itemize}
        \item Case $\mathit{vres}{\star}' = \mathit{exres}$ then use
          $\textsc{E-List-Exc}$ with above derivation.
        \end{itemize}
      \item Case $e = \{\esequence{e'}\}$

            By induction hypothesis given by \Cref{lem:evalstarprogress} on $n - 1$
        with $\esequence{e'}$ and $\sigma$ we get a derivation
        $\evalexprstarfuel{\esequence{e'}}{\sigma}{\mathit{vres}{\star}'}{\sigma'}{n-1}$.

        By case analysis on $\mathit{vres}{\star}'$:
        \begin{itemize}
        \item Case $\mathit{vres}{\star}' = \keyword{success} \; \esequence{v}$:
          
          We proceed by checking whether all values $\esequence{v}$ are
          non-$\blacksquare$:
          \begin{itemize}
          \item Case $\esequence{v \neq \blacksquare}$ then use
            $\textsc{E-Set-Sucs}$ with above derivation.
          \item Case $v_i \neq \blacksquare$ for some $i$ then use
            $\textsc{E-Set-Err}$ with above derivation.
          \end{itemize}
        \item Case $\mathit{vres}{\star}' = \mathit{exres}$ then use
          $\textsc{E-Set-Exc}$ with above derivation.
        \end{itemize}
      \item Case $e = (\esequence{e' : e''})$

            By induction hypothesis given by \Cref{lem:evalstarprogress} on $n - 1$
            with $\esequence{e',e''}$ and $\sigma$ we get a derivation
        $\evalexprstarfuel{\esequence{e',e''}}{\sigma}{\mathit{vres}{\star}'}{\sigma'}{n-1}$.

        By case analysis on $\mathit{vres}{\star}'$:
        \begin{itemize}
        \item Case $\mathit{vres}{\star}' = \keyword{success} \; \esequence{v,v'}$:
          
          We proceed by checking whether all values $\esequence{v}$ and $\esequence{v'}$ are
          non-$\blacksquare$:
          \begin{itemize}
          \item Case $\esequence{v \neq \blacksquare}$ and $\esequence{v' \neq \blacksquare}$ then use
            $\textsc{E-Map-Sucs}$ with above derivation.
          \item Case $v_i \neq \blacksquare$ or $v_i' \neq \blacksquare$ for some $i$ then use
            $\textsc{E-Map-Err}$ with above derivation.
          \end{itemize}
        \item Case $\mathit{vres}{\star}' = \mathit{exres}$ then use
          $\textsc{E-Map-Exc}$ with above derivation.
        \end{itemize}
      \item Case $e = e_1[e_2]$
        
        By induction hypothesis on $n - 1$ with $e_1$ and $\sigma$ we get a
        derivation $\begin{array}{c} \mathcal{E} \\
                      \evalexprfuel{e_1}{\sigma}{\mathit{vres}_1}{\sigma''}{n-1} \end{array}$.
        
          By case analysis on $\mathit{vres}_1$:
          \begin{itemize}
          \item Case $\mathit{vres}_1 = \keyword{success} \; v_1$:
            
            By case analysis on $v_1$:
            \begin{itemize}
            \item Case $v_1 = (\esequence{v' : v''})$:
              
              By induction hypothesis on $n - 1$ with $e_2$ and $\sigma''$ we
              get a derivation
              $\begin{array}{c} \mathcal{E}' \\
                      \evalexprfuel{e_2}{\sigma''}{\mathit{vres}_2}{\sigma'}{n-1} \end{array}$.
                    
                    By case analysis on $\mathit{vres}_2$:
                    \begin{itemize}
                    \item Case $\mathit{vres}_2 = \keyword{success} \; v_2$:

                      We proceed to check whether $\exists i . v'_i = v_2$:
                      \begin{itemize}
                      \item Case $v'_i = v_2$ then use $\textsc{E-Lookup-Sucs}$
                        with $\mathcal{E}$ and $\mathcal{E}'$.
                      \item Case $\nexists i . v'_i = v_2$ then use
                        $\textsc{E-Lookup-NoKey}$ with $\mathcal{E}$ and $\mathcal{E}'$.
                      \end{itemize}
                    \item Case $\mathit{vres}_2 = \mathit{exres}$ then use
                      $\textsc{E-Lookup-Exc2}$ with $\mathcal{E}$ and $\mathcal{E}'$.
                    \end{itemize}
            \item Case $v_1 \neq (\esequence{v' : v''})$ then use
              $\textsc{E-Lookup-Err}$ with $\mathcal{E}$.
            \end{itemize}
          \item Case $\mathit{vres}_1 = \mathit{exres}$ then use
            $\textsc{E-Lookup-Exc1}$ with $\mathcal{E}$.
          \end{itemize}
      \item Case $e = e_1[e_2 = e_3]$
        
        By induction hypothesis on $n - 1$ with $e_1$ and $\sigma$ we get a
        derivation $\begin{array}{c} \mathcal{E} \\
                      \evalexprfuel{e_1}{\sigma}{\mathit{vres}_1}{\sigma'''}{n - 1}
                    \end{array}$.
        
        By case analysis on $\mathit{vres}_1$:
        \begin{itemize}
        \item Case $\mathit{vres}_1 = \keyword{success} \; v_1$
          
          By case analysis on $v_1$:
          \begin{itemize}
          \item Case $v_1 = (\esequence{v' : v''})$:
            
            By induction hypothesis on $n - 1$ with $e_2$ and $\sigma'''$ we get a
            derivation $\begin{array}{c} \mathcal{E}' \\
                          \evalexprfuel{e_2}{\sigma'''}{\mathit{vres}_2}{\sigma''}{n - 1}
                        \end{array}$.
                        
            By case analysis on $\mathit{vres}_2$:
            \begin{itemize}
            \item Case $\mathit{vres}_2 = \keyword{success} \; v_2$:

                    By induction hypothesis on $n - 1$ with $e_3$ and $\sigma''$ we get a
            derivation $\begin{array}{c} \mathcal{E}'' \\
                          \evalexprfuel{e_3}{\sigma''}{\mathit{vres}_3}{\sigma'}{n - 1}
                        \end{array}$.

              By case analysis on $\mathit{vres}_3$:
              \begin{itemize}
              \item Case $\mathit{vres}_3 = \keyword{success} \; v_3$:

                We proceed to check whether $v_2$ and $v_3$ are
                non-$\blacksquare$:
                \begin{itemize}
                \item Case $v_2 \neq \blacksquare$ and $v_3 \neq \blacksquare$
                  then use $\textsc{E-Update-Sucs}$ with $\mathcal{E}$,
                  $\mathcal{E}'$ and $\mathcal{E}''$.
                \item Case $v_2 = \blacksquare$ or $v_3 = \blacksquare$ then use
                  $\textsc{E-Update-Err2}$ with $\mathcal{E}$,
                  $\mathcal{E}'$ and $\mathcal{E}''$.
                \end{itemize}
              \item Case $\mathit{vres}_3 = \mathit{exres}$ then use
                $\textsc{E-Update-Exc3}$ with $\mathcal{E}$ and $\mathcal{E}'$.
              \end{itemize}
            \item Case $\mathit{vres}_2 = \mathit{exres}$ then use
              $\textsc{E-Update-Exc2}$ on $\mathcal{E}$ and $\mathcal{E}'$.
            \end{itemize}
         
          \item Case $v_1 \neq (\esequence{v' : v''})$ then use
            $\textsc{E-Update-Err1}$ with $\mathcal{E}$.
          \end{itemize}
        \item Case $\mathit{vres}_1 = \mathit{exres}$ then use
          $\textsc{E-Update-Exc1}$ with $\mathcal{E}$.
        \end{itemize}
                    
      \item Case $e = f(\esequence{e'})$

        By induction hypothesis given by \Cref{lem:evalstarprogress} on $n - 1$
        with $\esequence{e'}$ and $\sigma$ we get a derivation
        $\begin{array}{c} \mathcal{E\!S} \\
           \evalexprstarfuel{\esequence{e'}}{\sigma}{\mathit{vres}{\star}''}{\sigma''}{n-1} \end{array}$
         
         By case analysis on $\mathit{vres}{\star}''$:
         \begin{itemize}
         \item Case $\mathit{vres}{\star}'' = \keyword{success} \; \esequence{v''}$

           Recall that we assume that our function calls are well-scoped and so
           there must exist a corresponding function definition $\keyword{fun}
           \; t' \; f(\esequence{t \; x}) = e''$.
           By \Cref{lem:evalstartyping} we know that we have
           $\esequence{\typing{v''}{t''}}$ for some type sequence
           $\esequence{t''}$.
           
           We proceed to check whether $\esequence{\subtyping{t''}{t}}$:
           \begin{itemize}
           \item Case $\esequence{\subtyping{t''}{t}}$

             Let $\esequence{\keyword{global} \; t_y \; y}$ represent all global
             variable definitions.

             By induction hypothesis on $n - 1$ with $e''$ and $[\esequence{y
               \mapsto \sigma''(y)}, \esequence{x \mapsto v''}]$ we get a
             derivation
             $\begin{array}{c} \mathcal{E} \\
              \evalexprfuel{e'}{[\esequence{y
                \mapsto \sigma''(y)}, \esequence{x \mapsto
                v''}]}{\mathit{vres}'}{\sigma'''}{n-1} \end{array}$.

             By case analysis on $\mathit{vres}'$:
             \begin{itemize}
             \item Case $\mathit{vres}' = \keyword{success} \; v'$ or
               $\mathit{vres}' = \keyword{return} \; v'$.

               By \Cref{thm:strongtyping} we know that $\typing{v'}{t'''}$ for
               some type $t'''$. 
               We proceed to check whether $\subtyping{t'''}{t'}$:
               \begin{itemize}
               \item Case $\subtyping{t'''}{t'}$ then use  $\textsc{E-Call-Res-Sucs}$ with $\mathcal{E\!S}$ and $\mathcal{E}$.
               \item Case $\notsubtyping{t'''}{t'}$ then use  $\textsc{E-Call-Res-Err1}$ with $\mathcal{E\!S}$ and $\mathcal{E}$.
               \end{itemize}
             \item Case $\mathit{vres}' = \keyword{throw} \; v'$ then use
               $\textsc{E-Call-Res-Exc}$ with $\mathcal{E\!S}$ and $\mathcal{E}$
             \item Case $\mathit{vres}' \in \{\keyword{break},
               \keyword{continue}, \keyword{fail}, \keyword{error}\}$ then use
               $\textsc{E-Call-Res-Err2}$ with $\mathcal{E\!S}$ and $\mathcal{E}$.
             \end{itemize}
           \item Case $\notsubtyping{t''_i}{t_i}$ for some $i$ then use
             $\textsc{E-Call-Arg-Err}$ with $\mathcal{E\!S}$.
           \end{itemize}
         \item Case $\mathit{vres}{\star}'' = \mathit{exres}$ then use
           $\textsc{E-Call-Arg-Exc}$ with $\mathcal{E\!S}$.
         \end{itemize}
      \item Case $e = \keyword{return} \; e'$

        By induction hypothesis on $n - 1$ with $e'$ and $\sigma$ we get a
        derivation $\evalexprfuel{e'}{\sigma}{\mathit{vres}'}{\sigma'}{n - 1}$.

        \begin{itemize}
        \item Case $\mathit{vres}' = \keyword{success} \; v$ then use
          $\textsc{E-Ret-Sucs}$ with above derivation.
        \item Case  $\mathit{vres}' = \mathit{exres}$ then use
          $\textsc{E-Ret-Exc}$ with above derivation.
        \end{itemize}
      \item Case $e = (x = e')$
        
        Recall that our definitions are assumed to be well-scoped and so there
        must exists either a $\keyword{local} \; t \; x$ or $\keyword{global} \;
        t \; x$ declaration for the variable (with no overshadowing).

        By induction hypothesis on $n - 1$ with $e'$ and $\sigma$ we get a
        derivation $\evalexprfuel{e'}{\sigma}{\mathit{vres}'}{\sigma''}{n - 1}$.
        
        By case analysis on $\mathit{vres}'$:

        \begin{itemize}
        \item Case $\mathit{vres}' = \keyword{success} \; v$:
          By \Cref{thm:strongtyping} we know that there exists a $t'$ such that
          $\typing{v}{t'}$.

          We proceed by checking whether $\subtyping{t'}{t}$:

          \begin{itemize}
          \item Case $\subtyping{t'}{t}$  then use $\textsc{E-Asgn-Sucs}$ with
            above evaluation and typing derivations.
          \item Case $\notsubtyping{t'}{t}$ then use $\textsc{E-Asgn-Err}$ with
            above evaluation and typing derivations.
          \end{itemize}
        \item Case $\mathit{vres}' = \mathit{exres}$ then use
          $\textsc{E-Asgn-Exc}$ with above derivation.
        \end{itemize}
      \item Case $e = \keyword{if} \; e_1 \; \keyword{then} \; e_2 \; \keyword{else}
        \; e_3$

        By induction hypothesis on $n - 1$ with $e_1$ and $\sigma$ we get a
        derivation  $\begin{array}{c} \mathcal{E} \\
                       \evalexprfuel{e_1}{\sigma}{\mathit{vres}''}{\sigma''}{n-1} \end{array}$.

        By case analysis on $\mathit{vres}''$:
        \begin{itemize}
        \item Case $\mathit{vres}'' = \keyword{success} \; v''$

          By case analysis on $v''$:
          \begin{itemize}
          \item Case $v'' = \textrm{false}()$ then use $\textsc{E-If-False}$
            with $\mathcal{E}$ and the derivation from the induction hypothesis
            on $n - 1$ with $e_2$ and $\sigma''$.
          \item Case $v'' = \textrm{true}()$ then use $\textsc{E-If-True}$
            with $\mathcal{E}$ and the derivation from the induction hypothesis
            on $n - 1$ with $e_3$ and $\sigma''$.
          \item Case $v'' \neq \textrm{true}()$ and $v'' \neq \textrm{false}()$ then use $\textsc{E-If-Err}$
            with $\mathcal{E}$.
          \end{itemize}
        \item Case $\mathit{vres}'' = \mathit{exres}$ then use
          $\textsc{E-If-Exc}$ with $\mathcal{E}$.
        \end{itemize}
      \item Case $e = \keyword{switch} \; e' \; \keyword{do} \; \esequence{\mathit{cs}}$

        By induction hypothesis on $n - 1$ with $e'$ and $\sigma$ we get a
        derivation  $\begin{array}{c} \mathcal{E} \\
                       \evalexprfuel{e'}{\sigma}{\mathit{vres}''}{\sigma''}{n-1} \end{array}$.

      By case analysis on $\mathit{vres}''$:
      \begin{itemize}
      \item Case $\mathit{vres}'' = \keyword{success} \; v''$
        
        By induction hypothesis given by \Cref{lem:evalcasesprogress} on $n - 1$
        with $\esequence{\mathit{cs}}$, $v''$ and $\sigma''$ we get a derivation
        $\begin{array}{c} \mathcal{C\!S} \\ \evalcasesfuel{\mathit{\esequence{cs}}}{v''}{\sigma''}{\mathit{vres}'}{\sigma'}{n - 1} \end{array}$.

        By case analysis on $\mathit{vres}'$:
        \begin{itemize}
        \item Case $\mathit{vres}' = \keyword{success} \; v'$ then use $\textsc{E-Switch-Sucs}$ with $\mathcal{E}$
          and $\mathcal{C\!S}$.
        \item Case $\mathit{vres}' = \keyword{fail}$ then use $\textsc{E-Switch-Fail}$ with $\mathcal{E}$
          and $\mathcal{C\!S}$.
        \item Case $\mathit{vres}' = \mathit{exres}$ where $\mathit{exres} \neq
          \keyword{fail}$ then use $\textsc{E-Switch-Exc2}$ with $\mathcal{E}$
          and $\mathcal{C\!S}$.
        \end{itemize}
              \item Case $\mathit{vres}'' = \mathit{exres}$ then use
        $\textsc{E-Switch-Exc1}$ with $\mathcal{E}$.
        \end{itemize}
      \item Case $e = \mathit{st} \;\keyword{visit} \; e' \; \keyword{do} \;
        \esequence{\mathit{cs}}$:

        By induction hypothesis on $n - 1$ with $e'$ and $\sigma$ we get a
        derivation  $\begin{array}{c} \mathcal{E} \\
                       \evalexprfuel{e'}{\sigma}{\mathit{vres}''}{\sigma''}{n-1} \end{array}$.

      By case analysis on $\mathit{vres}''$:
      \begin{itemize}
      \item Case $\mathit{vres}'' = \keyword{success} \; v''$
        
        By induction hypothesis given by \Cref{lem:evalvisitprogress} on $n - 1$
        with $\esequence{\mathit{cs}}$, $v''$ and $\sigma''$ we get a derivation
        $\begin{array}{c} \mathcal{V} \\ \evalvisitfuel{\mathit{st}}{\mathit{\esequence{cs}}}{v''}{\sigma''}{\mathit{vres}'}{\sigma'}{n
        - 1} \end{array}$.

        By case analysis on $\mathit{vres}'$:
        \begin{itemize}
        \item Case $\mathit{vres}' = \keyword{success} \; v'$ then use $\textsc{E-Visit-Sucs}$ with $\mathcal{E}$
          and $\mathcal{V}$.
        \item Case $\mathit{vres}' = \keyword{fail}$ then use $\textsc{E-Visit-Fail}$ with $\mathcal{E}$
          and $\mathcal{V}$.
        \item Case $\mathit{vres}' = \mathit{exres}$ where $\mathit{exres} \neq
          \keyword{fail}$ then use $\textsc{E-Visit-Exc2}$ with $\mathcal{E}$
          and $\mathcal{V}$.
        \end{itemize}
      \item Case $\mathit{vres}'' = \mathit{exres}$ then use
        $\textsc{E-Visit-Exc1}$ with $\mathcal{E}$.
      \end{itemize}
      \item Case $e = \keyword{break}$ then use $\textsc{E-Break}$.
      \item Case $e = \keyword{continue}$ then use $\textsc{E-Continue}$.
      \item Case $e = \keyword{fail}$ then use $\textsc{E-Fail}$.
      \item Case $e = \keyword{local} \; \esequence{t \; x} \; \keyword{in} \;
        \esequence{e'} \; \keyword{end}$ then use either $\textsc{E-Block-Sucs}$
        (when it produces a successful result) or $\textsc{E-Block-Exc}$ (otherwise) with the
        derivation of the induction hypothesis given by
        \Cref{lem:evalstarprogress} on $n - 1$ with $\esequence{e'}$ and $\sigma$.
      \item Case $e = \keyword{for} \; g \; e'$:

        By induction hypothesis given by \Cref{lem:evalgenprogress} on $n - 1$
        with $g$ and $\sigma$ we get a derivation
        $\evalgenexprfuel{g}{\sigma}{\mathit{envres}}{\sigma''}{n-1}$.

        By case analysis on $\mathit{envres}$:
        \begin{itemize}
        \item Case $\mathit{envres} = \keyword{success} \; \esequence{\rho}$
          then use $\textsc{E-For-Sucs}$ with above derivation and the
          derivation from the induction hypothesis given by
          \Cref{lem:evaleachprogress} on $n - 1$ with $e'$, $\esequence{\rho}$ and $\sigma''$.
        \item Case $\mathit{envres} = \mathit{exres}$ then use
          $\textsc{E-For-Exc}$ with above derivation.
        \end{itemize}
      \item Case $e = \keyword{while} \; e_1 \; e_2$

        By induction hypothesis on $n - 1$ with $e_1$ and $\sigma$ we get a
        derivation: $\begin{array}{c} \mathcal{E} \\
                       \evalexprfuel{e_1}{\sigma}{\mathit{vres}''}{\sigma''}{n-1} \end{array}$.

        By case analysis on $\mathit{vres}''$:
        \begin{itemize}
        \item Case $\mathit{vres}'' = \keyword{success} \; v''$
          
          By case analysis on $v''$:
          \begin{itemize}
          \item Case $v'' = \textrm{false}()$ then use $\textsc{E-While-False}$
            with $\mathcal{E}$.
          \item Case $v'' = \textrm{true}()$:

            By induction hypothesis on $n - 1$ with $e_2$ and $\sigma''$ we get
            a derivation  $\begin{array}{c} \mathcal{E}' \\
                       \evalexprfuel{e_2}{\sigma''}{\mathit{vres}'''}{\sigma'''}{n-1} \end{array}$.
           
                     By case analysis on $\mathit{vres}'''$:
                     
                     \begin{itemize}
                     \item Case $\mathit{vres}''' = \keyword{success} \; v'''$
                       or $\mathit{vres}''' = \keyword{continue}$ then use
                       $\textsc{E-While-True-Sucs}$ with $\mathcal{E}$,
                       $\mathcal{E}'$ and the derivation from the induction
                       hypothesis on $n - 1$ with $\keyword{while} \; e_1 \;
                       e_2$ and $\sigma'''$.
                     \item Case $\mathit{vres}''' = \keyword{break}$ then use
                       $\textsc{E-While-True-Break}$ with $\mathcal{E}$ and $\mathcal{E'}$.
                     \item Case $\mathit{vres}''' = \mathit{exres} \in \{\keyword{throw} \; v''',
                       \keyword{return} \; v''', \keyword{fail},
                       \keyword{error}\}$  then use $\textsc{E-While-Exc2}$ with
                       $\mathcal{E}$ and $\mathcal{E}'$.
                     \end{itemize}
          \item Case $v'' \neq \textrm{true}()$ and $v'' \neq \textrm{false}()$ then use $\textsc{E-While-Err}$
          \end{itemize}
        \item Case $\mathit{vres}'' = \mathit{exres}$ then use
          $\textsc{E-While-Exc1}$ with $\mathcal{E}$.
        \end{itemize}
      \item Case $e = \keyword{solve} \; \esequence{x} \; e'$:
        
        By induction hypothesis on $n - 1$ with $e'$ and $\sigma$ we get a
        derivation $\begin{array}{c} \mathcal{E} \\ \evalexprfuel{e'}{\sigma}{\mathit{vres}''}{\sigma''}{n-1} \end{array}$.
        
        By case analysis on $\mathit{vres}''$
        \begin{itemize}
        \item Case $\mathit{vres}'' = \keyword{success} \; v''$:

          We proceed by checking whether the variable sequence $\esequence{x}$
          is both in old and new stores.
          \begin{itemize}
          \item Case $\esequence{x} \subseteq \mathrm{dom} \; \sigma \cap
            \mathrm{dom} \; \sigma''$:
            
            We then check whether any value in $\esequence{x}$ has changed:
            \begin{itemize}
            \item Case $\esequence{\sigma(x) = \sigma''(x)}$ then use
              $\textsc{E-Solve-Eq}$ with $\mathcal{E}$.
            \item Case $\sigma(x_i) \neq \sigma''(x_i)$ for some $i$ then use
              $\textsc{E-Solve-Neq}$ with $\mathcal{E}$ and the derivation from
              the induction hypothesis on $n-1$ with $\keyword{solve} \;
              \esequence{x} \; e'$ and $\sigma''$.
            \end{itemize}
          \item Case $x_i \notin \mathrm{dom} \; \sigma \cap \mathrm{dom} \;
            \sigma''$ for some $i$ then use $\textsc{E-Solve-Err}$ with $\mathcal{E}$.
          \end{itemize}
        \item Case $\mathit{vres}'' = \mathit{exres}$ then use
          $\textsc{E-Solve-Exc}$ with $\mathcal{E}$.
        \end{itemize}
      \item Case $e = \keyword{throw} \;  e'$:
        
        By induction hypothesis on $n - 1$ with $e'$ and $\sigma$ we get a
        derivation $\evalexprfuel{e'}{\sigma}{\mathit{vres}'}{\sigma'}{n - 1}$.

        \begin{itemize}
        \item Case $\mathit{vres}' = \keyword{success} \; v$ then use
          $\textsc{E-Thr-Sucs}$ with above derivation.
        \item Case  $\mathit{vres}' = \mathit{exres}$ then use
          $\textsc{E-Thr-Exc}$ with above derivation.
        \end{itemize}
      \item Case $e = \keyword{try} \;  e_1 \; \keyword{catch} \; x => e_2$:

        By induction hypothesis on $n - 1$ with $e_1$ and $\sigma$ we get a
        derivation $\begin{array}{c} \mathcal{E} \\
                      \evalexprfuel{e_1}{\sigma}{\mathit{vres}_1}{\sigma''}{n - 1} \end{array}$.

        By case analysis on $\mathit{vres}_1$:
        \begin{itemize}
        \item Case $\mathit{vres}_1 = \keyword{throw} \; v_1$ then use
          $\textsc{E-Try-Catch}$ with $\mathcal{E}$ and the derivation from the induction hypothesis
          on $n - 1$ with $e_2$ and $\sigma''[x \mapsto v_1]$.
        \item Case $\mathit{vres}_1 \neq \keyword{throw} \; v_1$ then use
          $\textsc{E-Try-Ord}$ with $\mathcal{E}$.
        \end{itemize}
      \item Case $e = \keyword{try} \;  e_1 \; \keyword{finally} \; e_2$
        
        By induction hypothesis on $n - 1$ with $e_1$ and $\sigma$ we get a
        derivation $\begin{array}{c} \mathcal{E} \\
                      \evalexprfuel{e_1}{\sigma}{\mathit{vres}_1}{\sigma''}{n - 1} \end{array}$.

        By induction hypothesis on $n - 1$ with $e_2$ and $\sigma''$ we get a
        derivation $\begin{array}{c} \mathcal{E}' \\
                      \evalexprfuel{e_2}{\sigma''}{\mathit{vres}_2}{\sigma'}{n - 1} \end{array}$.

        By case analysis on $\mathit{vres}_2$:
        \begin{itemize}
        \item Case $\mathit{vres}_2 = \keyword{success} \; v_2$ then use
          $\textsc{E-Fin-Sucs}$ with $\mathcal{E}$ and $\mathcal{E}'$
        \item Case $\mathit{vres}_2 = \mathit{exres}$ then use
          $\textsc{E-Fin-Exc}$ with $\mathcal{E}$ and $\mathcal{E}'$.
        \end{itemize}
      \end{itemize}
    \end{itemize}
  \end{proof}
  \termination*
  \begin{proof}
    The proof proceeds similarly to \Cref{thm:partialprogress}, except that instead of
    doing the induction on $n$---which we know need to provide---we do the
    induction on the relevant syntactic element starting with the
    $e_{\mathrm{fin}}$ for this theorem. The only major complication is that for
    the $\keyword{bottom-up}$ visit rules, we need to do an inner well-founded
    induction on the $\prec$ relation on values when traversing the children in order to terminate.

    The result $n$ is simply taking to be
    $n' = 1 + n$ where $n$ is the maximal fuel used in a sub-term.
  \end{proof}
\end{document}